\newtheorem{assumption}[theorem]{Assumption}
\newenvironment{sketch}{

\proof}{\endproof}
\newcommand{\ie}{\textit{i.e.,~}}
\newcommand{\etal}{\textit{et~al.~}}
\let\oldnl\nl
\newcommand{\nonl}{\renewcommand{\nl}{\let\nl\oldnl}}
\newcommand{\Call}{\mathcal{C}_{all}}
\newcommand{\EHT}{\mathrm{EHT}}
\newcommand{\ECT}{\mathrm{ECT}}
\newcommand{\PIDS}{{\mathcal{P}_{\mathrm{ID2}}}}
\newcommand{\PRDS}{{\mathcal{P}_{\mathrm{RD2}}}}
\newcommand{\deterministic}{deterministic}
\newcommand{\Deterministic}{Deterministic}
\newcommand{\clr}{\mathtt{color}}
\newcommand{\NC}{\mathcal{NC}}
\newcommand{\PNC}{{\mathcal{P}_{\mathrm{NC}}}}
\newcommand{\PLRU}{{\mathcal{P}_{\mathrm{LRU}}}}
\newcommand{\PDLRU}{{\mathcal{P'}_{\mathrm{LRU}}}}
\newcommand{\khopin}{{8N^3\Delta^2}}
\newcommand{\khop}{{\lceil\log{\khopin}\rceil}}
\newcommand{\randsize}{{\Delta}}
\newcommand{\nc}{\mathtt{normalcolor}}
\newcommand{\rand}{\mathtt{rand}}
\newcommand{\hop}{\mathtt{hopcolor}}
\newcommand{\prev}{\mathtt{prev}}
\newcommand{\stamp}{\mathtt{stamp}}
\newcommand{\idx}{\mathtt{idx}}
\newcommand{\cur}{\mathtt{cursor}}
\newcommand{\gencur}{\mathtt{gencur}}
\newcommand{\gen}{\mathtt{gen}}
\newcommand{\Ncc}{{\mathcal{N}_{\mathrm{cc}}}}
\newcommand{\Scol}{{\mathcal{S}_{\mathrm{color}}}}
\newcommand{\genrandom}[3]{generate #1 $#2 \in #3$ uniformly at random}
\newcommand{\pcol}{\mathtt{pcol}}
\newcommand{\LF}{\mathtt{LF}}
\newcommand{\LL}{\mathtt{L}}
\newcommand{\FF}{\mathtt{F}}
\newcommand{\iid}{\mathtt{id}}
\newcommand{\timer}{\mathtt{timer}}
\newcommand{\KL}{\mathtt{KL}}
\newcommand{\tbc}{\mathtt{t_{\mathrm{BC}}}}
\newcommand{\Baby}{\mathtt{B}}
\newcommand{\rc}{\mathtt{rc}}
\newcommand{\type}{\mathtt{type}}
\newcommand{\PBC}{{\mathcal{P}_{\mathrm{BC}}}}
\newcommand{\SDcol}{{\mathcal{S'}_{\mathrm{color}}}}
\newcommand{\SLE}{{\mathcal{S}_{\mathrm{LE}}}}
\newcommand{\KLzero}{{\mathcal{KL}_{\mathrm{zero}}}}
\newcommand{\KLhalf}{{\mathcal{KL}_{\mathrm{half}}}}
\newcommand{\Bexists}{{\mathcal{B}_{\mathrm{exists}}}}
\newcommand{\Lone}{{\mathcal{L}_{\mathrm{one}}}}
\newcommand{\Fall}{{\mathcal{F}_{\mathrm{all}}}}
\newcommand{\LFqua}{{\mathcal{LF}_{\mathrm{qua}}}}
\newcommand{\Ldupl}{{\mathcal{L}_{\mathrm{dupl}}}}
\newcommand{\Lexists}{{\mathcal{L}_{\mathrm{exists}}}}
\newcommand{\Lvzero}{{\mathcal{L}_{\mathrm{v0}}}}
\newcommand{\Lvone}{{\mathcal{L}_{\mathrm{v1}}}}
\newcommand{\Vclean}{{\mathcal{V}_{\mathrm{clean}}}}
\newcommand{\Vmake}{{\mathcal{V}_{\mathrm{make}}}}
\newcommand{\Vonly}{{\mathcal{V}_{\mathrm{only}}}}
\newcommand{\Ehalf}{{\mathcal{E}_{\mathrm{half}}}}
\newcommand{\Equa}{{\mathcal{E}_{\mathrm{qua}}}}
\newcommand{\Bno}{{\mathcal{B}_{\mathrm{no}}}}
\newcommand{\IDclear}{{\mathcal{ID}_{\mathrm{clear}}}}
\newcommand{\IDsame}{{\mathcal{ID}_{\mathrm{same}}}}
\title{Almost Time-Optimal Loosely-Stabilizing Leader Election on Arbitrary Graphs Without Identifiers in Population Protocols} 
\titlerunning{Almost Time-Optimal Loosely-Stabilizing Leader Election on Graphs Without ID} 
\author{Haruki Kanaya}{Nara Institute of Science and Technology, Japan}{kanaya.haruki.kk3@naist.ac.jp}{https://orcid.org/0009-0003-8102-0280}{}
\author{Ryota Eguchi}{Nara Institute of Science and Technology, Japan}{ry.eguchi@is.naist.jp}{https://orcid.org/0000-0002-4836-2903}{}
\author{Taisho Sasada}{Nara Institute of Science and Technology, Japan}{taisho.sasada@naist.ac.jp}{https://orcid.org/0000-0003-2144-4949}{}
\author{Michiko Inoue}{Nara Institute of Science and Technology, Japan}{kounoe@is.naist.jp}{https://orcid.org/0000-0002-9837-5147}{}
\authorrunning{H. Kanaya, R. Eguchi, T. Sasada, and M. Inoue}
\keywords{Population protocols, Leader election, Loose-stabilization, Self-stabilization} 
\begin{document}

\maketitle

\begin{abstract}
The population protocol model is a computational model for passive mobile agents. 
We address the leader election problem, which determines a unique leader on arbitrary communication graphs starting from any configuration.
Unfortunately, self-stabilizing leader election is impossible to be solved without knowing the exact number of agents; thus, we consider loosely-stabilizing leader election, which 
converges to safe configurations in a relatively short time, and holds the specification (maintains a unique leader)
for a relatively long time. 
When agents have unique identifiers, Sudo~\etal(2019) proposed a protocol that, given an upper bound \(N\) for the number of agents \(n\), converges in \(O(mN\log n)\) expected steps, where \(m\) is the number of edges. 
When unique identifiers are not required,
they also proposed a protocol that, using random numbers and given $N$, converges in $O(mN^2\log{N})$ expected steps.
Both protocols have a holding time of $\Omega(e^{2N})$ expected steps and use $O(\log{N})$ bits of memory.
They also showed that the lower bound of the convergence time is \(\Omega(mN)\) expected steps for protocols with a holding time of \(\Omega(e^N)\) expected steps given \(N\).

In this paper, we propose protocols that do not require unique identifiers. These protocols achieve convergence times close to the lower bound with increasing memory usage.
Specifically, given \(N\) and an upper bound \(\Delta\) for the maximum degree, we propose two protocols whose convergence times are \(O(mN\log n)\) and \(O(mN\log N)\) both in expectation and with high probability. The former protocol uses random numbers, while the latter does not require them. Both protocols utilize \(O(\Delta \log N)\) bits of memory and hold the specification for \(\Omega(e^{2N})\) expected steps.
\end{abstract}

\section{Introduction}
The population protocol model, introduced by Angluin~\etal\cite{AngluinADFP2006}, is a computational model widely recognized in distributed computing and applicable to passive mobile sensor networks, chemical reaction systems, and molecular calculations, etc. 
This model comprises \(n\) finite state machines (called \emph{agents}), which form a network (called a \emph{population}). 
Agents' states are updated through communication (called \emph{interaction}) among a pair of agents. 
A simple connected digraph \(G=(V,E)\) (called a \emph{communication graph}) determines the possibility of interaction among the agents. 
In this model, only one pair of agents interacts at each step. 
The interactions are determined by a uniform random scheduler.

The leader election problem is one of the most studied problems in population protocols. 
This problem involves agents electing a unique \emph{leader} agent from the population and maintaining this unique leader forever. 
Angluin~\etal\cite{AngluinADFP2006} first studied this problem for complete graphs with designated common initial state.
Under this assumption, many studies have been conducted
~\cite{AngluinADFP2006,BGK20,GSU19,sudotime}, and a time and space optimal protocol~\cite{BGK20} has already been proposed. 
Several studies also exist for arbitrary graphs~\cite{FastGraphical,NearArbGraph}, and a time-optimal protocol~\cite{NearArbGraph} has already been proposed.

The self-stabilizing leader election problem requires that agents start from any configuration, elect and externally maintain a unique leader agent. 
It is known that there is no self-stabilizing leader election protocol for arbitrary graphs~\cite{SSAngluin} and complete graphs~\cite{Cai2012}
, and researchers have explored the problem in three  ways. 
The first approach involves assuming that all agents initially know the exact number $n$ of agents~\cite{SSTimeOp,Cai2012,SudoGlobal}. 
The second approach introduces an oracle that informs agents about the existence of leaders~\cite{BeauquierOracle,canepa,FischerOracle}. 
The third approach relaxes the requirement of maintaining a unique leader forever, introducing a \emph{loosely-stabilizing} leader election problem, where agents start from any configuration, elect a unique leader within a short time, and maintain this leader for a long time.
Sudo~\etal\cite{SUDOLSLE} first addressed  this problem on complete graphs. 
Subsequent studies have continued to explore this problem~\cite{OnSpaceIzumi,SudoLSLETime,SudoArb1st,SUDOArb3rd,SameSpeedTimer,SUDOpoly}
as follows and summarized in Table~\ref{table:liststudies}.

Sudo, Ooshita, Kakugawa, and Masuzawa~\cite{SudoArb1st} first addressed this problem for arbitrary graphs, and it is significantly improved by Sudo, Ooshita, Kakugawa, Masuzawa, Datta, and Lawrence~\cite{SameSpeedTimer} introducing a novel concept of Same Speed Timer. 
They proposed two protocols.
The first protocol, \(\PIDS\)
, assumes that agents have unique identifiers and are given \(N\) as initial knowledge. 
\(\PIDS\) converges within \(O(mN\log{n})\) expected steps and holds the unique leader with \(\Omega(Ne^{2N})\) expected steps, using \(O(\log{N})\) bits of memory. 
The second protocol, \(\PRDS\), assumes that agents can make randomized transitions and is given \(N\) as initial knowledge. 
\(\PRDS\) converges within \(O(mN^2\log{N})\) expected steps and holds the unique leader with \(\Omega(Ne^{2N})\) expected steps using \(O(\log{N})\) bits of memory.
Sudo~\etal also demonstrated that the lower bound of the convergence time is $\Omega(mN)$ steps for any loosely-stabilizing protocols with holding a unique leader $\Omega(e^N)$ expected steps. 

Loosely-stabilizing leader election protocols without requiring unique identifiers or random numbers were proposed~\cite{SUDOArb3rd} and then improved~\cite{SudoGlobal}. 
The protocol, \(\mathcal{P}_{\mathrm{AR}}\), given \(N\) and \(\Delta\) as initial knowledge, converges within \(O(mnD\log{n}+mN\Delta^2\log{N})\) expected steps and holds with \(\Omega(Ne^{N})\) expected steps using \(O(\log{N})\) bits of memory~\cite{SudoGlobal}.

\begin{table}[!tb]
    \caption{
    Convergence and Holding Times for Loosely-Stabilizing Leader Election Protocols with Exponential Holding Times.
    $n$ denotes the number of agents, $N$ denotes the upper bound of $n$, $m$ denotes the number of edges of the communication graph, $D$ denotes the diameter of the communication graph, and $\Delta$ denotes the upper bound of the maximum degree of the communication graph.
    All protocols are given $N$ as initial knowledge. 
    Protocols with $\ast$ are also given $\Delta$.
    The symbol $\dagger$ represents lower bounds of convergence time or memory usage for protocols with holding time of $\Omega(e^N)$.
    }
    \label{table:liststudies}
    \centering
    \begin{tabular}{ccccccc}
        \hline
         & Graph & Convergence & Holding & Memory & Requisite\\
        \hline
        \cite{SUDOLSLE} & complete & $O(nN\log{n})$ & $\Omega(Ne^N)$ & $O(\log{N})$  & -\\
        \cite{OnSpaceIzumi} & complete & $O(nN)$ & $\Omega(e^N)$ & $O(\log{N})$  & - \\
        \cite{OnSpaceIzumi}$\dagger$ & complete & $\Omega(nN)$ & $\Omega(e^N)$ & - & -\\
        \cite{OnSpaceIzumi}$\dagger$ & complete & - & $\Omega(e^N)$ & $\Omega(\log{N})$ & -\\
        \hline
        \cite{SudoArb1st}$\ast$ & arbitrary & $O(m\Delta N\log{n})$ & $\Omega(Ne^N)$  & $O(\log{N})$  & agent identifiers\\
        \cite{SudoArb1st}$\ast$ & arbitrary & $O(m\Delta^2 N^3\log{N})$ & $\Omega(Ne^N)$  & $O(\log{N})$ & random numbers\\
        \cite{SameSpeedTimer} & arbitrary & $O(mN\log{n})$ & $\Omega(Ne^{2N})$  & $O(\log{N})$ & agent identifiers\\
        \cite{SameSpeedTimer} & arbitrary & $O(mN^2\log{N})$ & $\Omega(Ne^{2N})$  & $O(\log{N})$ & random numbers\\
        \cite{SameSpeedTimer}$\dagger$ & arbitrary & $\Omega(mN)$ & $\Omega(e^N)$ & - & -\\
        \cite{SUDOArb3rd}$\ast$ & arbitrary & $O(mnD\log{n}+mN\Delta^2\log{N})$ & $\Omega(Ne^N)$ & $O(\log{N})$  & -\\
        This$\ast$ & arbitrary & $O(mN\log{n})$ & $\Omega(Ne^{2N})$ & $O(\Delta\log{N})$ & random numbers\\
        This$\ast$ & arbitrary & $O(mN\log{N})$ & $\Omega(Ne^{2N})$ & $O(\Delta\log{N})$ & -\\
        \hline
    \end{tabular}
\end{table}

\subsection{Our Contribution}
In this paper, we propose a protocol \(\PBC\) whose convergence time is nearly optimal on anonymous (without unique identifiers) arbitrary graphs, as supported by the lower bound~\cite{SameSpeedTimer}. 
The lower bound for the complete graph~\cite{OnSpaceIzumi} is \(\Omega(nN)\), but this is a special case, as the lower bound~\cite{SameSpeedTimer} is known to hold even when \(m = \Theta(n^2)\).
Given  
\(N\) and \(\Delta\), \(\PBC\) converges within \(O(mN\log n)\) steps if the transition is randomized, and \(O(mN\log N)\) steps if the transition is \deterministic\footnote{
Transition is said to be \textit{deterministic} if it does not require random numbers in the transition. Though the transition is deterministic, we allow the protocol to exploit the randomness with which initiator and responder roles are chosen. 
}, both in expectations and with high probability. 
The protocol holds the unique leader with \(\Omega(Ne^{2N})\) expected steps and utilizes \(O(\Delta\log N)\) bits of memory.
The proposed $\PBC$ has better convergence time than SOTA self-stabilizing leader election protocol~\cite{SudoGlobal} which converges with $O(mn^2D\log{n})$ steps with requiring the knowledge of $n$.

To achieve the convergence time of \(\PBC\), we utilize the Same Speed Timer proposed in \(\PRDS\)~\cite{SameSpeedTimer}, which requires two-hop coloring.
The \emph{self-stabilizing two-hop coloring} protocol was first studied by Angluin~\etal\cite{SSAngluin}, and further explored by Sudo~\etal\cite{SameSpeedTimer} (see Table~\ref{table:2hop}). 
In this paper, we propose two new self-stabilizing two-hop coloring protocols; \(\PLRU\) with randomized transitions, and \(\PDLRU\) with \deterministic~transitions. 
Both protocols require \(N\) and \(\Delta\) as initial knowledge. 
\(\PLRU\) converges within \(O(mn)\) steps, both in expectation and with high probability, and uses \(O(\Delta\log{N})\) bits of memory. 
\(\PDLRU\) converges within \(O(m(n+\Delta\log{N}))\) steps, both in expectation and with high probability, and also uses \(O(\Delta\log{N})\) bits of memory.
In \(\PDLRU\), agents generate random numbers independently from interactions among themselves.
To ensure the independence among random numbers, we employ the \emph{self-stabilizing normal coloring} protocol \(\PNC\) to assign superiority or inferiority between adjacent agents. 
When interacting, only the superior agent uses the interaction to generate random numbers. \(\PNC\) converges within \(O(mn\log{n})\) steps, both in expectation and with high probability, and utilizes \(O(\log{N})\) bits of memory.

\begin{table}[!hbt]
    \caption{List of Convergence Times for Self-Stabilizing Two-Hop Coloring Protocols on Arbitrary Graphs. $n$ denotes the number of agents, $N$ denotes the upper bound of $n$, $m$ denotes the number of edges of the communication graph, $\delta$ denotes the maximum degree of the communication graph, and $\Delta$ denotes the upper bound of $\delta$.}
    \label{table:2hop}
    \centering
    \begin{tabular}{ccccc}
        \hline
         & Convergence & Memory & Knowledge & Requisite\\
        \hline
        Angluin~\etal\cite{SSAngluin} & - & $O(\Delta^2)$ & $\Delta$ & random numbers\\
        Angluin~\etal\cite{SSAngluin} & - & $O(\Delta^2)$ & $\Delta$ & -\\
        Sudo~\etal\cite{SameSpeedTimer} & $O(mn\delta\log{n})$ & $O(\log{N})$ & $N$ & random numbers\\
        $\PLRU$ (this) & $O(mn)$ & $O(\Delta\log{N})$ & $N,\Delta$ & random numbers\\
        $\PDLRU$ (this) & $O(mn+m\Delta\log{N})$ & $O(\Delta\log{N})$ & $N,\Delta$ & -\\
        \hline
    \end{tabular}
\end{table}

\section{Preliminaries}
In this paper, \(\mathbb{N}\) denotes the set of natural numbers no less than one, and \(\log{x}\) refers to \(\log_2{x}\). 
If we use the natural logarithm, we explicitly specify the base \(e\) by writing \(\log_e{x}\).

A population is represented by a simple connected digraph \(G=(V,E)\), where \(V\) (\(|V| \ge 2\)) represents a set of agents, and \(E \subseteq \{(u,v) \in V \times V \mid u \neq v\}\) represents the pairs of agents indicating potential interactions. 
An agent \(u\) can interact with an agent \(v\) if and only if \((u,v) \in E\), where \(u\) is the initiator and \(v\) is the responder.
We assume $G$ is symmetric, that is, if for every \((u,v) \in V \times V\), the preposition \((u,v) \in E \Rightarrow (v,u) \in E\) holds. 
We also denote \(n = |V|\) and \(m = |E|\).
The diameter of $G$ is denoted by $D$.
The degree of agent \(u\) is denoted by \(\delta_u = |\{v \in V \mid (u,v) \in E \vee (v,u) \in E\}|\), and the maximum degree is denoted by \(\delta = \max_{u \in V} \{\delta_u\}\). 
The upper bound \(N\) of \(n\) satisfies \(N \geq n\), and the upper bound \(\Delta\) of \(\delta\) satisfies \(\delta \leq \Delta \leq 2(N-1)\) \footnote{
Given only \(N\), \(\delta \leq 2(N-1)\) holds, thus \(\Delta \leq 2(N-1)\).
}.

A protocol \(\mathcal{P}\) is defined as a 5-tuple \((Q, Y, R, T, O)\), where \(Q\) represents the finite set of states of agents, \(Y\) represents the finite set of output symbols, \(R \subset \mathbb{N}\) represents the range of random numbers, \(T: Q \times Q \times R \rightarrow Q \times Q\) is the transition function, and \(O: Q \rightarrow Y\) is the output function. 
When an initiator \(u\), whose state is \(p \in Q\), interacts with a responder \(v\), whose state is \(q \in Q\), each agent updates their states via the transition function using their current states and a random number \(r \in R\) to \(p', q' \in Q\) such that \((p', q') = T(p, q, r)\). 
An agent whose state is \(p \in Q\) outputs \(O(p) \in Y\). 
A protocol \(\mathcal{P}\) is with \deterministic~transitions if and only if \(\forall r, \forall r' \in R, \forall p, \forall q \in Q : T(p, q, r) = T(p, q, r')\) holds.
Otherwise, a protocol \(\mathcal{P}\) is with randomized transitions.
The memory usage of protocol \(\mathcal{P}\) is defined by \(\lceil \log{|Q|} \rceil\) bits.

A \emph{configuration} \(C: V \rightarrow Q\) represents the states of all agents. 
The set of all configurations by protocol \(\mathcal{P}\) is denoted by \(\Call(\mathcal{P})\). 
A configuration \(C\) transitions to \(C'\) by an interaction \(e=(u,v)\) and a random number \(r \in R\) if and only if \((C'(u), C'(v)) = T(C(u), C(v), r)\) and \(\forall w \in V \setminus \{u, v\}: C'(w) = C(w)\) holds. 
Transitioning from a configuration \(C\) to \(C'\) by an interaction \(e\) and a random number \(r\) is denoted by \(C \xrightarrow{e, r} C'\).
A uniform random scheduler \(\Gamma = \Gamma_0, \Gamma_1, \dots\) determines which pair of agents interact at each step, where \(\Gamma_t \in E\) (for \(t \geq 0\)) is a random variable satisfying \(\forall (u,v) \in E, \forall t: \Pr(\Gamma_t = (u,v)) = 1/m\).
An infinite sequence of random numbers \(\Lambda = R_0, R_1, \dots\) represents a random number generated at each step, where \(R_t\) (for \(t \geq 0\)) is a random variable satisfying \(\forall r \in R, \forall t: \Pr(R_t = r) = 1/|R|\).
Given an initial configuration \(C_0 \in \Call(\mathcal{P})\), a uniform random scheduler \(\Gamma\), and a sequence of random numbers \(\Lambda\), the execution of protocol \(\mathcal{P}\) is denoted by \(\Xi_{\mathcal{P}}(C_0, \Gamma, \Lambda)=C_0,C_1,\dots\) where \(C_t \xrightarrow{\Gamma_t, R_t} C_{t+1}\) (for \(t \geq 0\)) holds. 
If \(\Gamma\) and \(\Lambda\) are clear from the context, we may simply write \(\Xi_{\mathcal{P}}(C_0)\).
A set of configurations $\mathcal{S}$ is safe if and only if there is no configuration $C_i\notin \mathcal{S}\ (i\in \mathbb{N})$ for any configuration $C_0\in \mathcal{S}$ and any execution $\Xi(C_0)=C_0,C_1,\dots$.
A protocol is silent if and only if there is no state changed after reached safe configurations.

For a protocol \(\mathcal{P}\) that solves a population protocol problem, the expected holding time and the expected convergence time are defined as follows. 
The specification of the problem, which is a required condition for an execution, is denoted by \(\mathscr{SC}\). 
For any configuration \(C \in \Call(\mathcal{P})\), any uniform random scheduler $\Gamma$, and any infinite sequence of random numbers $\Lambda$, the expected number of steps that an execution \(\Xi_\mathcal{P}(C,\Gamma,\Lambda)\) satisfies \(\mathscr{SC}\) is defined as the expected holding time, denoted \(\EHT_\mathcal{P}(C, \mathscr{SC})\). 
For any set of configurations \(\mathcal{S} \subseteq \Call(\mathcal{P})\), any configuration \(C \in \Call(\mathcal{P})\), any uniform random scheduler $\Gamma$, and any infinite sequence of random numbers $\Lambda$, the expected number of steps from the beginning of the execution \(\Xi_\mathcal{P}(C,\Gamma,\Lambda)\) until the configuration reaches \(\mathcal{S}\) is defined as the expected convergence time, denoted \(\ECT_\mathcal{P}(C, \mathcal{S})\).
A computation is considered to be finished with high probability if and only if the computation finishes with probability \(1 - O(n^{-c})\) for \(c \geq 1\).

The leader election problem requires that all agents output either \(L\) or \(F\), where \(L\) represents a leader and \(F\) represents a follower. 
The specification of the leader election is denoted by \(LE\). 
For an execution \(\Xi_\mathcal{P}(C_0) = C_0, C_1, \dots, C_x, \dots\), the configurations \(C_0, \dots, C_x\) satisfy \(LE\) if and only if there is an agent \(u\) such that \(\forall i \in [0, x]: O(C_i(u)) = L\), and \(\forall i \in [0, x], \forall v \in V \setminus \{u\}: O(C_i(v)) = F\) holds.

\begin{definition}[Loosely-stabilizing leader election\cite{SUDOLSLE}]
A protocol $\mathcal{P}$ is an $(\alpha,\beta)$-loosely-stabilizing leader election protocol if and only if there exists a set of configurations $\mathcal{S} \subseteq \Call(\mathcal{P})$ such that $\max_{C \in \Call(\mathcal{P})} \ECT_\mathcal{P}(C, \mathcal{S}) \leq \alpha$ and $\min_{C \in \mathcal{S}} \EHT_\mathcal{P}(C, LE) \geq \beta$ holds.
\end{definition}

A protocol $\mathcal{P}$ is a self-stabilizing protocol of a problem if and only if there exists safe configurations that any execution starting from any safe configuration satisfies the specification of the problem (called closure), and any execution starting from any configuration includes a safe configuration reaches the safe configurations (called convergence).

\begin{definition}
A protocol $\mathcal{P}$ is a self-stabilizing normal coloring protocol if and only if there exists non-negative integer $x$ such that for any configuration $C_0\in \Call(\mathcal{P})$, and the execution $\Xi_\mathcal{P}(C_0)=C_0,C_1,\dots,C_x,\dots$, the following condition holds:
$\forall i\in \mathbb{N},\forall v\in V:O(C_{x}(v))=O(C_{x+i}(v))$ and $\forall v,\forall u\in V:(u,v)\in E \Rightarrow O(C_x(u))\ne O(C_x(v))$.
\end{definition}

\begin{definition}
A protocol $\mathcal{P}$ is a self-stabilizing two-hop coloring protocols if and only if there exists non-negative integer $x$ such that for any configuration $C_0\in \Call(\mathcal{P})$, and the execution $\Xi_\mathcal{P}(C_0)=C_0,C_1,\dots,C_x,\dots$, the following condition holds:
$\forall i\in\mathbb{N},\forall v\in V:O(C_{x}(v))=O(C_{x+i}(v))$ and $\forall v,\forall u,\forall w\in V:(u,v)\in E \wedge (v,w)\in E  \Rightarrow O(C_x(u))\ne O(C_x(w))$.
\end{definition}

\section{Self-Stabilizing Two-Hop Coloring}
\label{sec:twohopmain}
In this section, we introduce a 
self-stabilizing two-hop coloring protocol with randomized transitions $\PLRU$, alongside a \deterministic~self-stabilizing two-hop coloring protocol with \deterministic~transitions $\PDLRU$.

Two distinct agents $u,v\in V$ are called two-hop located if and only if there exists $w \in V$ such that $(u,w) \in E \wedge (v,w) \in E$. 
A graph is considered two-hop colored if and only if, for any pair of agents $u$ and $v$ that are two-hop located, $u$ and $v$ are assigned distinct colors. 

The basic strategy is similar to that described by Angluin~\etal\cite{SSAngluin} and Sudo~\etal\cite{SameSpeedTimer}. 
The differences lie in the methods for generating colors and the length of the array used to record the colors of interacted agents. Angluin~\etal memorized all generated colors using an array of length $\Delta(\Delta-1)+1$, whereas Sudo~\etal recorded only the most recent color. In the protocols, the agents record the last $\Delta$ colors.

We present a general strategy for color collision detection.
When interacting two agents, they record each other's color with a common binary random stamp. If there is no color collision, when they interact again they find that they remember each other's color with the same stamp value. Assume that agents $v$ and $w$ have a color collision, that is, they are two-hop located with a common neighbor $u$ and have the same color. Consider the scenario in which interactions occur in the order of $(u,v)$, $(u,w)$, $(u,v)$, and $u$ (resp. $v$) records $v$'s (resp. $u$'s) color with a stamp $0$ and then $u$ records $w$'s color (it is also $v$'s color) with a stamp $1$. When $u$ and $v$ interact again, they notice they remember each other's color with different stamp values and detect the color collision.

In both protocols, each agent has arrays whose size are $\Delta$ to record the last $\Delta$ colors and their stamps.
Both protocols are the same except for the way to generate colors.
In $\PLRU$, agents generate colors by using the ability of generating  uniform random numbers.
In $\PDLRU$, agents generate colors by using the roles of initiator and responder.
To generate $x$-digit binary random number, each agent generates a one random bit according to its role (initiator or responder) in each interaction, and repeats it $x$ times.
To ensure the independence of random numbers, only one agent can use the interaction to generate random numbers for each interaction.
To solve this issue, we use normal coloring.
A graph is considered normal colored if and only if agents $u$ and $v$ are different colors for any pair $(u,v)\in E$.
We call agents' colors which are colored by normal coloring the \emph{normal color}.
To guarantee independency among random numbers, when two agents interact, the agent with larger normal color value can use the interaction to generate random numbers.
Though this mechanism does not give a chance to generate random numbers to agents with smaller normal color, random numbers are used when two agents detect a color collision.
In such cases, two random numbers are provided as new colors from an agent who has larger normal color value and already generated two or more numbers.

\subsection{Protocol \texorpdfstring{$\PLRU$}{P\_{LRU}}}
In this subsection, we introduce the randomized self-stabilizing two-hop coloring protocol $\PLRU$. 
Given $N$ and $\Delta$, the protocol $\PLRU$ achieves convergence within $O(mn)$ steps, both in expectation and with high probability, while requiring $O(\Delta \log N)$ bits of memory per agent.

An agent $a$ in $\PLRU$ has four variables: $a.\hop \in \{1, \dots, \khopin\}$, $a.\prev \in \{1, \dots, \\\khopin\}^\Delta$, $a.\stamp \in \{0,1\}^\Delta$, and $a.\idx \in \{0,\dots,\Delta\}$. The variable $a.\hop$ represents the two-hop color of the agent. The variable $a.\prev$ is an array that stores the last $\Delta$ colors interacted by the agent. The variable $a.\stamp$ is an array of size $\Delta$, with each entry being either $0$ or $1$, used to record the stamp associated with each color memorized. Lastly, $a.\idx$ serves as a temporary index to locate the color of the interacting agent in $a.\prev$.

\begin{algorithm}[!tbh]
    \caption{Self-Stabilizing two-hop coloring $\mathcal{P}_{LRU}$}
    \label{Protocol:ss2hoprandom}
    \SetKwInOut{OutFunc}{Outout Function $O$}
    \SetKwProg{Fn}{function}{:}{}
    \SetKwFunction{GenClr}{Generate\_Color}
    \SetKwFunction{GenBit}{Generate\_Bit}
    \SetKwFunction{GetRandomBit}{GetBit}
    \SetKw{Downto}{downto}
    \setcounter{AlgoLine}{0}
    \OutFunc{Each agent $a$ outputs $a.\hop$.}
    \nonl\when{an initiator $a_0$ interacts with a responder $a_1$}{
        \Forall{$i \in \{0,1\}$}{
            $a_i.\idx \gets 0$\;
            \For{$j \gets 1\ \KwTo\ \Delta$}{
                \If{$a_i.\prev[j] = a_{1-i}.\hop$}{
                    $a_i.\idx\gets j$\;
                    \textbf{break}\;
                }
            }
        }
        \GenClr()\;
        \Forall{$i \in \{0,1\}$}{
            \lIf{$a_i.\idx =0$}{
                $a_i.\idx \gets \Delta$
            }
            \For{$j \gets a_i.\idx-1\ \Downto\ 1 $}{
                $(a_i.\prev[j+1],a_i.\stamp[j+1])\gets (a_i.\prev[j],a_i.\stamp[j])$\;
            }
        }
        \GenBit()\;
    }
    \nonl\Fn{\GenClr{}}{
        \If{$a_0.\idx>0 \wedge a_1.\idx>0 \wedge a_0.\stamp[a_0.\idx] \ne a_1.\stamp[a_1.\idx])$}{
                \genrandom{two colors}{c_0,c_1}{\{1,...,8N^3\Delta^2\}}\;
                $(a_0.\hop,a_1.\hop)\gets (c_0,c_1)$\;
                $a_0.\idx \gets a_1.\idx \gets 0$\;
            }
    }
    \nonl\Fn{\GenBit{}}{
        \genrandom{bit}{b}{\{0,1\}}\;
        $(a_0.\prev[1],a_1.\prev[1],a_0.\stamp[1],a_1.\stamp[1]) \gets (a_1.\hop,a_0.\hop,b,b)$\;
    }
\end{algorithm}

$\PLRU$ is given by algorithm~\ref{Protocol:ss2hoprandom}, and consists of four parts:
i) reading memory, ii) collision detection, iii) saving colors, and iv) stamping.

i) Reading memory (lines 1--6) aims to find a color of the interacting partner in an array of recorded colors.
For each agent $a_i$ (where $i \in \{0,1\}$) interacting with another agent $a_{1-i}$, $a_i$ searches $a_i.\prev$ for $a_{1-i}.\hop$ and records the minimum index in $a_i.\idx$ if exists, otherwise, sets $a_i.\idx$ as $0$.

ii) Collision detection (lines 7, and 13--16) aims to generate new colors when a stamp collision is detected. 
To address this, two uniform random numbers are generated from the range $[1, \khopin]$. These numbers are then used to update $a_0.\hop$ and $a_1.\hop$ respectively.

iii) Saving colors (lines 8--11) aims to maintain arrays $\prev$ and $\stamp$ in a Least Recently Used (LRU) fashion.

iv) Stamping (lines 12, and 17--18) aims to generate a common binary stamp to two interacting agents, and to move a color and a stamp of this current interacting partner to the heads of arrays $\prev$ and $\stamp$.

We have following theorems. (See Appendix for the details, proofs, and implementations).

\begin{restatable}{theorem}{twohoprandomteiri}\label{twohoprandom:teiri}
Given the upper bound $N$ and $\Delta$, $\PLRU$ is a self-stabilizing two-hop coloring protocol with randomized transitions, and the convergence time is $O(mn)$ steps both in expectation and with high probability.
\end{restatable}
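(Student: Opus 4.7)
The plan splits into closure and convergence.

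\emph{Closure.} I would show that starting from any configuration $C$ whose $\hop$ values form a valid two-hop coloring, the stamp-mismatch condition on line 13 of Algorithm~\ref{Protocol:ss2hoprandom} never fires in any execution from $C$. The key observation is that if $a_0$ interacts with $a_1$ and $a_0.\prev[k] = a_1.\hop$, the entry must have been written during a past $a_0$--$a_1$ interaction, because any other neighbor $u$ of $a_0$ with $u.\hop = a_1.\hop$ would make $u$ and $a_1$ two-hop located through $a_0$, contradicting the coloring. Since the minimum-index search returns the most recent such entry, $a_0.\stamp[a_0.\idx]$ is the common bit set at the most recent $a_0$--$a_1$ interaction, and a symmetric argument gives $a_1.\stamp[a_1.\idx]$ the same value (LRU updates only insert and shift, never altering stamps in place). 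Hence color regeneration is never invoked and the $\hop$ values are frozen.

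\emph{Convergence.} I would prove a detection lemma and a color-quality lemma, then conclude via a potential argument. The detection lemma states that whenever two-hop located agents $u, v$ with $u.\hop = v.\hop$ exist (necessarily sharing some common neighbor $w$), within $O(m)$ expected steps some invocation of color regeneration fires. The combinatorial core is the interaction trace $(w,u), (w,v), (w,u)$ highlighted in Section~\ref{sec:twohopmain}: after the first two interactions $w.\prev[1]$ equals the common color with a fresh stamp $s_2$, independent of the stamp $s_1$ stored in $u.\prev$ for $w.\hop$, so at the third interaction line 13's condition holds with probability at least $1/2$. Bounding the expected waiting time for this trace---while ensuring that $u.\prev$'s entry for $w.\hop$ is not evicted in the meantime (the LRU capacity $\Delta$, together with the fact that $u$ has at most $\Delta$ distinct neighbors, caps the number of distinct entries that can displace it)---uses a coupon-collector-style calculation over the uniform scheduler.

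The quality lemma bounds the probability that a single regeneration outputs colors that collide with existing colors in the two-hop neighborhoods of $a_0$ and $a_1$. Each such neighborhood contains at most $O(\Delta^2)$ agents while the color range is $8N^3\Delta^2$, giving a bad-pair probability of $O(\Delta^2/N^3)$. A union bound over the at most $O(mn)$ interactions in a time window of length $O(mn)$ shows that, with probability $1 - O(1/n)$, every regeneration produces colors fresh in both two-hop neighborhoods. I would then define the potential $\Phi$ as the number of agents $a$ sharing $a.\hop$ with some other agent in $a$'s two-hop neighborhood; initially $\Phi \le n$. Conditioned on the quality lemma, each regeneration event strictly decreases $\Phi$ by at least one: the stamp mismatch triggering it implies that, say, $a_0.\prev[\idx_0] = a_1.\hop$ was recorded by an interaction $a_0$--$u$ with $u.\hop = a_1.\hop$, so the bad pair $(u,a_1)$ witnessed $a_1 \in \Phi$, and $a_1$'s new fresh color removes it from $\Phi$. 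Each decrease costs $O(m)$ expected steps and there are at most $n$ of them, giving total expected time $O(mn)$; a Chernoff-type concentration bound on a sum of $O(n)$ geometric waiting times (each with mean $O(m)$) yields the same $O(mn)$ bound with high probability.

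\emph{Main obstacle.} The detection lemma will be the most delicate step. Between the first and third interactions of the trace $(w,u), (w,v), (w,u)$, up to $\Delta$ unrelated interactions at $u$ may shift $w.\hop$ down $u.\prev$ and possibly evict it, and interactions at $w$ may shuffle $w.\prev$ in ways that replace the relevant entry. A careful case analysis on the interaction order, leveraging that $u$ and $w$ each have at most $\Delta$ distinct neighbors so that at most $\Delta$ genuinely new entries can enter their LRU arrays per ``cycle,'' is needed to control the eviction probability and conclude detection within $O(m)$ expected steps rather than a much larger quantity.
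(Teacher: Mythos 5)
Your overall architecture---closure from the observation that under a valid coloring a minimum-index match in $\prev$ can only come from the genuine partner, convergence via the $(w,u),(w,v),(w,u)$ detection trace, freshness of regenerated colors drawn from $\{1,\dots,8N^3\Delta^2\}$, and a potential counting collided agents---has the right shape and matches the strategy sketched in Section~\ref{sec:twohopmain}. But there is a genuine gap running through both halves, stemming from one oversight: in a self-stabilizing setting the arrays $\prev$ and $\stamp$ are also adversarially initialized. Your safe set, ``any configuration whose $\hop$ values form a valid two-hop coloring,'' is not closed: garbage entries can make $a_0.\prev$ contain $a_1.\hop$ and $a_1.\prev$ contain $a_0.\hop$ with mismatching stamps even though neither entry was ever written by the protocol, so the condition on line~13 of Algorithm~\ref{Protocol:ss2hoprandom} fires spuriously and the valid coloring is overwritten. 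Your closure argument silently assumes every matching entry ``must have been written during a past $a_0$--$a_1$ interaction,'' which is exactly what an adversarial initial configuration denies you. Symmetrically, your convergence argument terminates at $\Phi=0$, i.e.\ at a valid coloring, which is not yet a safe configuration: you still must show that residual garbage causes only a bounded number of further spurious regenerations (e.g.\ at most one per adjacent pair, consumed at that pair's first interaction, after which the minimum-index entries of the two agents carry a common stamp and dominate any garbage duplicates at higher indices), that these extra regenerations do not reintroduce collisions, and that they fit inside the $O(mn)$ budget. None of this appears in your proposal, and it changes both the definition of the safe set and the endpoint of the convergence analysis.

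Two further points. First, you explicitly leave the detection lemma---the step that actually delivers $O(m)$ expected steps per collision and hence the $O(mn)$ total---as an unresolved ``main obstacle''; since the theorem is precisely a time bound, the eviction/disruption analysis for the trace is not a detail that can be deferred, and it is entangled with the garbage issue above (while other regenerations are still occurring, an agent can see more than $\Delta$ distinct colors over time, which is the one mechanism that can evict the entry your trace relies on). Second, the union bound in your quality lemma is quantitatively too weak: union-bounding over ``$O(mn)$ interactions'' gives a failure probability of order $mn\cdot\Delta^2/(8N^3\Delta^2)\le n^3/(8N^3)$, which is only a constant, not $O(1/n)$. You need to first bound the number of regeneration events (roughly $O(n)$ genuine ones plus $O(m)$ spurious ones) and union-bound over those; since that count itself presupposes that regenerated colors are fresh, avoiding circularity requires an explicit induction over regeneration events or a stopping-time argument.
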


\begin{restatable}{theorem}{twohopteiri}  
Given the upper bound $N$ and $\Delta$, $\PDLRU$ is a self-stabilizing two-hop coloring protocol with \deterministic~transitions, and converges to safe configurations within $O(m(n+\Delta\log{N}))$ steps both in expectation and with high probability.  
\end{restatable}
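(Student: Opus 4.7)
The plan is to reduce the analysis of $\PDLRU$ to the one already performed for $\PLRU$ in Theorem~\ref{twohoprandom:teiri}, treating the deterministic bit-generation mechanism as a faithful emulator of the uniform random oracle invoked by $\PLRU$. The only two places $\PLRU$ consumes randomness are (i) the single stamp bit produced at the end of every interaction, and (ii) the two fresh colors drawn from $\{1,\dots,8N^3\Delta^2\}$ when a stamp mismatch is detected. In $\PDLRU$ both are synthesized from bits that each agent harvests, one per interaction, from those interactions in which it is the \emph{superior} endpoint with respect to the normal coloring maintained by $\PNC$.

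First I would formalize the emulation. Fix an edge $(u,v)$. Once $\PNC$ has assigned distinct normal colors to $u$ and $v$, every interaction on $(u,v)$ deterministically designates the same superior endpoint, and conditional on the scheduler choosing $(u,v)$ the initiator/responder orientation is a fair, independent coin. Hence the sequence of bits harvested by the superior endpoint is an i.i.d.\ Bernoulli$(1/2)$ stream, and grouping $\lceil\log(8N^3\Delta^2)\rceil = \Theta(\log N)$ bits at a time yields samples from $\{1,\dots,8N^3\Delta^2\}$ distributionally identical to those consumed by $\PLRU$. With this coupling in hand, every call to the color-generation and stamping routines in $\PDLRU$ is indistinguishable from the corresponding $\PLRU$ call, so the $O(mn)$ bound of Theorem~\ref{twohoprandom:teiri} transfers.

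Next I would bound the time until every superior endpoint has accumulated enough stored bits to serve the emulator. Each agent needs only $O(\Delta \log N)$ bits on hand at any moment to supply fresh colors for the at most $\Delta$ entries it tracks in its $\prev$ and $\stamp$ arrays before they cycle. A Chernoff bound on the scheduler shows that within $O(m\Delta\log N)$ steps every edge has been scheduled $\Omega(\Delta\log N)$ times, both in expectation and with high probability, and hence every agent has at least $\Omega(\Delta\log N)$ harvested bits on each incident edge on which it is superior. Composing this with the $O(mn)$ cost of the coupled $\PLRU$ execution yields the target $O(m(n+\Delta\log N))$.

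The delicate step will be the transient phase of $\PNC$: while normal coloring is still converging, an edge can connect two agents sharing the same normal color, the superior endpoint is undefined there, and bits harvested on such an edge are not truly independent of the protocol state. I plan to handle this by treating contaminated interactions as adversarial and showing that the two-hop-coloring invariants of $\PLRU$ are robust to a bounded burn-in: once the normal coloring locally stabilizes on the edges incident to an agent $a$, the LRU eviction refreshes $a.\prev$ and $a.\stamp$ within at most $\Delta$ further interactions on $a$, erasing any residual adversarial influence. The key technical hurdle is to argue edge-by-edge that each edge becomes ``safe'' for $\PDLRU$ after only $O(m)$ expected interactions on itself, giving an aggregate additive $O(mn)$ overhead that is absorbed into the stated bound rather than pulling in the global $O(mn\log n)$ convergence time of $\PNC$; this will require revisiting the $\PNC$ analysis to extract a local, edge-wise stabilization lemma distinct from its global statement.
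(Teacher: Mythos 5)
Your high-level reduction --- couple $\PDLRU$ to $\PLRU$ by showing the harvested initiator/responder bits emulate the random oracle, then add the time needed to accumulate enough bits --- is the natural route and is consistent with the paper's design (the two protocols differ only in how colors are generated, and the $m\Delta\log N$ term clearly comes from bit accumulation). The emulation itself is essentially sound: conditional on the undirected edge, the orientation is a fair coin independent of everything else, and because exactly one endpoint consumes each coin and the orientation affects nothing but the harvested bit, a deferred-decisions argument makes the assembled numbers i.i.d.\ uniform (over the power-of-two range $\{0,\dots,2^{\lceil\log 8N^3\Delta^2\rceil}-1\}$ that the paper actually uses for $\PDLRU$, not $\{1,\dots,8N^3\Delta^2\}$ --- a cosmetic fix).

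The genuine gap is the one you flag and then defer: the interaction with $\PNC$. The paper states that $\PNC$ converges in $O(mn\log n)$ steps, which already exceeds the claimed $O(m(n+\Delta\log N))$ budget whenever $\Delta\log N = o(n\log n)$, so the proof cannot simply wait for normal coloring to stabilize globally before the emulation becomes valid. Your proposed repair --- an edge-local stabilization lemma for $\PNC$ giving $O(m)$ expected interactions per edge, plus an argument that pre-stabilization ``contaminated'' bits do no lasting harm --- is precisely the missing ingredient, and you have not supplied it; as written, the argument establishes at best $O(mn\log n + m\Delta\log N)$. Moreover, your claim that LRU eviction erases residual adversarial influence within $\Delta$ further interactions is not correct as stated: the arrays hold the last $\Delta$ \emph{distinct} colors with move-to-front semantics, so a corrupted $(\prev,\stamp)$ entry can persist indefinitely if that color keeps being re-encountered, and the real hazard is not stale entries but spurious or missed stamp mismatches, which sit at the heart of the $\PLRU$ analysis you are trying to import as a black box. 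To close the proof you must either prove the local $\PNC$ lemma and show the coupling becomes exact (including array contents, not just the randomness source) after the burn-in, or restructure the argument so that correctness of two-hop collision detection never relies on independence of bits harvested before normal coloring has locally stabilized.
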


\section{Loosely-Stabilizing Leader Election}  
In this section, we propose a loosely-stabilizing leader election protocol $\PBC$.  
$\PBC$ uses a self-stabilizing two-hop coloring protocol. Thus, if it uses $\PLRU$, $\PBC$ is with randomized transitions. If it uses $\PDLRU$, $\PBC$ is with \deterministic~transitions.  
In both cases, $\PBC$ holds a unique leader with $\Omega(Ne^{2N})$ expected steps and uses $O(\Delta\log{N})$ bits of memory.
Note that $\PBC$ (with randomized transitions) always generates random numbers deterministically like in $\PDLRU$ outside of two-hop coloring 
since it does not affect a whole complexity.

The basic strategy of leader election is as follows:  
i) All agents become followers.  
ii) Some candidates of a leader emerge, and the number of candidates becomes $1$ with high probability. 
iii) If there are multiple leaders, return to i).
Each agent $a$ has $6$ variables and $4$ timers:
$a.\LF \in \{\Baby,\LL_0,\LL_1,\FF\}$, 
$a.\type \in \{1,...,2^{\lceil\log{N}\rceil+1}-1\}$, 
$a.\iid \in \{1,...,2^{\lceil\log{N^2}\rceil+1}-1\}$,  
$a.\clr$,  
$a.\pcol$,  
$a.\rc$ $\in \{0,1\}$,      
$a.\timer_\LF $ $ \in [0,2\tbc]$,  
$a.\timer_\KL \in [0,\tbc]$,  
$a.\timer_\mathrm{V} \in [0,2\tbc]$, and  
$a.\timer_\mathrm{E} \in [0,2\tbc]$.  
Here, $\tbc$ is a sufficiently large value for $\PBC$ to work correctly. 
A status of an agent $a$ is represented by $a.\LF$ where $\Baby$ (leader candidate), $\LL_0$ (leader mode), and $\LL_1$ (duplication check mode) represent leaders and $\FF$ represents a follower.
The variable $a.\type$ is used for detecting multiple leaders.
The variable $a.\iid$ represents the identifier of leaders.

In $\PBC$, agents mainly use the broadcast (also called the epidemic and the propagation) to inform others of something.  
In a broadcast mechanism, information from one agent is repeatedly copied (with modification if needed) to agents when two agents interact.
In order to detect the end of operations (including broadcasts) for all agents, the agents uses timers. 
The timers decrease by Larger Time Propagation and Same Speed Timer. 
Using Larger Time Propagation and Same Speed Timer, all timer values decrease gradually, almost synchronously. 
Larger Time Propagation means that when an agent $u$ interacts with an agent $v$, $u.\timer$ (resp. $v.\timer$) is set to $\max(u.\timer, v.\timer-1)$ (resp. $\max(v.\timer, u.\timer-1)$). 
A variable $a.\rc$ is used to implement Same Speed Timer and represents whether $a$ can decrease its own timers or not in the current interaction.
Same Speed Timer decrease a timer value by $1$ when an agent interacts the same agent continuously.
For Same Speed Timer, after reaching $\Scol$, the agents use the colors to determine whether the current partner is the last partner or not.
A read-only variable $a.\clr$ represents a color determined in the two-hop coloring.  
A variable $a.\pcol$ represents an agent' color that $a$ interacted previously.
The domains of $a.\clr$ and $a.\pcol$ depend on a self-stabilizing two-hop coloring protocol.  
If we use $\PLRU$, $a.\clr,a.\pcol \in \{1,\dots,8N^3\Delta^2\}$.  
If we use $\PDLRU$, $a.\clr,a.\pcol \in \{0,\dots,2^{\lceil\log{8N^3\Delta^2}\rceil}-1\}$.

The overall flow of $\PBC$ is shown in Figure~\ref{figure:PBR}.

\subsection{Outline of \texorpdfstring{$\PBC$}{P\_{BC}}}

We show the outline of $\PBC$.
We call a configuration satisfying certain conditions a phase.
$\PBC$ mainly has 3 phases: i) Global Reset, ii) Leader Generation, iii) Leader Detection.
We first explain the overview of the three phases with the roles of $4$ timers using an example flow shown in Fig.~\ref{figure:PBR}, where
the height of the timer represents the relative magnitude of its value.

i) Global Reset is the phase that resets all agents when some inconsistencies are detected.
(In Fig.~\ref{figure:PBR}(a), multiple leaders are detected.)
A configuration is in the Global Reset if there exists agents whose $\timer_\KL>0$.
When some inconsistency is detected, a Global Reset phase is started and the kill virus is created.  
The kill virus makes an agent a follower, sets the agent's identifier to $1$ (a tentative value before generating id), and also erases the search virus.  
The presence of kill virus is represented as a positive value of $\timer_\KL$, which serves as timer to live (TTL).
When the Global Reset phase begins, some agents' $\timer_\KL$ are set to the maximum value and spread to all agents
with decreasing the values as shown in Fig.~\ref{figure:PBR}(b). 
Eventually, $\timer_\KL$ will become $0$, and the Global Reset phase will be finished.
While $\timer_\KL>0$, $\timer_\LF$ takes its maximum value and $\timer_\mathrm{V}$ takes 0.

ii) Leader Generation is the phase where agents generate leaders.
In $\PBC$, a leader keeps a values of $\timer_\LF$ to its maximum value, while followers propagate the value with Larger Time Propagation and Same Speed Timer.
If there is no leader
(Fig.~\ref{figure:PBR}(c)),
values of $\timer_\LF$ 
for some followers eventually become $0$. 
Then these followers become candidates ($\Baby$)
(Fig.~\ref{figure:PBR}(d)).
Each candidate for leaders generates a random number as an identifier using interactions.  
The way to generate random numbers is described in Section~\ref{sec:twohopmain}.
The candidates broadcast their $\iid$s to all agents, 
and become a follower (\(\FF\)) if it encounters a larger $\iid$ value.
While generating an identifier ($\iid$), $\timer_\mathrm{LF}$ takes its maximum value, and it gradually decreases after generating $\iid$.
When $\timer_\mathrm{LF}$ of a candidate becomes $0$, it becomes a leader (\(\LL_0\)).
When there are no candidates for leaders, Leader Generation is finished.

iii) Leader Detection is the phase where leaders determine whether there are multiple leaders or not.
The leaders generate search viruses periodically using $\timer_\mathrm{E}$.
When $\timer_\mathrm{E}$ becomes $0$, the agent start generating a $\type$ of search virus.
If there are multiple leaders, leaders generate search viruses almost simultaneously thanks to Larger Time Propagation.  
When leaders generate search viruses, they generate random numbers using interactions to determine the type of search virus.  
The type of search virus with its TTL $\timer_\mathrm{V}$ spreads to all agents.  
When two different types of search viruses meet, agents create the kill virus and move to the Global Reset phase.
(Fig.~\ref{figure:PBR}(a)).
While, if there is a unique leader, one search virus is periodically generated and expired (Fig.~\ref{figure:PBR}(e)).

Phases circulates Global Reset, Leader Generation, and Leader Detection in this order.
If there exists a unique leader , the configuration stays in Leader Detection phase with high probability.  
Otherwise, the phase moves to Global Reset phase.
Timers $\timer_\KL$ and $\timer_\LF$ are used for Global Reset and Leader Generation phases to have enough steps.

\begin{figure}[tb]
    \centering
    \includegraphics[width=1.0\linewidth]{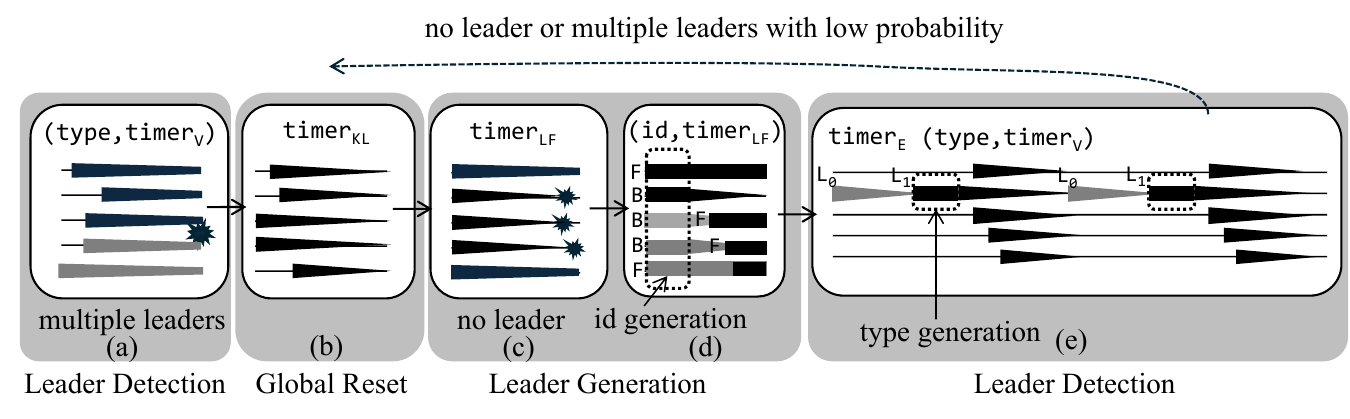}
    \caption{An example flow of $\PBC$.}
    \label{figure:PBR}
\end{figure}

\begin{algorithm}[!htb]
    \caption{Loosely-Stabilizing Leader Election Protocol $\PBC$ (1/3)}
    \label{Protocol:LSLE1}
    \SetKwInOut{OutFunc}{Output function $O$}
    \SetKwProg{Fn}{function}{:}{}
    \setcounter{AlgoLine}{0}
    \SetKwFunction{LTP}{LARGER\_TIME\_PROPAGATE}
    \SetKwFunction{REPEATCHECK}{REPEAT\_CHECK}
    \SetKwFunction{COUNTDONW}{COUNT\_DOWN}
    \SetKwFunction{LeaderGeneration}{GenerateLeader}
    \SetKwFunction{Detection}{Detect}
    \SetKwFunction{Reset}{Reset}
    \OutFunc{An agent $a$ outputs $\LL$ if $a.\LF\in \{\Baby,\LL_0,\LL_1\}$, otherwise, $\FF$.}
    \nonl\when{an initiator $a_0$ interacts with a responder $a_1$}{
        Execute self stabilizing two-hop coloring protocol\;
        $\REPEATCHECK()$\;
        \lIf{$\exists i\in\{0,1\}:a_i.\timer_\KL>0$}{
            $\Reset()$
        }
        $\LTP(\KL)$\;
        $\COUNTDONW(\KL)$\;
        $\LTP(\mathrm{E})$\;
        $\COUNTDONW(\mathrm{E})$\;
        \If{$a_0.\LF,a_1.\LF\in \{\LL_0,\LL_1,\FF\}$}{
            $\LTP(\LF)$\;
            $\COUNTDONW(\LF)$\;
        }
        \Forall{$ i\in \{0,1\}$}{
            \lIf{$a_i.\LF\in \{\LL_0,\LL_1\}$}{
                $a_i.\timer_\LF\gets \tbc$
            }
        }
        $\LeaderGeneration()$\;
        $\Detection()$\;
    }
    \nonl\Fn{\LTP{$x$}}{
        \If{$\exists i\in\{0,1\}:a_i.\timer_{x}<a_{1-i}.\timer_{x}$}{
            $a_i.\timer_{x}\gets a_{1-i}.\timer_{x}-1$\;
        }
    }
    \nonl\Fn{\REPEATCHECK{}}{
        \Forall{$i\in \{0,1\}$}{
            \lIf{$a_i.\pcol = a_{1-i}.\clr$}{$a_i.\rc\gets 1$}
            \lElse{$a_i.\rc\gets 0$}
            $a_i.\pcol\gets a_{1-i}.\clr$\;
        }
    }
    \nonl\Fn{\COUNTDONW{$x$}}{
        \Forall{$i\in\{0,1\}$}{
            \lIf{$a_i.\rc=1$}{$a_i.\timer_{x}\gets \max(0, a_i.\timer_{x}-1)$}
        }
    }
    \nonl\Fn{\Reset{}}{
        \Forall{$i\in \{0,1\}$}{
            $(a_i.\LF, a_i.\iid ,a_i.\timer_\mathrm{V})\gets (\FF, 1, 0)$\;
        }
    }
\end{algorithm}

\begin{algorithm}[!htb]
    \caption{Loosely-Stabilizing Leader Election Protocol $\PBC 
$ (2/3)}
    \label{Protocol:LSLE2}
    \nonl\Fn{\LeaderGeneration{}}{
        \If{$\exists i\in\{0,1\}:a_i.\timer_\KL>0$}{
            $a_0.\timer_\LF\gets a_1.\timer_\LF\gets \tbc$ \quad\quad\quad\quad// prevent starting GenerateLeader\;
        }
        \If{$\exists i\in \{0,1\}:a_i.\LF = \FF \wedge a_i.\timer_\LF = 0$}{
            \lIf{$a_i.\iid \ne 1$}{
                $a_0.\timer_\KL\gets a_1.\timer_\KL\gets \tbc$\quad// $\iid$ hasn't been reset
            }\lElse{
                $(a_i.\LF, a_i.\timer_\LF) \gets (\Baby, 2\tbc)$\quad// a new candidate is created
            }
        }
        \If{$\forall i\in\{0,1\}:a_i.\LF=\Baby \wedge a_i.\iid<2^{\lceil\log{N^2}\rceil}$}{
            $(a_1.\LF, a_1.\iid, a_1.\timer_\LF) \gets (\FF, 1, \tbc)$\;
        }
        \If{$\exists i\in \{0,1\}:a_i.\LF = \Baby \wedge a_i.\iid < 2^{\lceil\log{N^2}\rceil}$}{
                $(a_i.\iid, a_i.\timer_\LF)\gets (2a_i.\iid + i, 2\tbc)$\;
        }
        \If{$a_0.\LF\in \{\FF,\Baby\}\wedge a_1.\LF\in\{\FF,\Baby\} \wedge (\exists i\in\{0,1\}:((a_i.\LF=\Baby \wedge a_i.\iid > 2^{\lceil\log{N^2}\rceil})\vee a_i.\LF = \FF)\wedge a_i.\iid > a_{1-i}.\iid)$}{
                $(a_{1-i}.\LF,a_{1-i},\iid)\gets(a_i.\LF,a_i.\iid)$\;
        }
        \If{$\exists i\in\{0,1\}:a_i.\LF=\FF \wedge a_{1-i}.\LF=\Baby$}{
            $a_i.\timer_\LF\gets \tbc-1$ \quad\quad\quad\quad// consider $a_{1-i}$ as a leader\;
        }
        \Forall{$i\in \{0,1\}$}{
            \lIf{$a_i.\LF =\Baby \wedge a_i.\rc=1$}{
                $a_i.\timer_\LF \gets \max(0, a_i.\timer_\LF-1)$
            }
            \lIf{$a_i.\LF = \Baby \wedge a_i.\timer_\LF = 0$}{
                $(a_i.\LF, a_i.\timer_\LF) \gets (\LL_0, \tbc)$
            }
        }
    }
\end{algorithm}

\begin{algorithm}[!htb]
    \caption{Loosely-Stabilizing Leader Election Protocol $\PBC$ 
 (3/3)}
    \label{Protocol:LSLE3}
    \nonl\Fn{\Detection{}}{
        \If{$\exists i\in\{0,1\}:a_i.\LF=\Baby$}{
            $a_0.\timer_\mathrm{E}\gets a_1.\timer_\mathrm{E}\gets 2\tbc$ \quad\quad// prevent starting Detect\;
        }
        \Forall{$i\in \{0,1\}$}{
            \If{$a_i.\LF \in \{\LL_0,\LL_1\} \wedge a_i.\timer_\mathrm{E} = 0$}{
                $(a_i.\LF, a_i.\type, a_i.\timer_\mathrm{E})\gets (\LL_1, 1, 2\tbc)$ // start the type generation\;
            }
            \If{$a_i.\LF = \LL_1 \wedge a_i.\type < 2^{\lceil\log{N}\rceil}$}{
                $(a_i.\type, a_i.\timer_\mathrm{E}) \gets (2 a_i.\type + i, 2\tbc)$\;
                \If{$a_i.\type \ge 2^{\lceil\log{N}\rceil}$}{
                    $a_i.\timer_\mathrm{V} \gets 2\tbc$\;
                }
            }
        }
        \uIf{$\exists i\in\{0,1\}:a_i.\LF=\FF  \wedge a_{1-i}.\LF\in\{\LL_0,\LL_1\}$}{
            \uIf{$a_i.\timer_\mathrm{V}>0 \wedge (a_{1-i}.\LF=\LL_0 \vee a_{1-i}.\type<2^{\lceil\log{N}\rceil} \vee$\\ \nonl$ a_i.\type \ne a_{1-i}.\type)$}{
                $a_0.\timer_\KL\gets a_1.\timer_\KL \gets \tbc$ \quad// different types are detected\;
            }\ElseIf{$a_i.\timer_\mathrm{V}=0 \wedge a_{1-i}.\LF=\LL_1 \wedge a_{1-i}.\timer_\mathrm{V}>0$}{
                $(a_i.\type, a_i.\timer_\mathrm{V}) \gets (a_{1-i}.\type, a_{1-i}.\timer_\mathrm{V}-1)$\;
            }
        }
        \uElseIf{$a_0.\LF = \FF \wedge a_1.\LF = \FF$}{
            \uIf{$a_0.\timer_\mathrm{V} > 0 \wedge a_1.\timer_\mathrm{V} > 0 \wedge a_0.\type \ne a_1.\type$}{
                $a_0.\timer_\KL \gets a_1.\timer_\KL \gets \tbc$\quad // different types are detected\;
            }
            \ElseIf{$\exists i\in\{0,1\}:a_i.\timer_\mathrm{V} = 0 \wedge a_{1-i}.\timer_\mathrm{V}>0$}{
                $(a_i.\type, a_i.\timer_\mathrm{V})\gets (a_{1-i}.\type, a_{1-i}.\timer_\mathrm{V}-1)$\;
            }
        }
        \ElseIf{$a_0.\LF\in\{\LL_0,\LL_1\} \wedge a_1.\LF \in \{\LL_0, \LL_1\}$}{
            $a_0.\timer_\KL\gets a_1.\timer_\KL\gets \tbc$\quad // multiple leaders are detected\;
        }
        $\LTP(\mathrm{V})$\;
        $\COUNTDONW(\mathrm{V})$\;
        \Forall{$i\in \{0,1\}$}{
            \lIf{$a_i.\timer_\mathrm{V}>0$}{
               $a_i.\timer_\mathrm{E}\gets 2\tbc$// prevent restarting Detect 
            }
            \lIf{$a_i.\LF = \LL_1 \wedge a_i.\timer_\mathrm{E}<\tbc/2$}{
                $a_i.\LF \gets \LL_0$
            }
        }
    }
\end{algorithm}

\subsection{Details of \texorpdfstring{$\PBC$}{P\_{BC}}}

$\PBC$ is given by Algorithm~\ref{Protocol:LSLE1}, Algorithm~\ref{Protocol:LSLE2}, and Algorithm~\ref{Protocol:LSLE3}.
We explain the details of $\PBC$.  
$\PBC$ has five parts:  
i) Two-hop coloring, ii) Timer Count Down, iii) Reset, iv) Leader Generation, v) Leader Detection.  
The relationship between the three phases and five parts is as follows. The Global Reset phase corresponds to the Reset part, the Leader Generation phase to the Generate part, and the Leader Detection phase to the Detect part. The Timer Count Down part operates throughout all phases, while the Two-Hop Coloring part is completed before the phases begin.
Throughout this explanation, we consider when an initiator $a_0$ interacts with a responder $a_1$.

i) In line 1, the agents execute the two-hop coloring protocol $\PLRU$ or $\PDLRU$.

ii) Timer Count Down (lines 2, 4--12, and 15--22) aims to increase or decrease timers.  
First, the interacting agents determine whether the current partner is the same as the last partner in REPEAT\_CHECK (lines 2, and 17--20) to implement Same Speed Timer.  
After that, each agent saves the current partner's color to its own $\pcol$.  
Then, agents decrease timers if $a_i.\rc=1$ holds.  
$\timer_\KL$ (lines 4--5) and $\timer_\mathrm{E}$ (lines 6--7) are handled by Larger Time Propagation and Count Down.  
That is, for $i\in \{0,1\}$, $a_i.\timer_\KL$ is set to $\max(a_i.\timer_\KL, a_{1-i}.\timer_\KL-1)$, and if $a_i.\rc=1$ holds, $a_i.\timer_\KL$ is decreased by 1 (resp. $\timer_\mathrm{E}$).  
If both interacting agents are not candidates ($\Baby$), they increase or decrease $\timer_\LF$ like $\timer_\KL$ (lines 8--10).  
For $i\in\{0,1\}$, if $a_i$ is a leader ($\LL_0$ or $\LL_1$), $a_i.\timer_\LF$ is set to $\tbc$ (lines 11--12).  
If $a_0$ or $a_1$ is a candidate ($\Baby$), $\timer_\LF$ increases or decreases in Leader Generation.  
$\timer_\mathrm{V}$ increases or decreases in Leader Detection.

iii) Reset (lines 3, and 23--24) aims to reset the population when some inconsistency is detected.  
For $i\in \{0,1\}$, if $a_i.\timer_\KL>0$ holds, $a_i$ sets $(a_i.\LF, a_i.\iid, a_i.\timer_\mathrm{V})$ to $(\FF, 1, 0)$.  
In other words, $a_i$ becomes a follower, its identifier becomes $1$, and $a_i$ erases the search virus.
This Reset is happened when and only when there are multiple leaders and candidates' $\iid$ have not been reset to $1$.
Specifically, when different search virus meets (lines 60--61, 65--66, 69--70), and when candidates' $\iid$ have not been reset (lines 27--28).

iv) Generate Leader (lines 13, and 25--40) aims to generate a new leader when there are no leaders.  
For $i\in \{0,1\}$, if $a_i.\timer_\KL>0$ holds, each agent sets $a_i.\LF$ to $\tbc$ to prevent starting Leader Generation during the Global Reset phase (lines 25--26).  
Firstly, for $i\in \{0,1\}$, if $a_i.\LF$ becomes $0$, $a_i$ determines that there are no leaders, and becomes a candidate for leaders ($\Baby$) and sets $a_i.\timer_\LF$ to $2\tbc$ (lines 27--29).  
At this time, if $a_i.\iid$ is not $1$ (\ie, it has not been reset), $a_i$ sets $a_i.\timer_\KL$ to $\tbc$ and moves to the Global Reset phase (line 29).  
Secondly, candidates ($\Baby$) generate random numbers as their own identifiers ($\iid$).
For each interaction, if the candidate $u$ is an initiator, $u.\iid$ is updated to $2u.\iid$; otherwise, $u.\iid$ is updated to $2u.\iid+1$ until $u.\iid$ becomes no less than $2^{\lceil\log{N^2}\rceil}$ (lines 32--33).  
For the independence of random numbers, if both agents are candidates and generating random numbers, the responder becomes a follower ($\FF$) by resetting $\iid$ to $1$ (lines 30--31).  
Thirdly, if a candidate's $\iid$ becomes no less than $2^{\lceil\log{N^2}\rceil}$, the candidate starts to broadcast its own $\iid$ to other agents (lines 34--35).  
This broadcast allows all agents to know the maximum $\iid$ of candidates.  
For $i\in \{0,1\}$, if $a_i$ is a candidate and $a_i.\iid < a_{1-i}.\iid$ holds, $a_i$ becomes a follower ($\FF$) and sets $a_i.\iid$ to $a_{1-i}.\iid$.  
For $i\in \{0,1\}$, if $a_i$ is a follower and $a_i.\iid > a_{1-i}.\iid$ holds, $a_{1-i}$ sets $a_{1-i}.\iid$ to $a_i.\iid$.  
To avoid generating new candidates when there are candidates in the population, for $i\in \{0,1\}$, if $a_i$ is a follower and $a_{1-i}$ is a candidate, $a_i$ sets $a_i.\timer_\LF$ to $\tbc-1$ (lines 36--37).
That is, we consider $a_{1-i}$ has $\timer_\LF=\tbc$ virtually.
Eventually, all agents' $\iid$s become the same, and most candidates become followers (and some candidates remain).  
The candidates measure until all candidates finish generating the $\iid$ and $\iid$s are broadcast for a sufficiently long time using $\timer_\LF$.  
A candidate decreases $\timer_\LF$ by 1 if the agent's $\rc$ is $1$ (lines 39).  
Finally, when a candidate's $\timer_\LF$ becomes $0$, the candidate becomes a new leader ($\LL_0$) and sets $\timer_\LF$ to $\tbc$ (lines 40).  
The range of generated identifiers ($\iid$) is $[2^{\lceil\log{N^2}\rceil},2^{\lceil\log{N^2}\rceil+1})$, so there exists a unique leader with high probability.

v) Detect (lines 41--66) aims to determine whether there are multiple leaders or not.  
Leaders generate search viruses every time their $\timer_\mathrm{E}$ becomes $0$.  
If $a_0$ or $a_1$ is a candidate ($\Baby$), agents set their $\timer_\mathrm{E}$ to $2\tbc$ to prevent generating a search virus (lines 41--42).
Firstly, for $i\in \{0,1\}$, if $a_i$ is a leader ($\LL_0$ or $\LL_1$) whose $\timer_\mathrm{E}$ becomes $0$, $a_i$ becomes $\LL_1$ and starts generating random numbers to get the type of search virus (lines 44--45).  
At the beginning of generating random numbers, $a_i$ sets $a_i.\type$ to $1$.  
The way of generating random numbers is the same as $\iid$ generation.  
While generating random numbers, a leader sets own $\timer_\mathrm{E}$ to $2\tbc$ to inform that there exist agents generating random numbers (lines 46--47).
When a leader finished generating random numbers, the leader sets own $\timer_\mathrm{V}$ to $2\tbc$ (lines 48--49).
Secondly, agents detect multiple leaders if there are multiple leaders.
$\LL_1$ broadcasts the generated search virus to all agents via some agents until $\timer_\mathrm{V}$ becomes $0$ (lines 50--59). 
If a follower having search virus and a leader not having search virus interact except the cases their types are same, they set $\timer_\KL$ to $\tbc$ and the phase moves to Global Reset (lines 51--52).
If a follower not having search virus and a leader having search virus interact, the follower set own $\timer_\mathrm{V}$ to the leader's $\timer_\mathrm{V}-1$ and set own $\type$ to the leader's $\type$ (lines 53--54).
If $a_0$ and $a_1$ are followers and they have different types of search viruses, they set $\timer_\KL$ to $\tbc$ and move to Global Reset phase (lines 56--57).  
If $a_0$ and $a_1$ are followers and there exists $a_i,a_{1-i}$ agents satisfying $a_i.\timer_\mathrm{V}=0$ and $a_{1-i}.\timer_\mathrm{V}>0$ for $i\in\{0,1\}$, $a_i.\timer_\mathrm{V}$ is set to $a_{1-i}.\timer_\mathrm{V}-1$ and $a_i.\type$ is set to $a_{1-i}.\type$ (lines 58--59).  
If $a_0$ and $a_1$ are leaders, they set $\timer_\KL$ to $\tbc$ and move to Global Reset phase (lines 60--61).  
Finally, both agents' $\timer_\mathrm{V}$ run Larger Time Propagation and decrease by $1$ if $a_i.\rc=1$ holds (lines 62--63).  
For $i\in \{0,1\}$, if $a_i.\timer_\mathrm{V}>0$ holds, $a_i.\timer_\mathrm{E}$ is set to $2\tbc$ to prevent generating a new search virus when there is a search virus in the population (line 65).  
For $i\in \{0,1\}$, if $a_i$ is a leader and $a_i.\timer_\mathrm{E}$ becomes less than $\tbc/2$, $a_i$ becomes $\LL_0$ (line 66).  
The range of generating random numbers of types is $[2^{\lceil\log{N}\rceil},2^{\lceil\log{N}\rceil+1})$, so when there are multiple leaders in the population, the types generated by leaders are not the same with high probability.

The proof of Lemma~\ref{leaderelection:independent} in Appendix.

\begin{restatable}{lemma}{LEindependent}\label{leaderelection:independent}
For any execution, all candidates' $\iid$ no less than $2^{\lceil\log{N^2}\rceil}$are independent and uniform if they are started to be generated during the execution. All leaders' $\type$ no less than $2^{\lceil\log{N}\rceil}$ are independent and uniform if they are generated from the beginning of this execution.  
\end{restatable}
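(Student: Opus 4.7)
The plan is to extract an independent fair coin from each scheduler step and express every completed $\iid$ and $\type$ as a deterministic function of a disjoint set of these coins, from which independence and uniformity follow at once. First I would observe that since $G$ is symmetric, each unordered edge $\{u,v\}$ contributes both $(u,v)$ and $(v,u)$ to $E$, and the scheduler selects each with probability $1/m$. Conditioning on the unordered pair chosen at step $t$, the initiator is uniform on $\{u,v\}$, so it encodes a fresh fair bit $b_t\in\{0,1\}$, and the $\{b_t\}_t$ are mutually independent across steps. These are the coins I intend to harvest.

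Next I would unroll the generators. Starting from $\iid=1$, lines 32--33 of Algorithm~\ref{Protocol:LSLE2} implement $\iid\gets 2\iid+i$ with $i$ the updating candidate's role, so after exactly $\lceil\log N^2\rceil$ updates $\iid$ lies in $[2^{\lceil\log N^2\rceil},2^{\lceil\log N^2\rceil+1})$ and equals $2^{\lceil\log N^2\rceil}+\sum_{j=1}^{\lceil\log N^2\rceil}2^{\lceil\log N^2\rceil-j}c_j$, where $c_j$ is that candidate's role bit at its $j$-th update. The same unrolling applies to $\type$ via lines 46--47 with $\lceil\log N\rceil$ bits. So what remains is to prove that different candidates' $c_j$'s (resp.\ different leaders' type bits) are drawn from pairwise distinct interactions, so that each $b_t$ is consumed at most once. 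For the candidate case this is immediate: lines 30--31 fire before lines 32--33, so whenever both participants are $\Baby$ with $\iid$ still in generation range, the responder is demoted to $\FF$ with $\iid$ reset to $1$, leaving at most one updater per interaction. Hence distinct candidates consume disjoint coins, and their $\iid$'s come out independent and uniform on $[2^{\lceil\log N^2\rceil},2^{\lceil\log N^2\rceil+1})$.

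For $\type$ the argument is the same in spirit but needs more care, because lines 46--47 do not syntactically forbid two $\LL_1$ leaders from both updating in the same interaction. The key observation will be that any such collision also reaches lines 60--61 in the same step and sets $\timer_\KL\gets\tbc$ on both agents; the next interaction involving either one then passes the gate at line 3 of Algorithm~\ref{Protocol:LSLE1}, triggering Reset and demoting it to $\FF$. I would then argue by induction on the step index that any agent observed with $\LF\in\{\LL_0,\LL_1\}$ and $\type\ge 2^{\lceil\log N\rceil}$ (and whose generation started during the execution) must have performed every type update with a non-generator partner, because otherwise a prior $\timer_\KL$ flag would have forced Reset before the observation. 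The hardest part, in my view, is precisely this ordering argument: pinning down the within-step sequence --- type update at lines 46--47, $\timer_\KL$ assignment at lines 60--61, and the Reset gate at line 3 of the next step --- carefully enough to rule out the transient window in which two colliding leaders could both count as leaders with completed, correlated $\type$'s. Once that invariant is in place, the leaders' type bits are drawn from disjoint fair coins, and independence and uniformity of the $\type$'s follow.
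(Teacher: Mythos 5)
Your overall strategy --- harvesting an independent fair coin from the initiator/responder orientation of each scheduled edge and arguing that distinct generators consume pairwise disjoint coins --- is exactly the mechanism the protocol is built around, and the coin extraction and the unrolling of lines 32--33 and 46--47 are fine. The gap is in the step you call ``immediate.'' At a step where both participants are generating candidates, the orientation coin is used twice: it decides which of the two survives (lines 30--31 demote the \emph{responder}), and it is also the bit the survivor appends (line 33 writes $2\,\iid+i$, and the survivor is the initiator, so $i=0$). Conditioned on a candidate surviving such an encounter, its appended bit is therefore deterministically $0$, not a fresh fair bit; ``at most one updater per interaction'' buys disjointness of the coins across candidates, but not that each consumed coin is still uniform given that the candidate kept updating. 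Concretely, if two agents are both candidates at $\iid=1$ and interact, the survivor's first generated bit is always $0$, so its completed $\iid$ is supported on only half of $[2^{\lceil\log{N^2}\rceil},2^{\lceil\log{N^2}\rceil+1})$. Any complete proof must isolate the candidate--candidate collision steps and handle them separately (e.g.\ by restricting to update steps whose occurrence is determined by the history and the unordered pair alone, independently of the orientation); your argument silently treats those bits as fair.

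The $\type$ half has the analogous problem, and there you do notice it, but the invariant you propose does not close it: lines 60--61 only set $\timer_\KL$, and the Reset gate is tested at the start of the \emph{next} interaction, so in the configuration immediately after two $\LL_1$ agents both execute line 47 in the same interaction, both are still leaders and their freshly written type bits are $0$ and $1$ respectively --- perfectly anti-correlated. If that shared update completes both generations, that configuration contains two leaders with $\type\ge 2^{\lceil\log{N}\rceil}$ whose types are deterministically distinct, so the claimed invariant (``every observed completed type was built only from non-collision updates'') fails for that one-step window. You would need either to exclude that transient configuration from what the lemma asserts, or to show the lemma is only ever invoked after the ensuing Reset has demoted the colliding leaders; as written, the ordering argument you flag as the hardest part is indeed where the proof is still open.
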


\newcommand{\assumeN}{$\tau \ge \max(2d, \lceil\log{N}\rceil/2, 15+3\log{n})$}
\subsection{Analysis}  
In this subsection, we analyze the expected convergence time and the expected holding time of $\PBC$.
We assume \assumeN, and $\tbc=16\tau$.  
We will prove the following equations under these assumptions:  

$\max_{C\in \Scol}\ECT_\PBC(C,\SLE)=O(m\tau\log{n})$.

$\min_{C\in \SLE}\EHT_\PBC(C,LE)=\Omega(\tau e^\tau)$.

Here, $\Scol$ and $\SLE$ are the sets of configurations described later.

We define the sets of configurations to prove the above equations:

$\Scol$ is the safe configurations of the self-stabilizing two-hop coloring. 

$\SDcol\subset \Scol$ is the set of configurations where each agent's $\pcol$ is the same as the last interacted agent's color.

$\KLzero = \{C \in \SDcol \mid \forall v \in V:C(v).\timer_\KL=0\}$.

$\Bno = \{C \in \SDcol \mid \forall v \in V:C(v).\LF \ne \Baby\}$.

$\Lone = \{C \in \SDcol \mid |\{v\in V\mid C(v).\LF \in \{\LL_0,\LL_1\}\}|=1\}$.

$\LFqua = \{C \in \SDcol \mid \forall v \in V:C(v).\LF \ne \Baby \Rightarrow C(v).\timer_\LF \ge \tbc/2\}$.

$\Lvone = \{C \in \SDcol \mid \exists v \in V:C(v).\LF = \LL_1\}$.

$\Vclean = \{C \in \SDcol \mid \forall v \in V:C(v).\timer_\mathrm{V} = 0\}$.

$\Vmake = \{C \in \SDcol \mid \forall v \in V:(C(v).\LF = \LL_1 \Rightarrow C(v).\type < 2^{\lceil\log{N}\rceil}) \wedge (C(v).\LF\ne \LL_1 \Rightarrow C(v).\timer_\mathrm{V}=0)\}$.

$\Vonly = \{C \in \SDcol \mid \forall v,\forall u \in V:C(v).\timer_\mathrm{V} > 0 \wedge C(u).\timer_\mathrm{V} > 0 \Rightarrow C(v).\type = C(u).\type\}\cap \{C \in \SDcol \mid \forall v \in V:C(v).\LF = \LL_1 \Rightarrow C(v).\type \ge 2^{\lceil \log{N} \rceil}$.

$\Ehalf = \{C \in \SDcol \mid \forall v \in V:C(v).\LF \in \{\LL_0, \LL_1\} \Rightarrow C(v).\timer_\mathrm{E} \ge \tbc\}$.

$\SLE =\Bno \cap \Lone \cap \LFqua \cap \KLzero \cap (\Vclean \cup (\Lvone \cap (\Vmake \cup \Vonly)\cap \Ehalf))$.

\subsubsection{Expected Holding Time}

\begin{lemma}\label{leaderelection:holding:katei}
Let $C_0 \in \SLE$ and $\Xi_\PBC(C_0)=C_0,C_1,\dots$.  
If $\Pr(\forall i\in[0,2m\tau]:C_i\in LE \wedge C_{2m\tau}\in \SLE) = 1 - O(ne^{-\tau})$ holds, then $\min_{C\in \SLE}\EHT_\PBC(C,LE)=\Omega(\tau e^\tau)$ holds.
\end{lemma}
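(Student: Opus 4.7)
The plan is to bootstrap the holding-time bound from the assumed per-block tail bound by exploiting the Markov property of $\PBC$. Let $H := \min_{C\in \SLE} \EHT_{\PBC}(C, LE)$ and write $p = O(ne^{-\tau})$ for the complementary ``failure'' probability given in the hypothesis. I would partition time into consecutive blocks of length $2m\tau$ and analyze just the first block.

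Fix any starting configuration $C_0 \in \SLE$. The hypothesis guarantees that the event $A := \{\forall i \in [0, 2m\tau]: C_i \in LE\} \cap \{C_{2m\tau} \in \SLE\}$ has probability at least $1-p$. On $A$ the execution has not yet left $LE$ at time $2m\tau$, and $C_{2m\tau}$ again lies in $\SLE$; by the Markov property and the definition of $H$ as a minimum over $\SLE$, the expected remaining holding time after time $2m\tau$ is at least $H$. On $A^c$ I discard the contribution (lower-bound by $0$). This gives the recursion
\begin{equation*}
H \;\geq\; (1-p)\bigl(2m\tau + H\bigr),
\end{equation*}
which rearranges to $pH \geq (1-p)\cdot 2m\tau$, and so $H \geq 2m\tau(1-p)/p$. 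Plugging in $p = O(ne^{-\tau})$ yields $H = \Omega(m\tau e^{\tau}/n)$.

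To finish, I use that $G$ is a symmetric connected graph on $n \ge 2$ vertices, so $m \geq 2(n-1)$ and hence $m/n = \Omega(1)$. This absorbs the unwanted $1/n$ factor and gives $H = \Omega(\tau e^\tau)$, as desired. The argument is essentially routine; the only delicate point is that the hypothesis bundles the two properties (staying in $LE$ for the whole block and landing back in $\SLE$ at the end of the block) into a single high-probability event. That coupling is exactly what allows the recursion to close, because without the ``ends in $\SLE$'' clause we could not invoke the Markov property to restart the argument on the next block.
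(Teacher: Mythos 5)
Your proof is correct and follows essentially the same renewal-style recursion as the paper: set up $H \ge (1-p)(2m\tau + H)$ via the Markov property at the end of a length-$2m\tau$ block, solve, and substitute the failure probability $O(ne^{-\tau})$. You are in fact slightly more careful than the paper's own proof, which silently absorbs the residual $1/n$ factor without noting that $m \ge n-1$ for a connected graph.
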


\begin{proof}  
Let $A=\min_{C_0\in \SLE}\EHT_\PBC(C_0,LE)$.
We assume that $C_0,\dots,C_{2m\tau}\in LE \wedge C_{2m\tau}\in \SLE$ holds with probability at least $p=1-O(ne^{-\tau})$.
Then, We have $A\ge p(2m\tau+A)$.
Solving this inequality gives $A\ge 2m\tau/(1-p)=\Omega(\tau e^{\tau})$.
\end{proof}

We say that an agent $u$ encounters a counting interaction when $u$ interacts with an agent $v$ such that $u.\clr = v.\pcol$ holds.
The proofs of Lemma~\ref{leaderelection:holding:count} and  Lemma~\ref{leaderelection:holding:spreads} are in Appendix.

\begin{restatable}{lemma}{LEholdcount}\label{leaderelection:holding:count}
Let $C_0\in\SDcol$ and $\Xi_\PBC(C_0)=C_0,C_1,\dots$.  
The probability that every agent encounters less than $\tbc/2$ counting interactions while $\Gamma_0,\dots,\Gamma_{2m\tau-1}$ is at least $1-ne^{-\tau}$. 
\end{restatable}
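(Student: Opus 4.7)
\begin{sketch}
The plan is to prove the bound agent by agent via a Chernoff concentration inequality, and then apply a union bound over the $n$ agents. Fix an agent $u$ and, for each $t\in\{0,1,\dots,2m\tau-1\}$, let $Z_t$ be the indicator of the event that step $t$ is a counting interaction for $u$. I will argue that the sum $\sum_t Z_t$ is stochastically dominated by a binomial random variable of small mean, which gives strong concentration well below the threshold $\tbc/2=8\tau$.

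The key observation is that, conditional on the history $\mathcal{F}_{t-1}$ up to step $t-1$, there is at most one neighbor $v^{\star}$ of $u$ that can trigger $Z_t=1$. This relies on two facts guaranteed by $C_0\in\SDcol\subseteq\Scol$: first, the two-hop coloring invariant is preserved throughout the execution, so any two neighbors of a common vertex carry pairwise distinct colors; second, the $\pcol$-value of every vertex faithfully records the color of its most recent interaction partner. Combined, these force the identity triggering a counting interaction to pin $v^{\star}$ down uniquely as a measurable function of $\mathcal{F}_{t-1}$. Therefore $Z_t=1$ holds iff $\Gamma_t$ picks one of the two directed edges between $u$ and $v^{\star}$, which has conditional probability exactly $2/m$ irrespective of the past.

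From this per-step bound $\Pr(Z_t=1\mid\mathcal{F}_{t-1})\le 2/m$, the total $\sum_{t=0}^{2m\tau-1}Z_t$ is stochastically dominated by a $\mathrm{Binomial}(2m\tau,2/m)$ variable of mean $4\tau$. A multiplicative Chernoff bound with deviation factor $1$ then yields
\[
  \Pr\Bigl(\textstyle\sum_{t=0}^{2m\tau-1} Z_t \ge 8\tau\Bigr) \;\le\; \exp(-4\tau/3) \;\le\; e^{-\tau},
\]
using $4\tau/3\ge\tau$. Finally, a union bound over the $n$ choices of $u$ turns this per-agent failure probability into the claimed overall lower bound of $1-ne^{-\tau}$ on the event that every agent stays below $\tbc/2$ counting interactions.

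The principal obstacle is the uniqueness argument for $v^{\star}$: without it, multiple neighbors of $u$ could simultaneously satisfy the triggering condition, pushing the per-step conditional probability up to $2\delta_u/m$ and destroying the Chernoff estimate. This is precisely why the hypothesis invokes $\SDcol$ rather than only $\Scol$. A minor bookkeeping point is an agent whose initial $\pcol$-value does not correspond to any neighbor's color: such an agent contributes $Z_t=0$ until its first interaction, after which $\pcol$ is refreshed to a legitimate neighbor's color, so the per-step bound $2/m$ remains valid throughout; this case only helps.
\end{sketch}
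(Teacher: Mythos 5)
Your core approach --- a per-step conditional probability of at most $2/m$, stochastic domination of the count by $\mathrm{Bin}(2m\tau,2/m)$ with mean $4\tau$, a multiplicative Chernoff bound giving $e^{-4\tau/3}\le e^{-\tau}$ per agent, and a union bound over the $n$ agents --- is the right one and is the standard argument for this lemma. The one point you must nail down is the \emph{direction} of the counting condition, which your sketch leaves ambiguous. The uniqueness of $v^\star$ holds only if a counting interaction for $u$ is read as $u.\pcol=v.\clr$ (the test under which $u$ sets $u.\rc=1$ and decrements its own timers, which is what the downstream timer lemmas actually need): then $u.\pcol$ is a single value determined by the history, and since all neighbors of $u$ carry pairwise distinct colors in any configuration reachable from $\Scol$, at most one of them can match it --- note this uses only $\Scol$, not $\SDcol$, contrary to your closing remark. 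Under the opposite reading $u.\clr=v.\pcol$ (which is what the paper's prose literally states), uniqueness fails: every neighbor of $u$ whose most recent partner was $u$ satisfies the condition simultaneously, the per-step probability climbs to $2\delta_u/m$, and on a star centered at $u$ essentially every interaction would eventually count, so the faithfulness of $\pcol$ that you invoke does not rescue the bound in that direction. With the first reading fixed in place, your proof is complete and the arithmetic checks out.
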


\begin{restatable}{lemma}{LEholdspreads}\label{leaderelection:holding:spreads}
Let $C_0\in \SDcol$ and $\Xi_\PBC(C_0)=C_0,C_1,\dots$.  
For any $x\in \{\LF,\KL,\mathrm{E},\mathrm{V}\}$, and for any $y\ge \tbc/2$ such that $y$ is no more than the maximum value of the domain of $\timer_x$, when $\exists v\in V:C_0(v).\timer_x\ge y$ holds, the probability that $\forall u\in V:C_{2m\tau}(u).\timer_x> y-\tbc/2$ holds is at least $1-2ne^{-\tau}$.
\end{restatable}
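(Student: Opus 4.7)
\begin{sketch}
The plan is to combine Lemma~\ref{leaderelection:holding:count} (bounded counting interactions per agent) with a broadcast-style analysis of the Larger Time Propagation rule, then take a union bound over the $n$ agents.

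Let $v^\ast\in V$ be an agent with $C_0(v^\ast).\timer_x\geq y$, and for each agent $u$ let $d_u$ denote the length of a shortest path from $v^\ast$ to $u$ in $G$; by the standing assumption $\tau\geq 2d$, we have $d_u\leq D\leq \tau/2$. First I would define the spreading event $E_s$: for every $u$, there is a step $t_u\leq 2m\tau$ at which $u$ is the endpoint of an interaction whose partner already has $\timer_x\geq y-d_u$. I would establish $\Pr(E_s)\geq 1-ne^{-\tau}$ via a Chernoff argument along a fixed shortest path from $v^\ast$ to $u$: each directed edge is scheduled with probability $1/m$ per step, and since $\tau\geq 15+3\log n$, the required cascade of at most $D$ successive interactions comfortably fits inside $2m\tau$ steps with per-agent failure at most $e^{-\tau}$.

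Next, I would invoke a slightly sharper form of Lemma~\ref{leaderelection:holding:count} to obtain the count-down event $E_c$: every agent $u$ experiences fewer than $\tbc/2-D$ counting interactions during $\Gamma_0,\dots,\Gamma_{2m\tau-1}$. Since the expected number of counting interactions at $u$ is $O(\tau)$ while $\tbc/2-D\geq 8\tau-\tau/2=15\tau/2$, the same Chernoff argument behind Lemma~\ref{leaderelection:holding:count} still yields $\Pr(E_c)\geq 1-ne^{-\tau}$. On $E_s\cap E_c$: the LTP update at step $t_u$ gives $C_{t_u}(u).\timer_x\geq y-d_u$; afterwards $u.\timer_x$ can only rise through further LTP or drop by at most $\tbc/2-D-1$ unit count-down decrements, so $C_{2m\tau}(u).\timer_x\geq y-d_u-(\tbc/2-D-1)>y-\tbc/2$. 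A union bound over the complements of $E_s$ and $E_c$ yields the claimed failure probability of at most $2ne^{-\tau}$.

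The main obstacle is the joint accounting of the two decrement sources: naively combining the loose bound ``fewer than $\tbc/2$ count-downs'' of Lemma~\ref{leaderelection:holding:count} with the propagation loss $d_u$ would overshoot the $\tbc/2$ budget. The fix is to split the budget into ``$\leq D$ from propagation'' plus ``$<\tbc/2-D$ from count-downs'', exploiting both the ample slack $\tbc=16\tau$ and the assumption $\tau\geq 2d$ so that the tightened count-down bound still lies within Chernoff concentration at failure rate $e^{-\tau}$ per agent. Ensuring that the reception step $t_u$ is ordered correctly relative to the count-down budget, so that the same interactions are not charged twice, is the technically delicate point.
\end{sketch}
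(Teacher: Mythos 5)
Your overall architecture (two bad events, each of probability at most $ne^{-\tau}$, one for propagation and one for count-downs, combined by a union bound and a budget split of $\tbc/2$ into hops plus decrements) matches the shape of the bound, but the spreading event $E_s$ as you define it cannot be established with probability $1-ne^{-\tau}$, and this is a genuine gap rather than a technicality. The value that travels along your fixed shortest path $v^\ast=w_0,w_1,\dots,w_{d_u}=u$ does not arrive at $u$'s neighbour with $\timer_x\ge y-d_u+1$: each relay $w_j$ continues to execute COUNT\_DOWN during the (possibly long) interval in which it waits for the edge $(w_j,w_{j+1})$ to be scheduled, so what $u$ receives is $y-d_u-\sum_j k_j$, where $k_j$ is the number of counting interactions $w_j$ encounters while holding the value. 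Your Chernoff argument only controls the scheduling of the $D$ edges, not these losses, and your event $E_c$ only controls the counting interactions of the fixed destination $u$, never those of the relays. The relay losses are not negligible: the holding intervals partition $[0,t_u)$, so their total is in expectation about $2t_u/m$, which can be as large as $4\tau$ --- far exceeding the $D\le\tau/2$ slack you reserve for propagation. So on your events $E_s\cap E_c$ the conclusion $C_{2m\tau}(u).\timer_x>y-\tbc/2$ does not actually follow.

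The repair is to charge count-downs to the \emph{current holder} of the propagated value rather than to a fixed agent. Track a token that sits at $w_j$ until $(w_j,w_{j+1})$ is scheduled and then moves (ending at $u$, where it stays until step $2m\tau$); the invariant is that the holder's $\timer_x$ is at least $y$ minus the number of moves minus the number of steps at which the current holder has a counting interaction. Because the holding intervals partition all of $[0,2m\tau)$ and, conditioned on the past, each step is a counting interaction for the current holder with probability at most $2/m$, the total holder count-down loss is stochastically dominated by $\mathrm{Bi}(2m\tau,2/m)$ with mean $4\tau$, and a single Chernoff bound puts it below $\tbc/2-D$ with probability $1-e^{-\tau}$ per destination. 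This one bound simultaneously covers the relays and the destination (so there is no double-charging issue to worry about), and together with the event that the token reaches $u$ within $2m\tau$ steps it gives the claimed $1-2ne^{-\tau}$ after a union bound over $u$. Note also that Lemma~\ref{leaderelection:holding:count} as stated bounds the counting interactions of each \emph{fixed} agent, so it cannot be invoked (even in a ``slightly sharper form'') for this time-varying holder; you need the domination argument directly.
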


\begin{lemma}\label{leaderelection:holding:LFlambda}
Let $C_0\in\SDcol$ and $\Xi_\PBC(C_0)=C_0,C_1,\dots$.
For any integer $\lambda>0$, and any integer $x$ satisfying $\tbc \le x \le 2\tbc$, if $\forall v\in V:C_0(v).\timer_\LF\ge x \wedge \forall i\in[0,\lambda-1],\exists v\in V:C_{2mi\tau}(v).\timer_\LF\ge x$ holds, then $\Pr(\forall j\in[0,2m\lambda \tau],\forall v\in V:C_j(v).\timer_\LF>x-\tbc \wedge C_{2m\lambda\tau}(v).\timer_\LF\ge x-\tbc/2)\ge 1-3\lambda ne^{-\tau}$ holds.    
\end{lemma}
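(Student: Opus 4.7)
The plan is to proceed by induction on $\lambda$, partitioning $[0, 2m\lambda\tau]$ into $\lambda$ consecutive windows of $2m\tau$ steps each and invoking the two tools already in hand---Lemma~\ref{leaderelection:holding:count} and Lemma~\ref{leaderelection:holding:spreads}---inside each window. A union bound over the $\lambda$ windows will yield the required $3\lambda ne^{-\tau}$ failure probability.

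For the $i$-th window $[2mi\tau, 2m(i+1)\tau]$ (with $i \in \{0, 1, \ldots, \lambda-1\}$), the hypothesis provides some agent $v$ with $C_{2mi\tau}(v).\timer_\LF \ge x$, and $y := x$ satisfies $\tbc \le y \le 2\tbc$, which is in the admissible range of the spread lemma. Hence Lemma~\ref{leaderelection:holding:spreads}, applied to the sub-execution starting at $C_{2mi\tau}$, delivers with probability at least $1 - 2ne^{-\tau}$ that every agent has $\timer_\LF > x - \tbc/2$ at time $2m(i+1)\tau$. Simultaneously, Lemma~\ref{leaderelection:holding:count} applied to the same window delivers with probability at least $1 - ne^{-\tau}$ that each agent encounters strictly fewer than $\tbc/2$ counting interactions during the window. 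The two error probabilities sum to $3ne^{-\tau}$ per window, and accumulating over $\lambda$ windows gives the claimed $3\lambda ne^{-\tau}$.

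Conditioning on all of these events, I establish both conclusions. The final-checkpoint bound $C_{2m\lambda\tau}(v).\timer_\LF \ge x - \tbc/2$ follows directly from the spread lemma applied to the last window. For the uniform bound $\timer_\LF > x - \tbc$ throughout $[0, 2m\lambda\tau]$, I argue window by window: the lower envelope of $\timer_\LF$ at the start of the window is $\ge x$ for the first window (by hypothesis) and $> x - \tbc/2$ for each later window (by the previous window's spread conclusion); within a window, \texttt{LARGER\_TIME\_PROPAGATE} is $\max$-based and never decreases $\timer_\LF$, while \texttt{COUNT\_DOWN}$(\LF)$ fires fewer than $\tbc/2$ times per agent by the count lemma. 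The lower envelope therefore stays strictly above $x - \tbc/2 - \tbc/2 = x - \tbc$ inside each window.

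The main obstacle will be accounting for the remaining $\timer_\LF$-writing rules in Algorithm~\ref{Protocol:LSLE2} that are neither max-updates nor simple decrements: line~12 refreshes a leader to $\tbc$; lines~28, 33, and~40 refresh a candidate or a new leader to $2\tbc$ or $\tbc$; line~36--37 sets a follower meeting a candidate to $\tbc - 1$; and line~27--29 promotes a follower to $\Baby$ exactly when $\timer_\LF$ hits $0$. The last rule is self-consistently ruled out by the very invariant $\timer_\LF > x - \tbc \ge 0$ that we are proving, so no new candidate is created after the invariant takes hold. For the others, the value written is at least $\tbc - 1$, which dominates $x - \tbc$ over most of the permitted range $\tbc \le x \le 2\tbc$; the narrow boundary where $x$ is near $2\tbc$ requires a finer case analysis on the $\LF$-status of the interacting pair, but in every case both the pre-interaction value and the written value are controlled by the spread and count events already conditioned on, so the induction closes window-by-window without further probabilistic cost.
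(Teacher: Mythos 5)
Your proof matches the paper's almost exactly: the paper likewise treats $[0,2m\lambda\tau]$ as $\lambda$ consecutive windows of $2m\tau$ steps, applies Lemma~\ref{leaderelection:holding:spreads} with $y=x$ (error $2ne^{-\tau}$) to recover the endpoint bound $\ge x-\tbc/2$ at the end of each window and Lemma~\ref{leaderelection:holding:count} (error $ne^{-\tau}$) to keep the in-window values above $x-\tbc/2-\tbc/2=x-\tbc$, and then union-bounds over the $\lambda$ windows to get $3\lambda ne^{-\tau}$. The complication you raise in your final paragraph---the writes to $\timer_\LF$ that are neither max-propagation nor counting decrements (setting to $\tbc$, $2\tbc$, or $\tbc-1$)---is a legitimate concern that the paper's own proof does not address at all; you flag it and sketch why it is harmless except when $x$ is near $2\tbc$, which is more care than the paper itself takes, though you leave that boundary case unresolved.
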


\begin{proof}  
Since there exists an agent $u$ satisfying $u.\timer_\LF\ge x$ in $C_0$, the probability that every agent's $\timer_\LF\ge x-\tbc/2$ holds in $C_{2m\tau}$ is at least $1-2ne^{-\tau}$ from Lemma~\ref{leaderelection:holding:spreads}.  
Since there is every agent $u$ satisfying $u.\timer_\LF\ge x-\tbc/2$ in $C_0$, $\Pr(\forall j\in[0,2m\tau],\forall v\in V:C_{j}(v).\timer_\LF>x-\tbc)\ge 1-ne^{-\tau}$ holds from Lemma~\ref{leaderelection:holding:count}.  
Thus, $\Pr(\forall j\in[0,2m\tau],\forall v\in V:C_{j}(v).\timer_\LF>x-\tbc \wedge C_{2m\tau}(v).\timer_\LF\ge x-\tbc/2)\ge 1-3ne^{-\tau}$ holds by the union bound.  
Repeating this $\lambda$ times, we get $\Pr(\forall j\in[0,2m\lambda\tau],\forall v\in V:C_{j}(v)>x-\tbc \wedge C_{2m\lambda\tau}(v).\timer_\LF\ge x-\tbc/2)\ge 1-3\lambda ne^{-\tau}$ by the union bound.  
\end{proof}

We can prove Lemma~\ref{leaderelection:holding:Elambda} by the same way of Lemma~\ref{leaderelection:holding:LFlambda}, and  Lemma~\ref{leaderelection:holding:LF} by assigning $\lambda=1$ to Lemma~\ref{leaderelection:holding:LFlambda}. 
Lemma~\ref{leaderelection:holding:v1clean}, ~\ref{leaderelection:holding:v1makeehalf}, and~\ref{leaderelection:holding:v1onlyehalf} analyzes the probability that configuration keep some condition for some interval. (See Appendix for the proofs).

\begin{lemma}\label{leaderelection:holding:Elambda}
Let $C_0\in\SDcol$ and $\Xi_\PBC(C_0)=C_0,C_1,\dots$.
For any integer $\lambda>0$, and any integer $x$ satisfying $\tbc \le x \le 2\tbc$, if $\forall v\in V:C_0(v).\timer_\mathrm{E}\ge x \wedge \forall i\in[0,\lambda-1],\exists v\in V:C_{2mi\tau}(v).\timer_\mathrm{E}\ge x$ holds, then $\Pr(\forall i\in[0,2m\lambda \tau],\forall v\in V:C_i(v).\timer_\mathrm{E}>x-\tbc \wedge C_{2m\lambda\tau}(v).\timer_\mathrm{E}\ge x-\tbc/2)\ge 1-3\lambda ne^{-\tau}$ holds. 
\end{lemma}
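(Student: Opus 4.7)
The plan is to mirror the proof of Lemma~\ref{leaderelection:holding:LFlambda} almost verbatim, substituting $\timer_\mathrm{E}$ for $\timer_\LF$. The core observation I will rely on is that $\timer_\mathrm{E}$ is governed by the same two mechanisms as $\timer_\LF$: Larger Time Propagation and Count Down (lines~6--7 of Algorithm~\ref{Protocol:LSLE1}), which together constitute the Same Speed Timer machinery that Lemmas~\ref{leaderelection:holding:spreads} and~\ref{leaderelection:holding:count} are proved for. All other writes to $\timer_\mathrm{E}$ in Algorithms~\ref{Protocol:LSLE1}--\ref{Protocol:LSLE3} (lines 41--42, 44--45, 46--47, and 65) only raise the value to $2\tbc$, which cannot harm any lower-bound invariants we track because $x \le 2\tbc$.

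First I would set up one block of $2m\tau$ consecutive steps and analyze it. Using the hypothesis that some agent has $\timer_\mathrm{E} \ge x$ at the start of the block, I apply Lemma~\ref{leaderelection:holding:spreads} with $y = x$ to get that every agent has $\timer_\mathrm{E} > x - \tbc/2$ at the end, with failure probability at most $2ne^{-\tau}$. Next, using the invariant that every agent starts the block with $\timer_\mathrm{E} \ge x - \tbc/2$ (which follows from the stronger $\ge x$ hypothesis for the first block, and from the previous block's conclusion for later blocks), I apply Lemma~\ref{leaderelection:holding:count} to argue that no agent encounters $\tbc/2$ or more counting interactions during the block, with failure probability at most $ne^{-\tau}$; since Count Down is the only transition that decreases $\timer_\mathrm{E}$, this forces $\timer_\mathrm{E} > x - \tbc$ uniformly in time throughout the block. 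A union bound combines the two events with total failure probability at most $3ne^{-\tau}$ per block, and a second union bound over the $\lambda$ blocks delivers the claimed $1 - 3\lambda ne^{-\tau}$.

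The main obstacle is not the arithmetic but the structural verification that $\timer_\mathrm{E}$ really does behave identically to $\timer_\LF$ in the sense required. I would scan every assignment to $\timer_\mathrm{E}$ in the three algorithm listings and confirm that (i) the only decrements come from Count Down driven by counting interactions, matching exactly the hypothesis used in Lemma~\ref{leaderelection:holding:count}, and (ii) every upward write is to $2\tbc$, which lies within the admissible range of the invariant and never invalidates the hypothesis of Lemma~\ref{leaderelection:holding:spreads}. Once this compatibility check is done, the rest of the argument is a mechanical translation of the proof of Lemma~\ref{leaderelection:holding:LFlambda}, with no new probabilistic content required.
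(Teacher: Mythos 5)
Your proposal matches the paper's argument: the paper proves this lemma by declaring it identical in structure to the proof of Lemma~\ref{leaderelection:holding:LFlambda}, which is exactly the block-by-block combination of Lemma~\ref{leaderelection:holding:spreads} (endpoint bound, failure $2ne^{-\tau}$) and Lemma~\ref{leaderelection:holding:count} (uniform-in-time bound, failure $ne^{-\tau}$) followed by union bounds over the block and over the $\lambda$ blocks that you describe. Your additional check that every non--Count-Down write to $\timer_\mathrm{E}$ only raises it to $2\tbc$ is the right compatibility verification and is implicit in the paper's ``same way'' remark.
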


\begin{lemma}\label{leaderelection:holding:LF}
Let $C_0\in \SLE$ and $\Xi_\PBC(C_0)=C_0,C_1,\dots$.  
$\Pr(\forall i\in[0,2m\tau]:C_i\in \Bno \wedge C_{2m\tau}\in \LFqua)\ge 1-3ne^{-\tau}$ holds. 
\end{lemma}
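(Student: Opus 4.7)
My plan is to decompose the target event into (i) $C_i \in \Bno$ for every $i \in [0, 2m\tau]$ and (ii) $C_{2m\tau} \in \LFqua$, bound each with a high-probability tail estimate, and then apply a union bound. The setup is immediate from $C_0 \in \SLE \subseteq \Bno \cap \LFqua \cap \Lone$: initially no agent is a baby, every agent satisfies $C_0(v).\timer_\LF \ge \tbc/2$, and the unique leader $\ell$ has $C_0(\ell).\timer_\LF = \tbc$ (enforced by line 12 of Algorithm~\ref{Protocol:LSLE1} whenever $\ell$ interacts, and present already in $\SLE$ by closure of line 12 over the trajectory that reached $\SLE$). Membership in $\SDcol$ is preserved along the execution because REPEAT\_CHECK refreshes $\pcol$ of the two interacting agents, and colors in $\Scol$ are stable.

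For (i) I will invoke Lemma~\ref{leaderelection:holding:count} to get, with probability at least $1 - ne^{-\tau}$, that every agent participates in strictly fewer than $\tbc/2$ counting interactions over $\Gamma_0, \dots, \Gamma_{2m\tau - 1}$; call this event $E_1$. The structural claim I need is that the only rule that decreases $\timer_\LF$ of a non-baby agent in $\PBC$ is COUNT\_DOWN at line 10 of Algorithm~\ref{Protocol:LSLE1}: LARGER\_TIME\_PROPAGATION is monotone nondecreasing, lines 12 and 26 only raise the timer to $\tbc$, line 37 is guarded by the presence of an adjacent baby, and line 39 acts only on babies. Hence under $E_1$, by a straightforward induction on $i$, no baby is ever created (so line 37 never fires either) and every non-baby agent keeps $\timer_\LF \ge \tbc/2 - (\tbc/2 - 1) = 1 > 0$. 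Because the only rule creating a baby is lines 28--29, which fires only when a follower has $\timer_\LF = 0$, part (i) follows.

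For (ii) I will apply Lemma~\ref{leaderelection:holding:spreads} with $x = \LF$ and $y = \tbc$, using the unique leader as the witness that $\exists v: C_0(v).\timer_\LF \ge \tbc$. This yields, with probability at least $1 - 2ne^{-\tau}$ (call it $E_2$), that $\forall u: C_{2m\tau}(u).\timer_\LF > \tbc - \tbc/2 = \tbc/2$. Combined with $C_{2m\tau} \in \Bno$ from (i), this gives $C_{2m\tau} \in \LFqua$. A union bound over the failure probabilities of $E_1$ and $E_2$ delivers the claimed $1 - 3ne^{-\tau}$ bound. The only genuinely delicate step is the monotonicity check used in (i), so that is the main obstacle: one must trace through every assignment to $\timer_\LF$ in Algorithm~\ref{Protocol:LSLE1}, Algorithm~\ref{Protocol:LSLE2}, and Algorithm~\ref{Protocol:LSLE3} and verify that, in the absence of babies, COUNT\_DOWN is the only possible source of decrease; everything else is direct application of the two supporting lemmas and a union bound.
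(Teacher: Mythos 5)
Your proof is correct and follows essentially the same route as the paper: the paper obtains this lemma by instantiating Lemma~\ref{leaderelection:holding:LFlambda} with $\lambda=1$, whose proof is precisely your combination of Lemma~\ref{leaderelection:holding:count} (every non-candidate's $\timer_\LF$ stays positive, hence no candidate is ever created), Lemma~\ref{leaderelection:holding:spreads} (all timers are back above $\tbc/2$ by step $2m\tau$), and a union bound giving $1-3ne^{-\tau}$. The one wrinkle --- your appeal to a leader with $\timer_\LF=\tbc$ in $C_0$ to feed Lemma~\ref{leaderelection:holding:spreads} with $y=\tbc$, which the definition of $\LFqua$ (only $\ge\tbc/2$) does not by itself guarantee for an arbitrary $C_0\in\SLE$ --- is shared by the paper's own instantiation, which even requires $\forall v:\timer_\LF\ge\tbc$, so it is not a divergence from the paper's argument.
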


\begin{restatable}{lemma}{LEholdvoneclean}\label{leaderelection:holding:v1clean}  
Let $C_0\in \SLE \cap \Vclean$ and $\Xi_\PBC(C_0)=C_0,C_1,\dots$.  
$\Pr(\forall i\in[0,2m\tau]:C_i\in \Lone\wedge C_{2m\tau}\in \SLE \cap (\Vclean \cup \Lvone \cap (\Vmake \cup \Vonly) \cap \Ehalf)\ge 1-5ne^{-\tau}$ holds.  
\end{restatable}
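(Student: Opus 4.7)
The plan is to decompose the event into sub-events each guaranteed with high probability by lemmas already established, then union-bound the failure probabilities. First I would invoke Lemma~\ref{leaderelection:holding:LF} to obtain, with probability at least $1-3ne^{-\tau}$, that $C_i \in \Bno$ for every $i \in [0,2m\tau]$ and $C_{2m\tau} \in \LFqua$. Since a new candidate is born only when a follower's $\timer_\LF$ hits $0$ (lines 27--29 of Algorithm~\ref{Protocol:LSLE2}), the invariant $\Bno$ rules out the birth of any candidate throughout the window. Combined with $C_0 \in \Lone$ and the fact that an $\LL_0$/$\LL_1$ agent can lose its leader status only via a reset triggered by $\timer_\KL>0$ (handled next), the unique leader persists, giving $C_i \in \Lone$ for every $i \in [0,2m\tau]$.

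Next I would show $C_{2m\tau} \in \KLzero$ by verifying that no $\timer_\KL$ is ever set positive during the window. The only write sites are (i) line 29, ruled out by $\Bno$; (ii) line 61, where two leaders meet, ruled out by $\Lone$; and (iii) lines 52 and 57, triggered when two agents hold search viruses of different $\type$. Since the unique leader is the only source of search viruses in the window, every type present in the population is a copy of that single leader's current type, so (iii) never fires either. Consequently $\timer_\KL$ stays $0$ everywhere, and no call to \textsc{Reset} ever demotes the leader or corrupts any $\iid$ or $\timer_\mathrm{V}$.

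Then I would case-split on whether the leader's $\timer_\mathrm{E}$ hits $0$ during the window. In Case~(a) the configuration stays in $\Vclean$ (no agent ever writes a positive $\timer_\mathrm{V}$). In Case~(b) the unique leader becomes $\LL_1$ and starts generating $\type$ on lines 44--47; while $\type < 2^{\lceil\log N\rceil}$ it is the only agent with any $\mathrm{V}$-activity, because distribution on lines 53--54 and 58--59 requires some agent to already carry $\timer_\mathrm{V}>0$, matching $\Vmake$. Once $\type \geq 2^{\lceil\log N\rceil}$, line 49 sets $\timer_\mathrm{V}\gets 2\tbc$ and the virus propagates; uniqueness of the leader guarantees every agent with $\timer_\mathrm{V}>0$ carries the same $\type$, giving $\Vonly$. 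Meanwhile lines 42, 47, and 65 force $\timer_\mathrm{E}\gets 2\tbc$ whenever a leader is generating a type or a virus is alive, so $C_{2m\tau} \in \Ehalf$. Controlling the propagation of $\timer_\mathrm{V}$ and $\timer_\mathrm{E}$ throughout this case costs an additional $2ne^{-\tau}$ via an application of Lemma~\ref{leaderelection:holding:spreads}.

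A union bound over the above events yields total failure probability at most $3ne^{-\tau}+2ne^{-\tau}=5ne^{-\tau}$, matching the claim. The hard part will be Case~(b): one must carefully trace how the reset at line 49 stitches $\Vmake$ to $\Vonly$ without any intermediate configuration violating $\Vmake\cup\Vonly$ (in particular, ensuring that the moment $\type$ crosses $2^{\lceil\log N\rceil}$ the guard $\type<2^{\lceil\log N\rceil}$ in $\Vmake$ fails only simultaneously with $\timer_\mathrm{V}>0$ being set), and verify that the explicit overrides on lines 42, 47, and 65 really do keep the leader's $\timer_\mathrm{E}$ above $\tbc$ throughout the window. Uniqueness of the leader from the first step is what makes all these invariants tractable.
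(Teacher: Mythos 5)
Your overall architecture --- invoke Lemma~\ref{leaderelection:holding:LF} for $\Bno$ and $\LFqua$, argue deterministically that $\Lone$ and $\KLzero$ are preserved because no write site of $\timer_\KL$ can fire, case-split on whether the unique leader starts generating a search virus, and union-bound to $5ne^{-\tau}$ --- is the right one and matches the paper's decomposition. However, one step fails as written: you claim the $\timer_\KL$ write at lines~27--28 of Algorithm~\ref{Protocol:LSLE2} is ``ruled out by $\Bno$.'' It is not. That write fires precisely when a follower's $\timer_\LF$ reaches $0$ \emph{and} its $\iid\ne 1$, and in that branch the agent does \emph{not} become a candidate, so $\Bno$ is perfectly consistent with the write occurring. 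Worse, in the configurations at hand the followers' $\iid$s are typically the elected leader's broadcast identifier, which is at least $2^{\lceil\log{N^2}\rceil}>1$, so the $\iid\ne1$ branch is exactly the one that would execute, triggering \textsc{Reset} and destroying $\Lone$. The event you actually need is that no agent's $\timer_\LF$ reaches $0$ anywhere in the window; this is delivered by the ``$>x-\tbc$'' clause inside Lemma~\ref{leaderelection:holding:LFlambda} (the event underlying Lemma~\ref{leaderelection:holding:LF}), not by the bare conclusion $\Bno$, and you must cite that stronger event explicitly.

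A second, smaller point: your characterization of lines~51--52 as firing only ``when two agents hold search viruses of different $\type$'' is incomplete. The guard also fires when a virus-carrying follower meets an $\LL_0$ leader or an $\LL_1$ leader with $\type<2^{\lceil\log{N}\rceil}$. Excluding these sub-cases requires showing that, once the leader releases the virus, it remains $\LL_1$ with a completed $\type$ for as long as any follower carries $\timer_\mathrm{V}>0$ within the window; that in turn rests on Lemma~\ref{leaderelection:holding:count} bounding every agent's counting interactions by $\tbc/2$, so that the leader's own $\timer_\mathrm{V}$ (initialized to $2\tbc$) cannot expire and line~65 keeps its $\timer_\mathrm{E}$ at $2\tbc$. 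You defer this to ``the hard part,'' but it is load-bearing for $\Lone$, $\KLzero$, $\Lvone$, and $\Ehalf$ alike, and it is where part of the $O(ne^{-\tau})$ failure probability actually comes from.
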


\newcommand{\Lnotincreasing}{From Lemma~\ref{leaderelection:holding:LF}, the probability that the number of leaders does not increase from $C_0$ to $C_{2m\tau}$ and $C_{2m\tau}$ in $\LFqua$ is at least $1-2ne^{-\tau}$.  
Thus, $C_0,\dots, C_{2m\tau}\in \Bno$ in this case.}

\begin{restatable}{lemma}{LEholdvonemakeehalf}\label{leaderelection:holding:v1makeehalf}  
Let $C_0\in \SLE \cap \Lvone \cap \Vmake \cap \Ehalf$ and $\Xi_\PBC(C_0)=C_0,C_1,\dots$.  
$\Pr(\forall i\in[0,2m\tau]:C_i\in  \Lone \wedge C_{2m\tau}\in \SLE \cap \Lvone \cap (\Vmake \cup \Vonly)\cap \Ehalf)\ge 1-3ne^{-\tau}$ holds. 
\end{restatable}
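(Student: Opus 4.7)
The plan is to bound from above the probability of three bad events on the interval $[0,2m\tau]$: (i) the creation of a new $\Baby$, (ii) activation of any kill-virus timer $\timer_\KL$, and (iii) demotion of the unique leader out of state $\LL_1$. All three will be excluded simultaneously by invoking Lemma~\ref{leaderelection:holding:LF} once and then reasoning deterministically about the code of $\PBC$.

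First I apply Lemma~\ref{leaderelection:holding:LF} to $C_0\in\SLE$: with probability at least $1-3ne^{-\tau}$, no agent enters the $\Baby$ state during $[0,2m\tau]$ (so $C_i\in\Bno$ throughout) and $C_{2m\tau}\in\LFqua$. The event guaranteed by this lemma already subsumes the counting-interaction event of Lemma~\ref{leaderelection:holding:count}, which I will reuse below without paying an extra union-bound term.

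Next I establish the invariant $C_i\in\Lone\cap\KLzero$ by induction on $i$, conditional on the event above. The base case is $C_0\in\SLE\subseteq\Lone\cap\KLzero$. For the step, observe that the only updates promoting $\timer_\KL$ to $\tbc$ in $\PBC$ appear in lines 27--28, 51--52, 56--57, and 60--61 of Algorithms~\ref{Protocol:LSLE2}--\ref{Protocol:LSLE3}. Lines 27--28 are excluded by $\Bno$; lines 60--61 require two leaders and are excluded by $\Lone$; and for the virus-conflict branches of lines 51--52 and 56--57 I observe that, under $\Lone$ together with the initial hypothesis $C_0\in\Vmake$, the unique leader is the sole possible source of a search virus (line 49), and lines 53--54 and 58--59 only copy its single $\type$, so no type mismatch can arise. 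The hypothesis $\Vmake$ is crucial here, as it pins $\timer_\mathrm{V}=0$ at every non-$\LL_1$ agent and thereby rules out any stale virus inherited from $C_0$. Consequently $\timer_\KL$ stays zero, the \textsc{Reset} function is never invoked, and the unique leader persists, giving $C_i\in\Lone$ for every $i\in[0,2m\tau]$.

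For the endpoint I show the leader remains in state $\LL_1$, which immediately yields $C_{2m\tau}\in\Lvone\cap\Ehalf$. The only demotion $\LL_1\to\LL_0$ (line 66) requires $\timer_\mathrm{E}<\tbc/2$, but lines 45, 47, and 65 force the leader's $\timer_\mathrm{E}$ back to $2\tbc$ at every one of its interactions in which type generation is in progress or $\timer_\mathrm{V}>0$. So $\timer_\mathrm{E}$ can only begin to count down after the leader's $\timer_\mathrm{V}$ has been drained from $2\tbc$ to $0$, and the counting-interaction bound subsumed in the event of Lemma~\ref{leaderelection:holding:LF} guarantees that the leader undergoes fewer than $\tbc/2$ counting interactions in $[0,2m\tau]$ --- not enough to drain $2\tbc$. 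A final dichotomy closes the endpoint classification: if type generation has not finished by step $2m\tau$, then no virus has ever been released and $C_{2m\tau}\in\Vmake$; otherwise every circulating virus carries the unique leader's $\type\ge 2^{\lceil\log N\rceil}$ and $C_{2m\tau}\in\Vonly$. The hardest part will be the exhaustive check in the inductive step: one must inspect every branch of the \textsc{Detect} subroutine to verify that no combination of inherited $\timer_\mathrm{V}$, $\type$, and $\iid$ in $C_0$ can spuriously activate $\timer_\KL$ before generation completes, and this is exactly what the $\Vmake$ assumption is designed to rule out.
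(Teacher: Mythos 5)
Your overall strategy is the right one and, as far as the probability accounting goes, it matches the structure the paper sets up: take the single good event underlying Lemma~\ref{leaderelection:holding:LF} (the intersection of the counting-interaction event of Lemma~\ref{leaderelection:holding:count} and the propagation event of Lemma~\ref{leaderelection:holding:spreads}, total failure probability $3ne^{-\tau}$), and on that event argue purely deterministically that no kill virus is created, the unique leader persists, and the endpoint lands in $\Vmake$ or $\Vonly$ according to whether type generation has completed. One presentational caveat: the \emph{conclusion} of Lemma~\ref{leaderelection:holding:LF} does not literally contain the counting-interaction event, so you cannot ``reuse'' it from the lemma's statement; you must condition on the underlying intersection of events directly. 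That is a cosmetic fix and does not change the bound.

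The substantive gap is in your deterministic invariant. You write that $\Vmake$ ``pins $\timer_\mathrm{V}=0$ at every non-$\LL_1$ agent and thereby rules out any stale virus inherited from $C_0$,'' and you conclude that the leader is ``the sole possible source of a search virus (line 49),'' so that no virus circulates before generation completes. But $\Vmake$ places \emph{no} constraint on the $\timer_\mathrm{V}$ of the $\LL_1$ agent itself, and nothing else in $\SLE\cap\Lvone\cap\Vmake\cap\Ehalf$ does either. If the unique leader enters $C_0$ still generating ($\type<2^{\lceil\log N\rceil}$) but with $\timer_\mathrm{V}>0$, then at its next meeting with a follower the branch at lines 53--54 (or Larger Time Propagation of $\timer_\mathrm{V}$ at line 62) hands that follower a positive $\timer_\mathrm{V}$ together with the leader's \emph{partial} type; at the subsequent leader--follower meeting the guard at line 51 fires via the disjunct $a_{1-i}.\type<2^{\lceil\log N\rceil}$ (or via a type mismatch once generation finishes), setting $\timer_\KL\gets\tbc$ and destroying $\Lone$ through \textsc{Reset}. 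Your closing remark that the exhaustive branch check ``is exactly what the $\Vmake$ assumption is designed to rule out'' therefore does not cover the one inherited quantity that actually threatens the invariant. To close the argument you must either show that the leader's own $\timer_\mathrm{V}$ is forced to be $0$ while it is generating (which does not follow from the definitions as given) or treat this case explicitly; as written, the inductive step fails on it, and with it both the $\Lone$ invariant and the endpoint classification into $\Vmake\cup\Vonly$.
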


\begin{restatable}{lemma}{LEholdvoneonlyehalf}\label{leaderelection:holding:v1onlyehalf}  
Let $C_0\in \SLE \cap \Lvone \cap \Vonly \cap \Ehalf$ and $\Xi_\PBC(C_0)=C_0,C_1,\dots$.  
$\Pr(\forall i\in[0,2m\tau]:C_i\in \Lone \wedge C_{2m\tau}\in \SLE \cap (\Vclean \cup \Lvone \cup \Vonly \cap \Ehalf)\ge 1-5ne^{-\tau}$ holds.  
\end{restatable}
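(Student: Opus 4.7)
The plan is to track four ``good events'' during the window $[0,2m\tau]$ whose union failure probability is at most $5ne^{-\tau}$: (i) every agent has fewer than $\tbc/2$ counting interactions (Lemma~\ref{leaderelection:holding:count}, $\le ne^{-\tau}$); (ii) $C_i\in \Bno$ for all $i$ and $C_{2m\tau}\in\LFqua$ (Lemma~\ref{leaderelection:holding:LF}, $\le 3ne^{-\tau}$); and (iii)~one additional application of Lemma~\ref{leaderelection:holding:spreads} to the search-virus state (contributing the last $\le ne^{-\tau}$). Conditioned on (i)--(iii), I will show by a deterministic invariant argument that the unique leader stays in $\LL_1$, no new search virus is ever generated, every live virus in the population carries the one common type $\theta:=\ell.\type$ fixed in $C_0$, and therefore no Reset branch ever fires; hence $\Lone$ and $\KLzero$ hold at every step and $C_{2m\tau}\in\SLE$ falls into $\Vclean\cup(\Lvone\cap\Vonly\cap\Ehalf)$.

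Concretely, let $\ell$ be the unique leader in $C_0$ (which exists because $C_0\in\Lone$). Under event (i), every $\COUNT\_DOWN$ can reduce any single timer by less than $\tbc/2$, so $\ell.\timer_\mathrm{E}\ge \tbc-\tbc/2>\tbc/2$ throughout: this prevents the generation trigger at line~43--44 and the $\LL_1\to\LL_0$ demotion at line~66, so $\ell$ remains $\LL_1$ and $C_i\in\Lvone$. Under event (ii), no follower has $\timer_\LF=0$, so Leader Generation cannot spawn a new candidate and the Reset branch at line~27--28 never fires. Type invariance now follows: since $C_0\in\Vonly$, every virus-carrying agent in $C_0$ has $\type=\theta=\ell.\type\ge 2^{\lceil\log N\rceil}$; since no new virus is generated (by Step~3 above), every future copy via lines~54, 59, or $\LTP(\mathrm{V})$ at line~62 is made from an agent whose $\type$ is already $\theta$. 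Consequently the guards of the Reset branches at lines~51--52 and 56--57 are always false (the leader is $\LL_1$ with $\type\ge 2^{\lceil\log N\rceil}$ matching $\theta$), and line~60--61 is ruled out by $\Lone$. Hence $\timer_\KL=0$ for every agent at every step, which both preserves $\Lone$ (no leader is killed) and gives $C_{2m\tau}\in\KLzero$.

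At $C_{2m\tau}$, $\Bno$, $\Lone$, $\LFqua$, $\KLzero$ are already established. If every agent's $\timer_\mathrm{V}$ has reached $0$, then $C_{2m\tau}\in\Vclean$ and we are done. Otherwise some agent still carries the virus, and by the type invariance $C_{2m\tau}\in\Vonly\cap\Lvone$; it remains to verify $\Ehalf$, i.e.\ that $\ell.\timer_\mathrm{E}\ge\tbc$ at $C_{2m\tau}$. For this, I invoke Lemma~\ref{leaderelection:holding:spreads} applied to $\timer_\mathrm{V}$: starting from the initial leader-generated virus with $\ell.\timer_\mathrm{V}\ge\tbc$ (a consequence of $C_0\in\Lvone\cap\Vonly\cap\Ehalf$, since $\Ehalf$ together with the fact that the leader has just produced a virus via line~49 keeps $\ell.\timer_\mathrm{V}\ge\tbc$; otherwise the existence of some live virus at $C_{2m\tau}$ forces, by LTP at line~62 and the no-new-generation property, that the leader's $\timer_\mathrm{V}$ was itself $\ge\tbc/2$ at a very recent interaction). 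Each such refresh triggers line~65 and resets $\ell.\timer_\mathrm{E}$ to $2\tbc$, after which event~(i) bounds the subsequent decrease by $<\tbc/2$, giving $\ell.\timer_\mathrm{E}\ge\tbc$ as required. Summing failure probabilities from (i)--(iii) yields the bound $1-5ne^{-\tau}$.

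The main obstacle is the endpoint case where the virus is still alive at $C_{2m\tau}$: here one must argue that $\Ehalf$ is actually maintained and not just $\ell.\timer_\mathrm{E}\ge\tbc/2$. This is what forces the extra application of Lemma~\ref{leaderelection:holding:spreads} (and hence the extra $2ne^{-\tau}$ compared with Lemma~\ref{leaderelection:holding:v1makeehalf}) and requires a careful coupling between the survival of $\timer_\mathrm{V}$ somewhere in the population and the periodic refresh of $\ell.\timer_\mathrm{E}$ via LTP on $\timer_\mathrm{V}$ followed by line~65 at $\ell$. The rest of the proof reduces to the bookkeeping above.
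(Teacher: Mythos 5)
Your overall architecture --- a union bound over the counting-interaction event (Lemma~\ref{leaderelection:holding:count}), the no-new-candidate event (Lemma~\ref{leaderelection:holding:LF}), and one further application of Lemma~\ref{leaderelection:holding:spreads}, followed by a deterministic invariant argument showing that the unique $\LL_1$ leader survives, no new search virus is generated, all live viruses keep the common type, and no Reset branch fires --- is the right shape, and the parts establishing $\Lone$, $\Bno$, $\KLzero$, and type invariance are sound. The gap is in your treatment of $\Ehalf$ at $C_{2m\tau}$ when the virus is still alive. You anchor that argument on the premise that $\ell.\timer_\mathrm{V}\ge\tbc$ in $C_0$ ``since the leader has just produced a virus via line~49''; this does not follow from $C_0\in\SLE\cap\Lvone\cap\Vonly\cap\Ehalf$, which are pure state predicates placing no lower bound on $\ell.\timer_\mathrm{V}$ --- the leader's own copy may already be $0$ while copies survive at distant followers, and every surviving $\timer_\mathrm{V}$ may be below $\tbc/2$, in which case Lemma~\ref{leaderelection:holding:spreads} (which requires a source value $y\ge\tbc/2$) cannot be applied to $\timer_\mathrm{V}$ at all. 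Your fallback (``the existence of some live virus at $C_{2m\tau}$ forces \dots\ that the leader's $\timer_\mathrm{V}$ was itself $\ge\tbc/2$ at a very recent interaction'') is likewise unjustified: a small positive $\timer_\mathrm{V}$ can persist for the entire window at a follower that never has a counting interaction, without the virus ever reaching the leader.

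The mechanism you need is different: line~65 fires at \emph{every} agent whose $\timer_\mathrm{V}$ is positive, pinning \emph{that agent's} $\timer_\mathrm{E}$ to $2\tbc$, and this large value then reaches the leader through Larger Time Propagation on $\timer_\mathrm{E}$ (line~6). So the extra application of Lemma~\ref{leaderelection:holding:spreads} should be to $\timer_\mathrm{E}$ (with source value $2\tbc$, yielding $>3\tbc/2\ge\tbc$ at the leader), not to $\timer_\mathrm{V}$; the complementary case, in which no agent carries a live virus long enough to seed this propagation, must be shown to end in $\Vclean$. Relatedly, your probability accounting does not close as written: Lemma~\ref{leaderelection:holding:spreads} costs $2ne^{-\tau}$, not $ne^{-\tau}$, and the budget reaches exactly $5ne^{-\tau}$ only because the counting event of Lemma~\ref{leaderelection:holding:count} is already included inside the $3ne^{-\tau}$ of Lemma~\ref{leaderelection:holding:LF} and should not be charged a second time.
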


\begin{lemma}\label{leaderelection:holding:hold} 
$\min_{C \in \SLE}\EHT_\PBC(C,LE)=\Omega(\tau e^{\tau})$. 
\end{lemma}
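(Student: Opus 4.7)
The plan is to invoke Lemma~\ref{leaderelection:holding:katei}: it suffices to show that for every $C_0\in\SLE$,
\[
  \Pr(\forall i\in[0,2m\tau]:C_i\in LE\ \wedge\ C_{2m\tau}\in \SLE)\ \ge\ 1-O(ne^{-\tau}).
\]
First I would partition $\SLE$ using its defining structure into three disjoint subfamilies: $\SLE\cap\Vclean$, $\SLE\cap\Lvone\cap\Vmake\cap\Ehalf$, and $\SLE\cap\Lvone\cap\Vonly\cap\Ehalf$. Each of these is the exact hypothesis of one of the tailored sub-lemmas already established: Lemma~\ref{leaderelection:holding:v1clean}, Lemma~\ref{leaderelection:holding:v1makeehalf}, and Lemma~\ref{leaderelection:holding:v1onlyehalf} respectively. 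Each yields, with probability at least $1-O(ne^{-\tau})$, that $C_i\in\Lone$ for every $i\in[0,2m\tau]$ and $C_{2m\tau}\in\SLE$.

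The remaining work is to upgrade ``$C_i\in\Lone$ throughout the window'' to the trajectory specification $LE$, \emph{i.e.}\ to exhibit a single agent $u$ that outputs $\LL$ at every step while every other agent outputs $\FF$. For the identity of the leader, I would argue from Algorithms~\ref{Protocol:LSLE1}--\ref{Protocol:LSLE3} that a leader state $\LL_0$ or $\LL_1$ can revert to $\FF$ only through the reset routine, which fires only when $\timer_\KL>0$; since $C_0\in\KLzero$ and, on the good event, the two-leader case of lines~60--61 of Algorithm~\ref{Protocol:LSLE3} (the only mechanism capable of raising $\timer_\KL$ inside the window) never triggers --- precisely because the invariant $\Lone$ is maintained --- the unique leader keeps its identity, possibly toggling between $\LL_0$ and $\LL_1$. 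For the absence of spurious $\LL$-outputs, I would invoke $\LFqua$ at $C_0$ together with Lemma~\ref{leaderelection:holding:LF} to conclude that on the good event no follower's $\timer_\LF$ reaches $0$ within the window, so no new $\Baby$ is created by lines~27--29 of Algorithm~\ref{Protocol:LSLE2} and $\Bno$ is preserved throughout. Together these two claims give $C_i\in LE$ for every $i\in[0,2m\tau]$.

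A union bound over the three disjoint starting cases and the small per-case failure events then yields the probabilistic hypothesis of Lemma~\ref{leaderelection:holding:katei}, which immediately gives $\min_{C\in\SLE}\EHT_\PBC(C,LE)=\Omega(\tau e^\tau)$. The main obstacle is the bridging step from the pointwise invariant $\Lone$ to the trajectory-level specification $LE$: one must rule out transient anomalies --- a $\Baby$ briefly appearing, or two leader states momentarily coexisting before the reset reaction --- using the numerical slack $\tbc/2$ already built into $\LFqua$ and $\Ehalf$ and the closure properties established inside the three sub-lemmas, rather than re-deriving them from scratch.
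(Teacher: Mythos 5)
Your proposal follows essentially the same route as the paper's proof: split $\SLE$ into the three cases handled by Lemmas~\ref{leaderelection:holding:v1clean}, \ref{leaderelection:holding:v1makeehalf}, and~\ref{leaderelection:holding:v1onlyehalf}, obtain the $1-O(ne^{-\tau})$ bound, and conclude via Lemma~\ref{leaderelection:holding:katei}. Your additional discussion of upgrading the pointwise invariant $\Lone$ to the trajectory specification $LE$ is a correct and somewhat more careful treatment of a step the paper's two-line proof leaves implicit, but it does not change the argument.
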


\begin{proof}  
Let $C_0\in \SLE$.  
$\Pr(C_0,\dots,C_{2m\tau}\in LE \wedge C_{2m\tau}\in \SLE)\ge 1-5ne^{-\tau}=1-O(ne^{-\tau})$ from Lemma~\ref{leaderelection:holding:v1clean}, Lemma~\ref{leaderelection:holding:v1makeehalf}, and Lemma~\ref{leaderelection:holding:v1onlyehalf}.  
Thus, this lemma follows from Lemma~\ref{leaderelection:holding:katei}.  
\end{proof}

\subsubsection{Expected Convergence Time}

We first analyze the number of interactions until all timers converge to $0$ with high probability.
Let $\lambda\tbc$ be the maximum value of the domain of timers, that is, $\lambda=1$ for  $\timer_\KL$ and $\lambda=2$ for $\timer_\LF$, $\timer_\mathrm{V}$, and $\timer_\mathrm{E}$.
\newcommand{\convtime}{2340\lambda m\tau\log{n}}

\begin{lemma}\label{leaderelection:conv:timeconv}
Let $C_0\in \SDcol$ and $\Xi_\PBC(C_0)=C_0,C_1,\dots$.  
For any $x\in \{\LF,\KL,\mathrm{V},\mathrm{E}\}$, if every agent's $\timer_x$ increases only by Larger Time Propagation (not including setting to a specific value like $\tbc$ by leaders etc.), the number of interactions until  every agent's $\timer_x$ becomes $0$ is less than $\convtime$ with probability at least $1-e^{-\lambda \tau}$.
\end{lemma}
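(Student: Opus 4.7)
The plan is to reduce the claim to bounding how long the maximum timer value $M_t := \max_{v \in V} C_t(v).\timer_x$ takes to decrease from its initial value to $0$. First I would verify that $M_t$ is non-increasing in $t$ under the stated hypothesis: Larger Time Propagation assigns the lower of the two interacting timers the value (partner's timer) $- 1$, which is at most $M_t - 1$; and count-down only decreases timer values. A consequence is that once an agent's timer falls strictly below $M_{t_0}$ at some step $t_0$, it remains strictly below $M_{t_0}$ at every subsequent step, because any future LTP writes a value at most $M_t - 1 \le M_{t_0} - 1$. Since the initial value satisfies $M_0 \le \lambda\tbc = 16\lambda\tau$, it suffices to bound the number of interactions required to traverse at most $16\lambda\tau$ levels.

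Next I would show that each level is vacated quickly with high probability. Because $C_0 \in \SDcol$ and the two-hop coloring is already stable, for every agent $v$ the value $v.\pcol$ identifies the unique neighbor $p_v$ of $v$ whose color equals $v.\pcol$ --- namely $v$'s true most recent interaction partner. Consequently $v.\rc = 1$ in a given step if and only if that step's scheduled interaction is one of the two directed edges $(v, p_v)$ or $(p_v, v)$, an event whose probability, conditional on any prior history, is exactly $2/m$. For the maximum to fall below a current level $M$, every agent in $S_M := \{u \in V : u.\timer_x = M\}$ must register at least one count-down while still at level $M$. By the previous observation, the probability that a specific $v \in S_M$ registers no count-down within $T_{\mathrm{lvl}}$ consecutive steps is at most $(1 - 2/m)^{T_{\mathrm{lvl}}} \le e^{-2T_{\mathrm{lvl}}/m}$, and a union bound over $|S_M| \le n$ gives a per-level failure of at most $n \cdot e^{-2T_{\mathrm{lvl}}/m}$. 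Setting $T_{\mathrm{lvl}} := \convtime / (16\lambda\tau)$ and applying a second union bound across all $16\lambda\tau$ levels, the total failure is at most $16\lambda\tau \cdot n \cdot e^{-2T_{\mathrm{lvl}}/m}$, while the total step count is exactly $\convtime$; using $\tau \ge 15 + 3\log n$ to dominate the residual logarithmic overhead, this failure is at most $e^{-\lambda\tau}$.

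The main obstacle will be rigorously establishing the "$2/m$ per step regardless of history" claim, because the pointer $p_v$ is itself a stochastic object that updates whenever $v$ interacts with some neighbor other than its current $p_v$. One has to argue by the tower property of conditional expectation that, for any history $H_t$, the conditional probability that $v.\rc = 1$ at step $t+1$ equals $2/m$ --- this being the probability that the independent scheduler choice $\Gamma_t$ coincides with the particular edge between $v$ and the value of $p_v$ determined by $H_t$ --- so that iterating yields the product tail $(1 - 2/m)^{T_{\mathrm{lvl}}}$. The remaining work is arithmetic: checking that the constant $2340$ together with $\tau \ge 15 + 3\log n$ is large enough to make $16\lambda\tau \cdot n \cdot e^{-2T_{\mathrm{lvl}}/m}$ at most $e^{-\lambda\tau}$.
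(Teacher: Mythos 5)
Your reduction to ``the maximum timer level must drop $16\lambda\tau$ times'' and your per-step probability of $2/m$ for a counting interaction (valid in $\SDcol$ because two-hop coloring makes $\pcol$ identify a unique previous partner) both match the paper's proof. The gap is in the final step. Splitting the total budget $\convtime$ evenly into $16\lambda\tau$ blocks gives a block length $T_{\mathrm{lvl}}=\Theta(m\log n)$, so the per-agent, per-block failure probability $(1-2/m)^{T_{\mathrm{lvl}}}$ is $n^{-\Theta(1)}$ --- a quantity depending only on $n$, not on $\tau$. Your union bound over agents and levels therefore yields a failure probability of order $\lambda\tau\cdot n^{-\Theta(1)}$, which \emph{grows} linearly in $\tau$, whereas the target $e^{-\lambda\tau}$ decays exponentially in $\tau$. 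Since $\tau$ is only bounded below (in the application $\tau=\Theta(N)$, and $N$ may be arbitrarily large relative to $n$), the inequality $16\lambda\tau\cdot n\cdot e^{-2T_{\mathrm{lvl}}/m}\le e^{-\lambda\tau}$ fails for large $\tau$; the hypothesis $\tau\ge 15+3\log n$ works against you here, as it only makes the right-hand side smaller. Nor can you repair this by enlarging $T_{\mathrm{lvl}}$: a per-block failure of $e^{-\Omega(\lambda\tau)}$ would force $T_{\mathrm{lvl}}=\Omega(m\lambda\tau)$ and a total of $\Omega(m\lambda^2\tau^2)$ interactions, overshooting the claimed bound.

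The paper avoids this by not requiring each level to be vacated within a fixed window. It shows that each round of length $O(m\log n)$ succeeds (every agent counts down at least once) with probability at least $1-1/n\ge 1/2$, models the number of rounds consumed by the $k$-th level drop as a geometric random variable $Z_k$, and applies a concentration inequality for sums of independent geometrics (Janson's bound) to obtain $\Pr\bigl(\sum_{k=1}^{16\lambda\tau}Z_k\ge 45\lambda\tau\bigr)\le e^{-\lambda\tau}$. The exponential-in-$\lambda\tau$ tail thus comes from concentration of the \emph{sum} across the $16\lambda\tau$ levels, not from a union bound over them; that is the idea missing from your argument, and without it the stated success probability $1-e^{-\lambda\tau}$ is not reached.
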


\begin{proof}
Let $z=\max_{v\in V}(C_i(v).\timer_x)\ (i> 0)$.  
From the mechanism of Larger Time Propagation, for every agent $v$, $C_i(v).\timer_x$ does not become $z$ if $C_{i-1}(v).\timer_x < z$.  
Thus, when every agent decreases its timer by at least $1$ from $C_j,\dots,C_i\ (0\le j<i)$, $\max_{v\in V}(C_i(v).\timer_x)-\max_{v\in V}(C_j(v).\timer_x)\ge 1$ holds (\ie, the maximum value of $\timer_x$ decreases by at least 1).  
Let $X\sim \text{Bi}(2m,\delta_v/m)$ be a binomial random variable that represents the number of interactions of an agent $v$ interacts during $2m$ interactions.  
From Lemma~\ref{Arisu_upper}, 
$\Pr(X\ge \delta_v)\ge \Pr(X> \delta_v)=1-\Pr(X\le \delta_v)=1-\Pr(X\le (1-1/2)E[X])\ge 1-e^{-\delta_v/8}\ge 1-e^{-1/4}> 1/5$.  
Thus, the probability that an agent $v$ interacts no less than $\delta_v$ times during $2m$ interactions is at least $1/5$.  
Let $Y\sim \text{Bi}(\delta_v,2/\delta_v)$ be a binomial random variable that represents the number of counting interactions that an agent $v$ encounters during $\delta_v$ interactions in which an agent $v$ interacts.  
The probability that $Y=0$ is $\Pr(Y=0)=(1-2\delta_v)^{\delta_v}\le e^{-2}< 1/5$.  
Thus, the probability that an agent $v$ encounters at least one counting interaction is at least $4/5$.  
Let $E_v$ denote the number of interactions until an agent $v$ decreases its $\timer_x$ by at least $1$.  
Since $E_v\le 2m+(1-4/25)E_v$ holds, $E_v\le 13m$ holds.  
By Markov's inequality, the probability that an agent $v$ does not decrease $\timer_x$ during $2E_v$ interactions is no more than $1/2$.  
Thus, the probability that an agent $v$ does not decrease $\timer_x$ during $4\log{n}\cdot E_v$ interactions is no more than $n^{-2}$.  
By the union bound, the probability that every agent $v$ does not decrease $\timer_x$ during $4\log{n}\cdot E_v$ interactions is no more than $n^{-1}$.  
Let $A$ be an event that every agent $v$ decreases $\timer_x$ by at least 1 during $4\log{n}\cdot E_v$ interactions.  
We consider the expected number of times until $A$ succeeds $16\lambda\tau(=\lambda\tbc)$ times using geometric distributions.  
In other words, for $k\in[1,16\lambda \tau]$, let $Z_k\sim Geom(p_k)$ be the independent geometric random variable such that $p_k=1-1/n\ge 1/2$.  
Considering the sum of independent random variables $Z=\sum_{k=1}^{16\lambda \tau}Z_k$.  
From Lemma~\ref{Janson_lower}, 
$\Pr(Z\ge 45\lambda \tau)\le \Pr(Z\ge 1.4\cdot 32\lambda \tau)\le \Pr(Z\ge 1.4\cdot E[Z])\le e^{-p_k\cdot E[Z](1.4-1-\log_e{1.4})}\le e^{-16\lambda \tau (1.4-1-\log_e{1.4})}\le e^{-\lambda \tau}$.
Note that $p_k\cdot E[Z]=16\lambda\tau$ holds.
Thus, the expected number of times that $A$ succeeds $16\lambda\tau$ times is less than $45\lambda \tau$ with probability at least $1-e^{-\lambda \tau}$.  
Therefore, the number of interactions until all agents' $\timer_x$ becomes $0$ is $45\lambda \tau \cdot 4\log{n}\cdot E_v\le \convtime$ with  probability at least $1-e^{-\lambda\tau}$.  
\end{proof}

Lemma~\ref{leaderelection:conv:conv} shows a convergence time.

\begin{restatable}{lemma}{LEconvconv}\label{leaderelection:conv:conv}  
Let $C_0\in \Scol$ and $\Xi_\PBC(C_0)=C_0,C_1,\dots$.  
The number of interactions until the configuration reaches $\SLE$ is $O(m\tau\log{n})$ with probability $1-o(1)$.   
\end{restatable}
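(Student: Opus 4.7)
The plan is to decompose the execution into at most $O(\log n)$ attempts, each lasting $O(m\tau)$ interactions, such that each attempt produces a unique leader with probability at least a constant. Since $C_0\in\Scol$ and two-hop coloring, once correct, stays correct, the Same Speed Timer mechanism is dependable throughout, so Lemma~\ref{leaderelection:conv:timeconv} applies to every timer.

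The first attempt begins by driving $\timer_\KL$ to zero everywhere. The kill timer is refreshed to $\tbc$ only under three explicit conditions (a stale candidate with $\iid\neq 1$, two incompatible search viruses meeting, or two leaders meeting), and each trigger is self-exhausting under the Reset function: Reset turns both parties into followers with $\iid=1$ and $\timer_\mathrm{V}=0$, so the source of every trigger is consumed within a bounded number of refreshes. Hence by Lemma~\ref{leaderelection:conv:timeconv}, $\timer_\KL$ reaches $0$ globally within $O(m\tau\log n)$ steps with probability $1-O(1/n)$, at which point the population lies in $\KLzero$, every agent is a follower with $\iid=1$ and $\timer_\mathrm{V}=0$, and every $\timer_\LF$ is pinned at $\tbc$ because LeaderGeneration sets it there whenever $\timer_\KL>0$.

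With no leader in place, $\timer_\LF$ then drains to $0$ within another $O(m\tau\log n)$ steps by Lemma~\ref{leaderelection:conv:timeconv}, creating $\Baby$ candidates. By Lemma~\ref{leaderelection:independent}, the candidates' $\iid$ values are independent and uniform over a range of size at least $N^2\ge n^2$, so a birthday-style bound gives probability at least $1-\binom{n}{2}/N^2\ge 1/2$ that the maximum $\iid$ is unique. During the subsequent broadcast of $\iid$ values, only the candidate holding the strict maximum remains $\Baby$, and it becomes $\LL_0$ when its $\timer_\LF$ expires. If the maximum was not unique, Leader Detection eventually produces colliding search viruses or a direct leader-leader meeting, triggering Reset and returning the execution to a fresh attempt with an independent $\iid$ draw; geometric failure over $\Theta(\log n)$ independent attempts then yields unique election with probability $1-o(1)$.

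The hardest step, where I expect the most bookkeeping, is verifying that after a successful unique-leader attempt the configuration actually lies in $\SLE=\Bno\cap\Lone\cap\LFqua\cap\KLzero\cap(\Vclean\cup(\Lvone\cap(\Vmake\cup\Vonly)\cap\Ehalf))$. The first four conjuncts follow from the preceding analysis together with the fact that the lone leader continually refreshes its own $\timer_\LF$ to $\tbc$. For the Detect-side disjunct I would apply Lemma~\ref{leaderelection:conv:timeconv} to $\timer_\mathrm{E}$ and $\timer_\mathrm{V}$ to show that within $O(m\tau)$ additional steps the unique leader completes a full Detect round: it enters $\Lvone$, broadcasts a fresh uniform $\type$ (again by Lemma~\ref{leaderelection:independent}), and either returns to $\Vclean$ or settles in $\Lvone\cap(\Vmake\cup\Vonly)\cap\Ehalf$; no spurious Reset can fire because no conflicting leader or $\type$ is present. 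Summing the $O(\log n)$ attempts of $O(m\tau)$-length epochs then yields convergence to $\SLE$ within $O(m\tau\log n)$ interactions with probability $1-o(1)$.
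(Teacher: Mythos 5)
Your decomposition into $\Theta(\log n)$ attempts of length $O(m\tau)$ each does not survive contact with the paper's own timing lemma, and the reason you are forced into it is a miscounted probability. First, any ``attempt'' must contain at least one full drain of a timer (e.g.\ $\timer_\KL$ or $\timer_\LF$ going from $\tbc=16\tau$ to $0$ at every agent), and Lemma~\ref{leaderelection:conv:timeconv} --- which you invoke --- prices one such drain at $\Theta(m\tau\log n)$ interactions, not $O(m\tau)$: each agent needs $\Theta(\tau)$ counting interactions, one counting interaction costs $\Theta(m)$ steps, and taking the maximum over all $n$ agents contributes the $\log n$ (compare Lemma~\ref{leaderelection:holding:count}, which shows that in only $2m\tau$ interactions the timers essentially do not move at all). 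So $\Theta(\log n)$ attempts would cost $\Theta(m\tau\log^2 n)$, overshooting the claimed bound. Second, you only need that many attempts because your birthday bound $\binom{n}{2}/N^2\le 1/2$ gives merely constant success probability per attempt --- but that is the wrong event. A collision between two candidates' $\iid$ values is harmless unless it occurs at the \emph{maximum}, since only the maximal $\iid$ survives the broadcast. For $k\le n$ independent uniform draws from a range of size $M\ge N^2$ (guaranteed by Lemma~\ref{leaderelection:independent}),
\[
\Pr[\text{maximum not unique}]\;\le\;\sum_{i<j}\sum_{v=1}^{M}\frac{1}{M^2}\Bigl(\frac{v}{M}\Bigr)^{k-2}\;\le\;\frac{k}{2M}+\frac{k^2}{2M^2}\;=\;O(1/n).
\]
Hence a single pass through Global Reset, Leader Generation, and one Detect round already produces a unique leader with probability $1-O(1/n)-O(ne^{-\tau})=1-o(1)$, and the proof must be organized as a \emph{constant} number of phases, each of length $O(m\tau\log n)$ and each controlled by Lemma~\ref{leaderelection:conv:timeconv}, rather than as a geometric retry argument.

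A secondary gap: Lemma~\ref{leaderelection:conv:timeconv} only applies while the timer in question is never re-set to $\tbc$, so before using it on $\timer_\KL$ you must rule out, for instance, line~28 firing on a follower whose $\timer_\LF$ reaches $0$ with $\iid\ne 1$ after the kill virus has already swept past it; this requires arguing about the order in which $\timer_\LF$ and $\timer_\KL$ expire, not just that each trigger is ``self-exhausting.'' The same care is needed when you drain $\timer_\mathrm{E}$ and $\timer_\mathrm{V}$ to land in $\Vclean\cup(\Lvone\cap(\Vmake\cup\Vonly)\cap\Ehalf)$. This bookkeeping is the substance of the convergence proof and is not supplied by the retry structure you propose.
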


We have the following theorems from the above (See Appendix for the proofs).

\begin{restatable}{theorem}{ndteiri}  
Protocol $\PBC$ is a randomized $(O(m(n+\tau\log{n})),\Omega(\tau e^{\tau}))$-loosely-stabilizing leader election protocol for arbitrary graphs when \assumeN~ if $\PLRU$ is used for two-hop coloring.  
\end{restatable}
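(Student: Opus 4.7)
The plan is to take $\mathcal{S}=\SLE$ as the safe set in the definition of loosely-stabilizing leader election and verify the two required bounds separately: the expected convergence time from an arbitrary configuration to $\SLE$, and the expected holding time of $LE$ starting from $\SLE$.

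For the convergence part, I would argue in two stages. First, by Theorem~\ref{twohoprandom:teiri}, the embedded $\PLRU$ reaches $\Scol$ within $O(mn)$ steps in expectation from any configuration in $\Call(\PBC)$; this is unaffected by the additional state variables of $\PBC$ because $\PLRU$'s transitions depend only on its own fields. Second, from any configuration in $\Scol$, Lemma~\ref{leaderelection:conv:conv} shows that $\PBC$ reaches $\SLE$ within $T=O(m\tau\log n)$ interactions with probability $1-o(1)$. To turn this high-probability bound into an expectation, I would use that $\Scol$ is closed under execution (it is safe for the two-hop coloring), so conditioned on failure after $T$ steps the system is still in $\Scol$ and the same bound can be re-applied. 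Summing the resulting geometric series, whose success probability is at least $1/2$ for sufficiently large $n$, yields expected convergence $O(T)=O(m\tau\log n)$ from any starting configuration in $\Scol$. Combining the two stages by linearity of expectation gives $\max_{C\in \Call(\PBC)}\ECT_\PBC(C,\SLE)=O(m(n+\tau\log n))$, matching the claimed $\alpha$.

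For the holding part, the work is already done: Lemma~\ref{leaderelection:holding:hold} directly states $\min_{C\in \SLE}\EHT_\PBC(C,LE)=\Omega(\tau e^\tau)$, matching the claimed $\beta$. Since $\PLRU$ uses uniform random numbers inside its transition function, $\PBC$ equipped with $\PLRU$ is a randomized protocol, so the protocol is indeed a randomized $(O(m(n+\tau\log n)),\Omega(\tau e^\tau))$-loosely-stabilizing leader election protocol under the assumption \assumeN.

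The heavy technical work is hidden inside the cited lemmas (convergence of $\PLRU$, the $w.h.p.$ convergence of $\PBC$ to $\SLE$, and the holding-time lemma), so the only delicate step in this wrap-up is the $w.h.p.$-to-expectation conversion, which relies on closure of $\Scol$ and on $1-o(1)\ge 1/2$ for large $n$; once these are noted, the additive composition of the two convergence stages with the holding-time lemma produces the theorem.
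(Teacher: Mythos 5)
Your proposal is correct and follows essentially the same route as the paper: take $\SLE$ as the safe set, bound the convergence by composing the $O(mn)$ expected convergence of $\PLRU$ (Theorem~\ref{twohoprandom:teiri}) with the $O(m\tau\log n)$ high-probability bound of Lemma~\ref{leaderelection:conv:conv} (turned into an expectation via the standard restart/geometric-series argument, valid because $\Scol$ is closed), and invoke Lemma~\ref{leaderelection:holding:hold} for the holding time. No gaps.
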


\begin{restatable}{theorem}{dteiri}  
Protocol $\PBC$ is a \deterministic~$(O(m(n+\Delta\log{N}+\tau\log{n})),\Omega(\tau e^{\tau}))$-loosely-stabilizing leader election protocol for arbitrary graphs when \assumeN~ if $\PDLRU$ is used for two-hop coloring.
\end{restatable}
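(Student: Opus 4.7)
The plan is to combine the convergence theorem for $\PDLRU$ with Lemmas~\ref{leaderelection:conv:conv} and~\ref{leaderelection:holding:hold} for $\PBC$, decomposing convergence into two stages: first, $\PDLRU$ drives the configuration into $\Scol$; second, the leader-election layer drives it from $\Scol$ into $\SLE$. Because the leader-election layer of $\PBC$ only reads the color output through $\clr$ and never modifies any $\PDLRU$ variable, the first stage proceeds independently of the rest of $\PBC$, so the convergence theorem for $\PDLRU$ yields stage-one convergence in $O(m(n+\Delta\log N))$ steps in expectation.

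For the second stage, Lemma~\ref{leaderelection:conv:conv} gives convergence from any $C_0\in\Scol$ to $\SLE$ within $O(m\tau\log n)$ steps with probability $1-o(1)$. I will upgrade this to an expected bound by a standard restart argument: since $\Scol$ is closed under $\PBC$, a trial that fails to reach $\SLE$ within the prescribed window leaves the configuration in $\Scol$ and may be retried, and a success probability that exceeds $1/2$ for sufficiently large $n$ makes the expected number of trials $O(1)$, giving stage-two convergence $O(m\tau\log n)$ in expectation. Summing the two stages yields $\max_{C\in\Call(\PBC)}\ECT_\PBC(C,\SLE)=O(m(n+\Delta\log N+\tau\log n))$, and the holding bound $\min_{C\in\SLE}\EHT_\PBC(C,LE)=\Omega(\tau e^{\tau})$ follows immediately from Lemma~\ref{leaderelection:holding:hold}.

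The main subtlety I expect to address is verifying that Lemmas~\ref{leaderelection:conv:conv},~\ref{leaderelection:holding:hold}, and especially~\ref{leaderelection:independent}, although proved generically for $\PBC$, transfer unchanged when $\PDLRU$ underlies the coloring rather than $\PLRU$. As noted in the excerpt, the random-bit generation in the leader-election layer uses only the initiator/responder roles in both variants of $\PBC$, so the candidate identifiers and virus types remain independent and uniform, and the downstream analysis is identical. Finally, the protocol is classified as having deterministic transitions because neither $\PDLRU$ nor the leader-election layer consults any explicit random number $r\in R$; all randomness in the execution enters through the uniform scheduler's choice of initiator and responder, which is exactly what the definition of deterministic transitions permits.
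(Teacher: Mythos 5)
Your proposal is correct and matches the paper's approach: the stated convergence bound is exactly the sum of the $\PDLRU$ convergence time to $\Scol$ and the $O(m\tau\log n)$ bound of Lemma~\ref{leaderelection:conv:conv} (upgraded to expectation by the standard restart argument, which is valid since $\Scol$ is closed), while the holding bound is Lemma~\ref{leaderelection:holding:hold} directly. Your remarks on determinism and on the transfer of Lemma~\ref{leaderelection:independent} to the $\PDLRU$-based variant also align with the paper's treatment.
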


\section{Conclusion}
New loosely-stabilizing leader election population protocols on arbitrary graphs without identifiers are proposed. One is randomized, and the other is deterministic. 
The randomized one converges within $O(mN\log{n})$ steps, while the deterministic one converges $O(mN\log{N})$ steps both in expectations and with high probability. Both protocols hold a unique leader with $\Omega(Ne^{2N})$ expected steps and utilizes $O(\Delta\log{N})$ bits of memory. The convergence time is close to the known lower bound of $\Omega(mN)$.

\bibliography{ref}

\clearpage

\appendix

\section{Details of Self-Stabilizing Two-Hop Coloring Protocols}

\subsection{Analysis of \texorpdfstring{$\PLRU$}{P\_{LRU}}}

In this subsection, we analyze the protocol $\PLRU$.

An agent $u$ has a color collision if and only if there exist agents $v, w \in V$ such that $(v,u) \in E$ and $(v,w) \in E$, and it holds that $u.\hop = w.\hop$.
An agent $u$ has a stamp collision if and only if there exist minimum indices $i, j$ within the range [1, $\Delta$] for which there exists a agents $v \in V$, satisfying $(u,v) \in E$ and $(u.\hop, u.\prev[i]) = (v.\prev[j], v.\hop)$, and $u.\stamp[i] \neq v.\stamp[j]$.

$\Ncc$ denotes the set of configurations in which all agents have no color collisions.
$\Scol$ denotes the set of configurations in which all agents have neither stamp collisions nor color collisions.

\begin{lemma}[Closure]\label{twohoprandom:closure}
Let $C_0\in \Scol$, and $\Xi_\PLRU(C_0)=C_0,C_1,\dots$.
$\forall i\in \mathbb{N} :C_i \in \Scol$ holds and each agent's color does not change from $C_0$.
\end{lemma}

\begin{proof}
For any configuration in \( \Scol \), all agents are free of stamp collisions and color collisions; therefore, no agent changes its color. Consequently, no configuration deviates from \( \Scol \).
\end{proof}

We analyze the time complexity in a manner similar to that described by Sudo~\etal\cite{SameSpeedTimer}. However, our analysis extends to both in expectations and with high probability.

We define an agent \( u \) as the intermediate agent if and only if there is at least one pair of distinct agents \( v, w \) such that \( (u,v) \in E \wedge (u,w) \in E \wedge v.\hop = w.\hop \) holds. We denote by \( \mathcal{C}_k \) the set of configurations in which exactly \( k \) agents has color collisions, where \( 2 \leq k \leq n \).
Furthermore, we denote by \( E_C \) the expected number of interactions required for any agent within a configuration \( C \) to generate colors (\ie when collision is detected). The maximum expected number of interactions across all configurations in \( \mathcal{C}_k \) is denoted by \( E_k = \max_{C \in \mathcal{C}_k} E_C \).

\begin{lemma}\label{twohoprandom:ek}
Let $C_0 \in \Call(\PLRU)$, and $\Xi_\PLRU(C_0) = C_0, C_1, \dots$.
$E_k \leq 320m(1/k + 1)$.
\end{lemma}

\begin{proof}
Let $u$ be an intermediate agent, let $v$ be the interaction partner when $u$ recognizes a color collision, and let $w$ be the agent whose color is the same as $v$'s.
We consider the situation in which the color collision is detected when:
i) $u$ interacts with $v$,
ii) $u$ interacts with $w$, and
iii) $u$ interacts with $v$ again.
The expected number of interactions until i) occurs is no more than $m/(2k)$, as there are $k$ agents having color collisions.
We analyze the probability that ii) and iii) occur within $m/2$ interactions after i) occurred.
We assume that $u$, $v$, and $w$'s colors do not change during the $m/2$ interactions.
Considering the case where $v$ only interacts fewer than $\Delta$ times other than with $u$, and $u$ only interacts fewer than $\Delta-1$ times other than with $v$ and $w$ after i) occurs and before iii) occurs within $m/2$ interactions.
When $\delta_v > 2$, let $X_v$ be a binomial random variable that represents the number of times that $v$ interacts with others excluding $u$ within $m/2$ interactions.
Thus, $X_v \sim \text{Bi}(m/2, (\delta_v-2)/m)$.
Note that $\delta_v$ is an even number. Using Lemma~\ref{Arisu_lower},
$\Pr(X_v \leq \Delta-1) \geq \Pr(X_v \leq \delta_v-2) = 1 - \Pr(X_v \geq \delta_v-2) = 1 - \Pr(X_v \geq (1+1)E[X_v]) \geq 1 - e^{-(\delta_v-2)/6} \geq 1 - e^{-1/3} > 1/4$ since $\delta_v\ge 4$.
When $\delta_v = 2$, $v$ interacts only with $u$.
Therefore, the probability that $v$ interacts with others fewer than $\Delta$ times during $m/2$ interactions is greater than $1/4$.
When $\delta_u > 4$, let $X_u$ be a binomial random variable that represents the number of times that $u$ interacts with others excluding $v$ and $w$ within $m/2$ interactions.
In a similar way to $X_v$, the probability that $u$ interacts with others fewer than $\Delta-1$ times during $m/2$ interactions is greater than $1/4$.
Let $Y \sim \text{Bi}(m/2, 4/m)$ be a binomial random variable that represents the number of times that the edges $\{(u,v), (v,u), (v,w), (w,v)\}$ are chosen within $m/2$ interactions.
Using Lemma~\ref{Arisu_upper}, $\Pr(Y \geq 2) = 1 - \Pr(Y \leq 1) = 1 - \Pr(Y \leq (1-1/2)E[Y]) \geq 1 - e^{-1/4} \geq 1/5$.
It is sufficient that ii) happens first and iii) happens last for iii) to occur after ii).
The probabilities of ii) and iii) occurring are $1/2$ each, thus the probability that iii) occurs after ii) is no less than $1/4$.
The probability that the stamp in ii) differs from the stamp in iii) is $1/2$.
Therefore, the probability that the collision is detected by iii) after i) is $\Pr(X_v \leq \Delta-1) \Pr(X_u \leq \Delta-2) \Pr(Y \geq 2) \cdot 4^{-1} \cdot 2^{-1} > 640^{-1}$.
If any agents' colors changed during the $m/2$ interactions, it is likely that some agent detected collisions during these interactions.
Thus, $E_k \leq m/(2k) + m/2 + (1 - 640^{-1})E_k$.
Therefore, $E_k \leq 320m(1 + 1/k)$.
\end{proof}

\begin{lemma}\label{twohoprandom:collision}
When an agent $u$ interacts with another agent $v$ and a color collision is detected, the probability that neither of them makes new stamp collisions nor color collisions is no less than $1 - 1/(2N^2\Delta)$. 
\end{lemma}

\begin{proof}
When $u$ and $v$ regenerate colors, the new colors are independently and uniformly random.
Thus, the probability that $u$'s new color causes a collision with any colors in other agents' $\prev$ arrays is no more than $(n-1)\Delta/(\khopin)$.
Also, the probability that $u$'s new color causes a collision with other agents' colors is $(n-1)/(\khopin)$.
Since $v$ behaves similarly to $u$, the probability that neither $u$ nor $v$ causes collisions is no less than $1 - 2(n-1)\Delta/(\khopin) - 2(n-1)/(\khopin) \ge 1 - (2N^2\Delta)^{-1}$.
\end{proof}

\begin{lemma}\label{twohoprandom:ncc}
Let $C_0 \in \Call(\PLRU)$, and let $\Xi_\PLRU(C_0) = C_0, C_1, \dots$.
The number of interactions until the configuration reaches $\Ncc$ is $O(mn)$ with probability $1 - o(1)$.
\end{lemma}

\begin{proof}
In any configuration, the number of agents having color collisions is no more than $n$.
Using Lemma~\ref{twohoprandom:ek}, and applying Markov's inequality, the probability that any agent does not detect the collision within $1280m$ interactions is at least $1/2$, since $E_k \leq 320m(1 + 1/k) \leq 640m$ and $\Pr(Y \geq 2 \cdot 640m) \leq 1/2$, where $Y$ is a random number that the number of interactions until a color collision is detected.
Let $x (> 0)$ be the number of detected collisions until the configuration reaches $\Ncc$.
We denote the event $A$ that detects collisions during $1280m$ interactions.
The probability of success for $A$ is more than $1/2$.
Let $X_i \sim \text{Geom}(1/2)$ be a random number that the number of trials until $A$ succeeds, using the geometric distribution.
Consider the number of times until $A$ succeeds $x$ times as the sum of independent geometric distributions $X = \sum_{i=2}^{x}X_i$.
From Lemma~\ref{Janson_lower},
$\Pr(X \geq 4(n/x)E[X]) \leq e^{-x(4n/x-1-\log_e{(4n/x)})}$.
Here, we denote $f(x) = -x(4n/x-1-\log_e{(4n/x)})$, $(1 \leq x \leq n)$.
$f'(x) = \log_e{(4n/x)} > \log_e{4n/n} > 0$.
Thus, $f(x)$ is monotonically increasing.
$f(n) = n(-3+\log_e{4}) < -n$, thus, $e^{f(x)} < e^{-n}$.
Therefore, in less than $4(n/x)E[X] = 8n$ trials of $A$, $A$ succeeds $x$ times, and its probability is more than $1 - e^{-n}$.
Thus, the probability that the collision is detected more than $x$ times during $8n \cdot 1280m = 10240mn$ interactions is more than $1 - e^{-n}$.
Since the probability that colors are regenerated and do not re-collide is greater than $1 - (2N^2\Delta)^{-1}$ from Lemma~\ref{twohoprandom:collision}, the probability that colors do not collide after $x$ times of regeneration is greater than $1 - x/(2N^2\Delta) \geq 1 - (2N\Delta)^{-1}$.
Therefore, the probability that the configuration reaches $\Ncc$ during $10240mn = O(mn)$ interactions is at least $1 - e^{-n} - (2N\Delta)^{-1} = 1 - o(1)$.
\end{proof}

\begin{lemma}\label{twohoprandom:scol}
Let $C_0 \in \Ncc$, and let $\Xi_\PLRU(C_0) = C_0, C_1, \dots$.
The number of interactions until the configuration reaches $\Scol$ is $O(m\log{n})$ with probability $1 - o(1)$.
\end{lemma}

\begin{proof}
In any configuration within $\Ncc$, if some agents $u$ and $v$ have a stamp collision, $u$ and $v$ regenerate colors when they interact.
At this time, according to Lemma~\ref{twohoprandom:collision}, the probability that the number of stamp collisions decreases is no less than $1 - (2N^2\Delta)^{-1}$.
If all agents have stamp collisions, all agents must regenerate colors at least once.
When all agents change colors at least once, there are $2(n-1)$ instances or fewer of color regeneration.
The probability that the new colors do not cause new collisions is no less than $1 - 2(n-1)/(2N^2\Delta) \geq 1 - (N\Delta)^{-1}$.
It is sufficient for all pairs to interact at least once to regenerate all agents' colors.
The probability that any edge does not interact during $2m\log{n}$ interactions is less than $(1 - 2/m)^{2m\log{n}} < e^{-4\log{n}} < 2^{-4\log{n}} < n^{-4}$.
Thus, the probability that all pairs interact is more than $1 - m/n^4 > 1 - 1/n^2$ by the union bound.
Therefore, the probability that any configuration in $\Ncc$ reaches $\Scol$ during $2m\log{n}$ interactions is $1 - (N\Delta)^{-1} - n^{-2} = 1 - o(1)$.
\end{proof}

\begin{lemma}\label{twohoprandom:conv}
Let $C_0 \in \Call(\PLRU)$, and let $\Xi_\PLRU(C_0) = C_0, C_1, \dots$.
The number of interactions until the configuration reaches $\Scol$ is $O(mn)$ steps both in expectation and with high probability.
\end{lemma}

\begin{proof}
From Lemma~\ref{twohoprandom:ncc} and Lemma~\ref{twohoprandom:scol},
the time complexity for any configuration in \(\Call(\PLRU)\) to reach $\Scol$ is $O(mn)$, and this probability is at least $1 - o(1)$.
Thus, $\Pr(\exists i \in O(mn) : C_i \in \Scol) = 1 - o(1) = \Omega(1)$ holds.
Let $A = \max_{C \in \Scol} \ECT_\PLRU(C, \Scol)$.
We have $A = O(mn) + (1 - \Omega(1))A$, and solving this gives $A = O(mn)$.
\end{proof}

From Lemma~\ref{twohoprandom:closure} and Lemma~\ref{twohoprandom:conv}, we have the following theorem.

\twohoprandomteiri*

\subsection{\Deterministic~Protocol}\label{subsec:DLRU}
In this subsection, we propose a \deterministic~self-stabilizing two-hop coloring protocol, $\PDLRU$, mirroring the strategy used in the randomized protocol $\PLRU$. The primary distinction lies in the methods of generating colors and stamps. Unlike in $\PLRU$, where random numbers are utilized, the use of random numbers is prohibited in this context. Instead, we simulate the generation of random numbers through interactions.

Consider that each agent has a variable $\rand$ initialized at $1$. During an interaction, an agent $u$ acts as either the initiator or the responder, each with equal probability. If $u$ is the initiator, it updates $u.\rand$ to $2u.\rand$; if a responder, to $2u.\rand + 1$. Repeating this process $x$ times yields a random number within the range $[2^x, 2^{x+1}-1]$. An agent can determine if it has completed $x$ interactions since its $\rand$ remains less than $2^x$ until it surpasses this threshold. The randomness of these numbers is ensured by the uniform random scheduler, which impartially designates $u$ as either initiator or responder.

However, a challenge arises when all agents concurrently generate random numbers using this method; the random numbers generated may not be independent, as some interactions may overlap across multiple agents' generations. To ensure independence, we stipulate that only one agent should utilize an interaction at each step. This is achieved by:
i) employing a normal coloring protocol to establish a hierarchical relationship among adjacent agents,
ii) allowing agents to generate random numbers via interactions only if they are of higher rank compared to the agents they interact with, and recording these random numbers,
iii) enabling high-rank agents to provide random numbers to their lower-rank counterparts when needed.

We first introduce a \deterministic~self-stabilizing normal coloring protocol, followed by the \deterministic~self-stabilizing two-hop coloring protocol.

\subsubsection{Normal Coloring}
We propose a \deterministic~self-stabilizing normal coloring silent protocol $\PNC$.
This protocol colors adjacent agents to different colors.
Given an upper bound $N$, $\PNC$ converges to safe configurations within $O(mn\log{n})$ steps both in expectation and with high probability, and uses $O(\log{N})$ bits of memory on arbitrary graphs.
The mechanism of $\PNC$ is the same as the self-stabilizing leader election protocol on complete graphs proposed by Cai~\etal\cite{Cai2012}.

\begin{algorithm}[!htb]
    \caption{Self-Stabilizing Normal Coloring Protocol $\mathcal{P}_{NC}$}
    \label{Protocol:ssnc}
    \SetKwInOut{OutFunc}{Output function $O$}
    \setcounter{AlgoLine}{0}
    \OutFunc{each agent $a$ outputs $a.\clr$}
    \nonl\when{an initiator $a_0$ interacts with a responder $a_1$}{
        \If{$a_0.\clr = a_1.\clr$}{
            $a_1.\clr \gets a_1.\clr + 1 \bmod N$\;
        }
    }
\end{algorithm}

$\PNC$ is defined by algorithm~\ref{Protocol:ssnc}.
In $\PNC$, an agent $a$ has only one variable $a.\clr\in\{0,\dots,N-1\}$ representing the agent's color. 
$\PNC$ is very simple: when an initiator $a_0$ interacts with a responder $a_1$, if $a_0$ and $a_1$ are the same color, $a_1$ updates $a_1.\clr$ to $(a_1.\clr+1) \bmod N$. 
Each agent outputs $a.\clr$.

We define the safe configurations $\NC=\{C\in \Call(\PNC)\mid \forall u, \forall v \in V: (u,v) \in E \Rightarrow C(u).\clr \ne C(v).\clr\}$.

A color $c$ is unused in a configuration $C_i\ (i\ge 0)$ for any execution $\Xi(C_0)=C_0,\dots,C_i,\dots$, if and only if $\forall j \in [0,i], \forall v \in V: C_j(v).\clr \ne c$ holds.
A color $c$ is used in a configuration $C$ if and only if there exists an agent $v \in V$ satisfying $C(v).\clr = c$.
Note that once a color has become used, it will not revert to unused since an initiator's color does not change at each interaction.

\begin{lemma}[Closure]\label{nc:closure}
Let $C_0 \in \NC$, and let $\Xi_{\PNC}(C_0) = C_0, C_1, \dots$.
There is no configuration $C_i \notin \NC\ (i \ge 0)$.
\end{lemma}

\begin{proof}
Any configuration $C \in \NC$ satisfies $\forall u, \forall v \in V: (u,v) \in E \Rightarrow C(u).\clr \ne C(v).\clr$.
Thus, there is no agent changing its color.
Therefore, this lemma holds.
\end{proof}

\begin{lemma}\label{nc:mlogn}
The probability that all edges interact during $3m\log{n}$ interactions is at least $1 - n^{-4}$.
\end{lemma}

\begin{proof}
Let $X$ be a random number that the number of interactions that an edge interacts during $3m\log{n}$ interactions by $X\sim Bi(3m\log{n}, 2/m)$.
The probability that $X$ is greater than $0$ is 
$\Pr(X \ge 1) = 1 - \Pr(X = 0) \ge 1 - (1 - 2/m)^{3m\log{n}} \ge 1 - e^{-6\log{n}} \ge 1 - n^{-6}$.
Thus, the probability that all edges interact during $3m\log{n}$ interactions is at least $1 - m/n^6 \ge 1 - n^{-4}$ by the union bound.
\end{proof}

\begin{lemma}[Convergence]\label{nc:con}
Let $C_0 \in \Call(\PNC)$, and let $\Xi_\PNC(C_0) = C_0, C_1, \dots$.
The number of interactions until the configuration reaches $\NC$ is $O(mn\log{n})$ steps both in expectation and with high probability.
\end{lemma}

\begin{proof}
Let $Col$ denote a set of colors used in $\Xi_\PNC(C_0)$.
We introduce a directed graph $G_{assign} = (Col, E_{assign})$ to represent color assignment.
An edge $(c,(c+1 \mod N)) \in E_{assign}$ iff a color $c+1$ is assigned at line 2 in Algorithm~\ref{Protocol:ssnc} in $\Xi_\PNC(C_0)$.

First, we show $G_{assign}$ is acyclic.
$G_{assign}$ can be cyclic only if $n=N$, $Col = \{0,1,\ldots,N-1\}$ and $E_{assign} = \{(c,(c+1 \mod N))|c \in \{0,1,\ldots,N-1\})$. If all the colors in $Col$ are used in $C_{0}$, $C_{0} \in \NC$ and no color is assigned in the execution. That is $E_{assign} = \emptyset$ and $G_{assign}$ is acyclic. Otherwise, there are colors in $Col$ that are not used in $C_{0}$. Let $c'$ be the color that is used last in the execution. When $c'$ is assigned, all the $n$ colors are used and this implies $\NC$ is reached. Therefore, no color will be assigned after that, and there is no outgoing edge from $c'$ in $G_{assign}$. That is, $G_{assign}$ is acyclic. 

Since $G_{assign}$ is acyclic, it is divided into weakly connected components, where each component forms a line segment $c_{i},c_{i}+1,\ldots,d_{i}$. This means no interaction assigns a color $c_{i}$ or $(d_{i} \mod N)$.
From Lemma~\ref{nc:mlogn}, the probability that all edges interact during $3m\log{n}$ interactions is at least $1 - n^{-4}$. In the first $3m\log{n}$ interactions, color collisions with $c_{i}$ disappear for each component with probability at least $1 - n^{-4}$, and then, in the next $3m\log{n}$ interactions, color collisions with $c_{i}+1$ disappear similarly. Repeating this at most $n$ times, all the color collisions disappear and reaches $\NC$ with probability at least $1 - n^{-3}$.

The number of interactions to reach $NC$ with probability at least $1 - n^{-3}$ is $O(nm\log{n})$, and the expected number $E_{NC}$ of interactions to reach $NC$ is $E_{NC} = O(nm\log{n}) + n^{-3}\cdot E_{NC}$, Solving this, we have $E_{NC} = O(nm\log{n})$.
\end{proof}

In $\NC$, agents do not change their variables.
Thus, $\PNC$ is a silent protocol.

The following theorem holds from Lemma~\ref{nc:closure} and Lemma~\ref{nc:con}:
\begin{theorem}
Given the upper bound $N$, $\PNC$ is a \deterministic~self-stabilizing normal coloring silent protocol, and converges to safe configurations within $O(mn\log{n})$ steps both in expectation and with high probability.
\end{theorem}

\subsubsection{Two-Hop Coloring}  
We propose a \deterministic~self-stabilizing two-hop coloring protocol $\PDLRU$.  
Given the upper bound $N$ and $\Delta$, $\PDLRU$ converges to safe configurations within $O(mn+m\Delta\log{N})$ both in expectation and with high probability and uses $O(\Delta\log{N})$ bits of memory.

The main point of difference between $\PLRU$ and $\PDLRU$ is the way to generate random numbers.

In $\PDLRU$, an agent $a$ has $9$ variables:  
$a.\nc\in \{0,\dots ,N-1\}$,
$a.\rand\in \{0,2^{\khop}-1\}^{\randsize}$,
$a.\gen \in \{1,2^{\khop+1}-1\}^{\randsize}$,
$a.\cur\in \{1,\dots,\randsize+1\}$,
$a.\gencur\\\in \{1,\dots,\randsize\}$,
$a.\hop \in \{0, \dots, 2^\khop - 1\}$,
$a.\prev\in \{0, \dots, 2^\khop - 1\}^\Delta$,
$a.\stamp \in \{0,1\}^\Delta$, and  
$a.\idx \in \{0,\dots,\Delta\}$.  
The variable $a.\nc$ is a read-only variable representing a normal coloring color.  
The variable $a.\rand$ is an array that stores $\randsize$ generated random numbers, and $a.\cur$ represents the index of $a.\rand$ where $a$ should get a random number from $a.\rand$ when necessary.  
The variables $a.\gen$ and $a.\gencur$ are used in generating random numbers as described later.  
The variables $a.\hop$, $a.\prev$, $a.\stamp$, and $a.\idx$ have the same roles as in $\PLRU$.  
\begin{algorithm}[!htb]
    \caption{Self-Stabilizing Two-Hop Coloring Protocol $\PDLRU$}
    \label{Protocol:ss2hop}
    \SetKwInOut{OutFunc}{Output function $O$}
    \SetKwProg{Fn}{function}{:}{}
    \SetKwFunction{GenClr}{Generate\_Color}
    \SetKwFunction{GenBit}{Generate\_Bit}
    \SetKwFunction{GetRandomBit}{Get\_Bit}
    \SetKw{Downto}{downto}
    \setcounter{AlgoLine}{0}
    \OutFunc{Each agent $a$ outputs $a.\hop$.}
    \nonl\when{an initiator $a_0$ interacts with a responder $a_1$}{
        $\PNC$\;
        \If{$\exists i\in\{0,1\}:a_i.\nc > a_{1-i}.\nc$}{
            \nonl// Reset due to inconsistency\;
            \lIf{$a_i.\gen[a_i.\gencur]\ge 2^\khop$}{
                $(a_i.\gencur, a_i.\gen[1])\gets (1,1)$
            }
            $a_i.\gen[a_i.\gencur] \gets 2a_i.\gen[a_i.\gencur]+i$\;
            \If{$a_i.\gen[a_i.\gencur]\ge 2^\khop$}{
                \uIf{$a_i.\gencur = \randsize$}{
                    \For{$j\gets 1\ \KwTo\ \randsize$}{
                        $a_i.\rand[j]\gets a_i.\gen[j]-2^\khop$\;
                    }
                    $(a_i.\cur,a_i.\gencur)\gets (1,1)$\;
                }\lElse{
                    $a_i.\gencur\gets a_i.\gencur+1$
                }
                $a_i.\gen[a_i.\gencur]\gets 1$\;
            }
        }
        \Forall{$i \in \{0,1\}$}{
            $a_i.\idx \gets 0$\;
            \For{$j \gets 1\ \KwTo\ \Delta$}{
                \If{$a_i.\prev[j] = a_{1-i}.\hop$}{
                    $a_i.\idx\gets j$\;
                    \textbf{break}\;
                }
            }
        }
        \GenClr()\;
        \Forall{$i \in \{0,1\}$}{
            \lIf{$a_i.\idx =0$}{
                $a_i.\idx \gets \Delta$
            }
            \For{$j \gets a_i.\idx-1\ \Downto\ 1 $}{
                $(a_i.\prev[j+1],a_i.\stamp[j+1])\gets (a_i.\prev[j],a_i.\stamp[j])$\;
            }
        }
        \GenBit()\;
    }
    \nonl\Fn{\GenClr{}}{
        \If{$a_0.\idx>0 \wedge a_1.\idx>0 \wedge a_0.\stamp[a_0.\idx] \ne a_1.\stamp[a_1.\idx])$}{
                \If{$\exists i\in \{0,1\}:a_i.\nc>a_{1-i}.\nc \wedge a_i.\cur + 1 \le \randsize$}{
                    $(a_0.\hop,a_1.\hop)\gets (a_i.\rand[a_i.\cur],a_i.\rand[a_i.\cur+1])$\;
                    $(a_0.\idx,a_1.\idx,a_i.\cur) \gets (0,0,a_i.\cur+2)$\;
                }
            }
    }
    \nonl\Fn{\GetRandomBit{}}{
        \lIf{$a_0.\nc < a_1.\nc$}{
            \textbf{return} $0$
        }\lElse{
            \textbf{return} $1$
        }
    }
    \nonl\Fn{\GenBit{}}{
        $(a_0.\prev[1],a_1.\prev[1],a_0.\stamp[1],a_1.\stamp[1]) \gets (a_1.\hop,a_0.\hop,\GetRandomBit(),\GetRandomBit())$\;
    }
\end{algorithm}

$\PDLRU$ is given by Algorithm~\ref{Protocol:ss2hop}, and consists of 6 parts:  
i) normal coloring, ii) generating random numbers, iii) reading memory, iv) collision detection, v) saving colors, and vi) stamping.

i) Normal coloring (line 1) executes $\PNC$ to do normal coloring.

ii) Generating random numbers (lines 2--11) aims to create random numbers.  
This part generates random numbers. For each interaction, at most one agent generates $1$ bit of a random number with $2^\khop$ bits, and the generated random numbers (with deleting the most significant bit) are copied to an array rand every time each agent generates $\Delta$ random numbers. A variable $\gen$ stores generated random numbers and $\gencur$ points to a currently generating number.
After the configuration reaches $\NC$, all adjacent agents have different colors.  
In other words, there exists $i\in\{0,1\}$ satisfying $a_i.\nc > a._{1-i}.\nc$ when $a_0$ interacts with $a_1$.  
When this inequality holds, $a_i$ appends one bit ($0$ if initiator, $1$ if responder) to currently generating random number (line 4). If the generating number reaches the limit (line 5), $\gencur$ is incremented or the generated numbers are copied to $\rand$ and $\cur$ and $\gencur$ are reset when $\Delta$ random numbers have been generated (line 6 -- 11). When copying, the most significant bit is deleted so that a random number stays in a range $[0,2^{\khop}-1]$ (line 8).
This protocol is self-stabilizing, so there may exist non-generated numbers in $u.\rand$ in some configurations.  
However, when $u$ updates $u.\rand$ two or more times, all elements of $u.\rand$ are generated random numbers.

iii) Reading memory (lines 12--17) is the same as in $\PLRU$.

iv) Collision detection (lines 18, and 24--28) aims to regenerate colors when both $a_0$ and $a_1$ remember each other, and if their stamps are different.  
The agents use the generated random numbers described above.  
If there is $i\in\{0,1\}$ satisfying $a_i.\nc > a_{1-i}.\nc$ and $a_i.\cur+1\le \randsize$, $a_0.\hop$ and $a_1.\hop$ are updated to new colors $a_i.\rand[a_i.\cur],a_i.\rand[a_i.\cur+1]$, and $a_i$ updates $a_i.\cur$ to $a_i.\cur+2$ (lines 25--27).  
Note that if $a_i.\cur+1> \randsize$ or $a_0.\nc=a_1.\nc$ holds, agents do not regenerate colors.

v) Saving colors (lines 19--22) is the same as in $\PLRU$.

vi) Stamping (lines 23, and 28--30) aims to remember stamps to detect collisions.  
This is the same as in $\PLRU$ except for the way of generating a stamp.  
Generating a stamp (lines 29--30) uses the $\nc$ and the roles of the initiator and the responder.  
When $a_0$ interacts with $a_1$, if $a_0.\nc < a_1.\nc$ holds, the stamp becomes $0$; otherwise, the stamp becomes $1$.  
After the configuration reaches $\NC$, the probability that the stamp becomes $0$ or $1$ is $1/2$ respectively since the roles of the initiator and the responder are decided by the uniform random scheduler.

\begin{lemma}\label{twohop:uniform}  
All random numbers that agents generate are uniform and independent.  
\end{lemma}

\begin{proof}  
Since interactions are decided by a uniform random scheduler, the probability that an interacting agent is an initiator (resp. responder) also follows a uniform random scheduler at each interaction.  
Thus, the random numbers generated by the roles of the initiator and the responder are uniform random numbers.  
When agents generate random numbers, either the initiator or the responder uses the interaction.  
Thus, no two generated random numbers are related.  
Therefore, generated random numbers are uniform and independent.  
\end{proof}

We use $\Ncc$ and $\Scol$ described in $\PLRU$.

\begin{lemma}[Closure]\label{twohop:closure}  
Let $C_0\in \Scol$, and $\Xi_\PDLRU(C_0)=C_0,C_1,\dots$.  
$\forall i\in \mathbb{N} :C_i \in \Scol$ holds and each agent's color does not change from $C_0$.  
\end{lemma}

\begin{proof}  
This lemma follows directly from Lemma~\ref{twohoprandom:closure}.  
\end{proof}

We call an agent $u$ the rand-gen agent whose $\nc$ is not the minimum among all adjacent agents' $\nc$.

\begin{lemma}\label{twohop:rand1}  
Let $C_0\in \NC$, and $\Xi_\PDLRU(C_0)=C_0,C_1,\dots$. 
The number of interactions until all rand-gen agents' $\rand$ are updated at least once is $O(m\Delta\log{N})$ steps both in expectation and with high probability.  
\end{lemma}

\begin{proof}  
First, we show the number of interactions until all rand-gen agents generate one or more random numbers.  
A rand-gen agent $u$ needs to participate in $\khop$ interactions satisfying $\exists v\in V: u.\nc > v.\nc$ to create a random number.  
Let $X\sim Bi(2m \khop, 2/m)$ be a random number that the number of interactions where $u$ interacts during $2m\khop$ interactions.  
From Lemma~\ref{Arisu_upper}, the probability that $u$ interacts no less than $\khop$ during $2m\khop$ interactions is $\Pr(X\ge \khop)=1-\Pr(X\le \khop)\ge 1-e^{-2\cdot 2\khop/4}\ge 1-e^{-\khop}\ge 1-e^{-\log{\khopin}}>1-(\khopin)^{-1}$.  
Thus, by the union bound, the probability that each rand-gen agent interacts no less than $\khop$ times during $2m\khop$ is no less than $1-(8N^2\Delta^2)^{-1}$.  
Therefore, the probability that each rand-gen agent generates $\randsize$ random numbers during $2m\Delta\khop=O(m\Delta\log{N})$ interactions is no less than $1-(8N^2\Delta)^{-1}$ by the union bound.  
\end{proof}

\begin{lemma}\label{twohop:rand2}  
Let $C_0\in \NC$, and $\Xi_\PDLRU(C_0)=C_0,C_1,\dots$.  
The number of interactions until all rand-gen agents' $\rand$ are generated random numbers is $O(m\Delta\log{N})$ steps both in expectation and with high probability.  
\end{lemma}

\begin{proof}  
For any rand-gen agent $u$, when $u$ updates $u.\rand$ for the first time in this execution, $u$ starts generation from the first element of $u.\gen$.  
When $u$ updates $u.\rand$ for the second time in this execution, all elements of $u.\rand$ are generated in this execution.  
Therefore, from Lemma~\ref{twohop:rand1}, the number of interactions that each rand-gen agent updates its $\rand$ at least twice is $2\cdot O(m\Delta\log{N})=O(m\Delta\log{N})$ and the probability is no less than $1-(4N^2\Delta)^{-1}$.  
\end{proof}

\begin{lemma}
Let $C_0\in \Call(\PDLRU)$, and $\Xi_{\PDLRU}(C_0)=C_0,C_1,\dots$.
The number of interactions until the configuration reaches $\Scol$ is $O(m(n\log{n}+\Delta\log{N}))$ steps both in expectation and with high probability.
\end{lemma}

\begin{proof}
Let \( C_r \) be the configuration where $\rand$ of all rand-gen agents first becomes one that was generated within this execution.
From Lemma~\ref{nc:con} and Lemma~\ref{twohop:rand2}, the configuration becomes $C_r$ within $O(m(n\log{n}+\Delta\log{N}))$ steps with high probability.
Consider the configuration where the $\rand$ of all rand-gen agents has been regenerated at least once.  
Note that this event occurs within $O(m\Delta\log{N})$ steps with high probability, as stated in Lemma~\ref{twohop:rand1}.  
At this point, since the $\cur$ of all agents has been reset at least once after $C_r$, the condition $u.\cur + 1 < \Delta$ holds with high probability until $\Xi$ has converged to $\Scol$.  
Therefore, by Lemma~\ref{twohoprandom:conv}, the configuration reaches $\Scol$ within $O(m(n\log{n}+\Delta\log{N}))$ steps with high probability.
Let $A=\max_{C\in \Call(\PDLRU)}\ECT(C,\Scol)$.
$\Pr(\exists i\in O(m(n\log{n}+\Delta\log{N})))=1-o(1)=\Omega(1)$ holds, thus, $A=O(m(n\log{n}+\Delta\log{N}))+(1-\Omega(1))A$ holds.  
Solving this gives $A=O(m(n\log{n}+\Delta\log{N}))$.  
\end{proof}

To analyze time complexity, we assume the following assumption.

\begin{assumption}\label{twohop:ass}  
All elements of all rand-gen agents' $\rand$ are generated in this execution, and each rand-gen agent $u$ satisfies $u.\cur+1\le \randsize$ when $u$ regenerates colors in this execution. 
\end{assumption}

\begin{lemma}\label{twohop:ek}  
Once Assumption~\ref{twohop:ass} is satisfied during an execution, $E_k \le 320m(1/k+1)$ will hold.
\end{lemma}

\begin{proof}  
Once Assumption~\ref{twohop:ass} is satisfied during an execution, this lemma will be directly derived from Lemma~\ref{twohoprandom:ek}.
\end{proof}

\begin{lemma}\label{twohop:collision}  
Once Assumption~\ref{twohop:ass} is satisfied during an execution, if an agent $u$ interacts with another agent $v$ and a color collision is detected, the probability that neither agent causes new stamp collisions and color collisions will be no less than $1-1/(2N^2\Delta)$.  
\end{lemma}

\begin{proof}  
Once Assumption~\ref{twohop:ass} is satisfied during an execution, from Lemma~\ref{twohop:uniform}, the generated colors are uniform and independent.  
Thus, this lemma will be directly derived from Lemma~\ref{twohoprandom:collision}.
\end{proof}

\begin{lemma}\label{twohop:ncc}  
Let $C_0\in \NC$, and $\Xi_\PDLRU(C_0)=C_0,C_1,\dots$. 
Once Assumption~\ref{twohop:ass} is satisfied during an execution, the number of interactions until the configuration reaches $\Ncc$ will be $O(mn)$ steps with probability $1-o(1)$.  
\end{lemma}

\begin{proof}  
Once Assumption~\ref{twohop:ass} is satisfied during an execution, this lemma will be directly derived  from Lemma~\ref{twohoprandom:ncc} and Lemma~\ref{twohop:collision}.  
\end{proof}

\begin{lemma}\label{twohop:scol}  
Let $C_0\in \NC \cap \Ncc$, and $\Xi_\PDLRU(C_0)=C_0,C_1,\dots$.  
Once Assumption~\ref{twohop:ass} is satisfied during an execution, the number of interactions until the configuration reaches $\Scol$ will be $O(m\log{n})$ steps with probability $1-o(1)$.  
\end{lemma}

\begin{proof}  
Once Assumption~\ref{twohop:ass} is satisfied during an execution, this lemma will be directly derived from Lemma~\ref{twohoprandom:scol} and Lemma~\ref{twohop:collision}.  
\end{proof}

\begin{lemma}\label{twohop:conv}  
Let $C_0\in \NC$, and $\Xi_\PDLRU(C_0)=C_0,C_1,\dots$.
Once Assumption~\ref{twohop:ass} is satisfied during an execution, the number of interactions until the configuration reaches $\Scol$ will be $O(mn)$ steps with probability $1-o(1)$.  
\end{lemma}

\begin{proof}  
Once Assumption~\ref{twohop:ass} is satisfied during an execution, this lemma will be directly derived from Lemma~\ref{twohop:ncc}, and Lemma~\ref{twohop:scol}.  
\end{proof}

\begin{lemma}\label{twohop:cnt}  
Let $C_0\in \NC$, and $\Xi_\PDLRU(C_0)=C_0,C_1,\dots$.
Once Assumption~\ref{twohop:ass} is satisfied during an execution, the number of times that each agent regenerates colors will be no more than $\lfloor \Delta/2\rfloor$ with probability $1-o(1)$.  
\end{lemma}

\begin{proof}  
We consider the situation after Assumption~\ref{twohop:ass} has been satisfied during an execution.
An agent $u$ has $\delta_u/2$ neighbors.  
When $u$ interacts with an agent $v$ with the same color, $u$ and $v$ do not make new collisions with a probability of at least $1-(2N^2\Delta)^{-1}$ from Lemma~\ref{twohop:collision}.  
Thus, the number of times that $u$ regenerates colors is no more than $\delta_u/2$ with probability at least $1-\delta_u/(4N^2\Delta)\ge 1-N^{-2}$.  
Therefore, once Assumption~\ref{twohop:ass} is satisfied during an execution, the probability that the number of times each agent regenerates colors is no more than $\lfloor \Delta/2\rfloor$ will be no less than $1-N^{-1}$.  
\end{proof}

\begin{lemma}\label{twohop:allconv}  
Let $C_0\in \Call(\PDLRU)$, and $\Xi_\PDLRU(C_0)=C_0,C_1,\dots$.  
The number of interactions until the configuration reaches $\Scol$ is $O(m(n+\Delta\log{N}))$ steps both in expectation and with high probability.  
\end{lemma}

\begin{proof}  
From Lemma~\ref{nc:con}, the configuration reaches $\NC$ within $O(mn\log{n})$ steps both in expectation and with high probability.  
After that, we consider when the configuration still reaches $\NC$.  
From Lemma~\ref{twohop:cnt}, if all elements of all rand-gen agents' $\rand$ are generated in the execution, the number of times each agent regenerates colors is no more than $\lfloor \Delta/2 \rfloor$ with probability $1-o(1)$.  
When an agent $u$ interacts with $v$ and they generate colors, either $u$ or $v$ generates two colors.  
Thus, each rand-gen agent needs no more than $2\lfloor \Delta/2 \rfloor$ random numbers.  
After all rand-gen agents' $\rand$ are updated at least once, and after all elements of all rand-gen agents' $\rand$ become generated random numbers in the execution, each rand-gen agent regenerates no more than $\lfloor \Delta/2 \rfloor$ times with probability $1-o(1)$.  
Since $\rand$'s size is $\Delta$, rand-gen agents can regenerate colors at any time with high probability.  
From Lemma~\ref{twohop:rand2}, the number of interactions that all elements of all rand-gen agents' $\rand$ become generated random numbers in the execution is $O(m\Delta\log{N})$ steps both in expectation and with high probability.  
From Lemma~\ref{twohop:rand1}, the number of interactions that all rand-gen agents' $\rand$ are updated at least once again is $O(m\Delta\log{N})$ steps both in expectation and with high probability.  
After the above, any configuration belonging to $\NC$ reaches $\Scol$ within $O(mn)$ with probability $1-o(1)$ from Lemma~\ref{twohop:conv}.  
Therefore, the number of interactions until the configuration reaches $\Scol$ is $O(m(n+\Delta\log{N}))$ steps with probability $1-o(1)$.  
Let $A=\max_{C\in \Call(\PDLRU)}\ECT(C,\Scol)$.  
$\Pr(\exists i\in O(m(n+\Delta\log{N})))=1-o(1)=\Omega(1)$ holds, thus, $A=O(m(n+\Delta\log{N}))+(1-\Omega(1))A$ holds.  
Solving this gives $A=O(m(n+\Delta\log{N}))$.  
\end{proof}

The following theorem holds from Lemma~\ref{twohop:closure} and Lemma~\ref{twohop:allconv}.

\twohopteiri*

\section{Proofs and Analysis of Leader Election}

\LEindependent*

\begin{proof}  
When candidates generate $\iid$s, if both initiator and responder are candidates, the responder becomes a follower.  

When leaders generate $\type$s, if both initiator and responder are leaders, they move to the Global Reset phase.  

In both cases, one interaction cannot be used for two random numbers, which guarantees independence, and a uniform random scheduler determines the roles of an initiator and a responder with equal probability, which guarantees uniformity.  
\end{proof}

\subsection{Proofs of Expected Holding Time's Lemmas}

\LEholdcount*

\begin{proof}  
For any configuration in $\SDcol$, all adjacent agents of every agent have different colors.  
Thus, the probability that every agent encounters no less than $8\tau$ counting interactions while $\Gamma_0,\dots,\Gamma_{2m\tau-1}$ is less than $1-ne^{-\tau}$, from Lemma 3 of Sudo~\etal\cite{SameSpeedTimer}.  
This lemma holds since $8\tau\le \tbc/2$ holds.  
\end{proof}

\LEholdspreads*

\begin{proof}  
We show this lemma in the same way as Lemma 4 of Sudo~\etal\cite{SameSpeedTimer}.
Since the value of timer is propagated with decreasing uniformly through all the paths from $v$ to $u$ and propagation through the shortest path affect at the earliest, we analyze how the value if propagated through the shortest path.
Let $C_0(v).\timer_x=y$ and we denote a shortest path from $v$ to an agent $u$ by $(v_0,v_1,\dots,v_k)$ where $v_0=v$ and $v_k=u$ holds.  
For any $t\in[0,2m\tau]$, we define $v_{head}(t)$ as follows:  
$v_{head}(0)=v_0$.
When $v_{head}(t-1)=v_{l-1}$, $v_{head}(t)=v_l$ if $\Gamma_{t-1}\in \{(v_{l-1},v_l),(v_l,v_{l-1})\}$, otherwise $v_{head}(t)=v_{l-1}$.
We also define $J(t)$ as the total number of times $v_{head}(i)$ encounters counting interactions for $i\in[0,t]$.  
$v_{head}.\timer_x$ decreases when and only when $v_{head}$ moves or $v_{head}$ encounters counting interactions.  
Thus, we get $C_t(v_{head}).\timer_x\ge y-d-J(t)\ge \tbc/2-d-J(t)$.  
From Lemma 4 of Sudo~\etal\cite{SameSpeedTimer},  
$\Pr(v_{head}(2m\tau)=u)\ge 1-e^{-\tau}$, and $\Pr(J(2m\tau)<\tbc/2-d)\ge 1-e^{-\tau}$ holds.  
Since the number of candidates of $u$ is less than $n$, we get $\Pr(\forall v\in V:C_{2m\tau}(v).\timer_x>y-\tbc/2)\ge 1-2ne^{-\tau}$ by the union bound.  
\end{proof}

\LEholdvoneclean*

\begin{proof}  
\Lnotincreasing

We will show that the unique leader does not disappear from $C_0$ to $C_{2m\tau}$ with high probability.  
The unique leader disappears if and only if an old search virus lives and the search virus meets the current search virus or the leader.  
Here, we say the search virus is old if it existed when the leader became $\LL_0$.

Firstly, we consider the case where the leader begins generating its $\type$ for the first time during $\Gamma_0,\dots \Gamma_{2m\tau-1}$ (\ie, the leader's $\timer_\mathrm{E}$ became $0$).  
At this time, the leader becomes $\LL_1$.  
If the leader finishes generating its $\type$ in $\Gamma_i\ (i\in [0,2m\tau))$, the probability that the leader's $\timer_\mathrm{V}$ does not decrease no less than $\tbc/2$ from $C_{i+1}$ to $C_{2m\tau}$ is at least $1-ne^{-\tau}$, from Lemma~\ref{leaderelection:holding:count}.  
Thus, the leader's $\timer_\mathrm{E}$ is always $2\tbc$ and does not start generating $\type$ again with probability at least $1-ne^{-\tau}$.  
Therefore, $C_{2m\tau}$ is in $\Vonly \cup \Vmake$ with probability at least $1-ne^{-\tau}$. 
No kill virus is created since $C_0\in \Vclean$ and the leader does not start generating $\type$ no less than twice.  
Thus, the probability that $C_{2m\tau}\in \SLE \cap (\Lvone \cap (\Vonly \cup \Vmake) \cap \Ehalf)$ holds if the leader starts generating $\type$ from $C_0$ to $C_{2m\tau}$ is at least $1-3ne^{-\tau}$ by the union bound.

Next, we consider the case where the leader does not start to generate its $\type$ during $\Gamma_0,\dots,\Gamma_{2m\tau-1}$.  
Since $C_0\in \Vclean$, there is no search virus in $C_{2m\tau}$ and no kill virus.  
Thus, the probability that $C_{2m\tau} \in \SLE \cap \Vclean$ is at least $1-2ne^{-\tau}$ by the union bound.

From the above, this lemma is proved.  
\end{proof}

\LEholdvonemakeehalf*

\begin{proof}  
\Lnotincreasing

There is no search virus in $C_0$.  
If the leader does not finish generating its $\type$ during $\Gamma_0,\dots \Gamma_{2m\tau-1}$, $\Pr(C_{2m\tau}\in \SLE \cap \Lvone \cap \Vmake \cap \Ehalf)\ge 1-2ne^{-\tau}$ holds by the union bound.  
Note that there is no kill virus in the configurations.

If the leader finishes generating its $\type$ in $\Gamma_i\ (i\in[0,2m\tau))$, the probability that the leader's $\timer_\mathrm{V}$ does not decrease no less than $\tbc/2$ is at least $1-ne^{-\tau}$ from Lemma~\ref{leaderelection:holding:count}.  
Thus, the leader's $\timer_\mathrm{E}$ is always $2\tbc$ and does not start generating $\type$ again.  
Therefore, the probability that $C_{2m\tau}\in \SLE \cap \Lvone \cap \Vonly \cap \Ehalf$ is at least $1-3ne^{-\tau}$ by the union bound.  
Note that there is also no kill virus in the configurations.

From the above, this lemma is proved.  
\end{proof}

\LEholdvoneonlyehalf*

\begin{proof}  
\Lnotincreasing

If the search virus does not disappear during $\Gamma_0,\dots,\Gamma_{2m\tau-1}$, there exists an agent $v$ satisfying $C_i(v).\timer_\mathrm{E}=2\tbc$ for $i\in [0,2m\tau]$ since $\forall i\in[0,2m\tau],\exists v\in V:C_i(v).\timer_\mathrm{V}>0$ holds.  
The probability that $\forall i\in[0,2m\tau],\exists v\in V:C_i(v).\LF=\LL_1 \Rightarrow C_i(v).\timer_\mathrm{E}\ge \tbc/2$ and $C_{2m\tau}\in \Ehalf$ hold is at least $1-3ne^{-\tau}$ from Lemma~\ref{leaderelection:holding:Elambda}.

If all the search viruses disappear during $\Gamma_0,\dots,\Gamma_{2m\tau-1}$, the probability that the leader's $\timer_\mathrm{E}$ does not become $0$ is at least $1-ne^{-\tau}$ from Lemma~\ref{leaderelection:holding:count} since $C_0\in \Ehalf$.  
Thus, the leader does not start generating $\type$ and $C_{2m\tau}\in \Vclean$.

In both cases, since the new search virus is not created, no kill virus is created.  
Therefore, $C_{2m\tau}$ also belongs to $\KLzero$.

From the above, this lemma is proved by the union bound.  
\end{proof}

\subsection{Analysis of Expected Convergence Time}
We also define the sets of configurations to prove Lemma~\ref{leaderelection:conv:conv}:

$\KLhalf = \{C \in \SDcol \mid \exists v \in V:C(v).\timer_\KL\ge \tbc/2\}$

$\Bexists = \{C \in \SDcol \mid \exists v \in V:C(v).\LF = \Baby\}$

$\IDclear = \{C \in \SDcol \mid \forall v\in V:C(v).\iid = 1\}$

$\IDsame = \{C \in \SDcol \mid \forall v, \forall u\in V:C(v).\iid = C(u).\iid\}$

$\Fall = \{C \in \SDcol \mid \forall v \in V:C(v).\LF = \FF\}$

$\Ldupl = \{C \in \SDcol \mid |\{v \in V \mid C(v).\LF \in \{\LL_0,\LL_1\}\}|\ge 2\}$

$\Lexists = \Lone \cup \Ldupl$

$\Lvzero = \{C \in \SDcol \mid \exists v \in V:C(v).\LF = \LL_0\}$

$\Equa = \{C \in \SDcol \mid \forall v \in V:C(v).\LF \in \{\LL_0, \LL_1\} \Rightarrow C(v).\timer_\mathrm{E} \ge \tbc/2\}$

\begin{lemma}\label{leaderelection:conv:katei}  
Let $C_0\in \Scol$ and $\Xi_\PBC(C_0)=C_0,C_1,\dots$.  
If $\Pr(\exists i\in O(m\tau\log{n}):C_i\in \SLE)=\Omega(1)$ holds,  
$\max_{C\in \Scol}\ECT_\PBC(C,\SLE)=O(m\tau\log{n})$ holds.  
\end{lemma}

\begin{proof}  
Let $B=\max_{C\in \Scol}\ECT_\PBC(C,\SLE)$.  
If $\Pr(\exists i\in O(m\tau\log{n}):C_i\in \SLE)=\Omega(1)$ holds,  
$B=O(m\tau\log{n})+(1-\Omega(1))B$ holds.  
Solving this inequality gives $B=O(m\tau\log{n})$.  
\end{proof}

We will show the convergence by dividing into some steps shown in Table~\ref{table:conv}.

\begin{table}[!htb]  
    \caption{Convergence Steps in $\PBC$}  
    \label{table:conv}      
    \centering  
    \begin{tabular}{lccc}  
        \hline  
        \multicolumn{1}{c}{Convergence Step} & Convergence Time & Probability & Lemmas\\  
        \hline  
        $\Scol \rightarrow \SDcol$ & $m\log{n}$ & $1-o(1)$ & Lemma~\ref{leaderelection:conv:SDcol}\\  
        $\SDcol \rightarrow \KLzero \cup \KLhalf$ & $2340m\tau\log{n}$ & $1-o(1)$ & Lemma~\ref{leaderelection:conv:Scol}\\  
        $\KLzero \rightarrow \KLzero \cap \Vclean \cup \KLhalf$ & $2m\tau+4680m\tau\log{n}$ & $1-o(1)$ & Lemma~\ref{leaderelection:conv:KLzero}\\  
        $\KLzero \cap \Vclean \rightarrow \SLE \cup \KLhalf$& $4m\tau+18720m\tau\log{n}$ & $1-o(1)$ & Lemma~\ref{leaderelection:conv:KLzeroVclean}\\  
        $\KLhalf \rightarrow \KLzero \cap \Fall \cap \Vclean \cap \IDclear$ & $2340m\tau\log{n}$ & $1-o(1)$ & Lemma~\ref{leaderelection:conv:KLhalf}\\  
        $\KLzero \cap \Fall \cap \Vclean \cap \IDclear \rightarrow \SLE$ & $2m\tau+7020m\tau\log{n}$ & $1-o(1)$ & Lemma~\ref{leaderelection:conv:FalltoSLE}\\  
        \hline  
        $\Scol\rightarrow \SLE$ & $O(m\tau\log{n})$ & $1-o(1)$ & \\  
        \hline  
    \end{tabular}  
\end{table}

\begin{lemma}\label{leaderelection:conv:SDcol}  
Let $C_0\in \Scol$ and $\Xi_\PBC(C_0)=C_0,C_1,\dots$.  
The probability that the configuration reaches $\SDcol$ during $m\log{n}$ interactions is at least $1-n^{-3}$.  
\end{lemma}

\begin{proof}  
It is sufficient that all agents interact to reach $\SDcol$.  
The probability that an agent $u$ does not interact during $m\log{n}$ interactions is $(1-\delta_u/m)^{m\log{n}}<e^{-2\delta_u\log{n}}<e^{-\log{n^4}}<2^{-\log{n^4}}=n^{-4}$ since $\delta_u\ge 2$.  
Thus, the probability that all agents interact during $m\log{n}$ interactions is at least $1-n^{-3}$ by the union bound.  
Note that once the execution reaches $\SDcol$, the configurations will not deviate from $\SDcol$ after that.  
\end{proof}

\begin{lemma}\label{leaderelection:conv:KLhalf}
Let $C_0\in \KLhalf$ and $\Xi_\PBC(C_0)=C_0,C_1,\dots$.  
The number of interactions until the configuration reaches $\KLzero \cap \Fall \cap \Vclean \cap \IDclear$ is less than $2340m\tau\log{n}$ with probability at least $1-3ne^{-\tau}-e^{-\tau}-n^{-1}$.  
\end{lemma}

\begin{proof}  
There exists an agent $v$ satisfying $C_0(v).\timer_\KL\ge \tbc/2$ since $C_0\in \KLhalf$.  
From Lemma~\ref{leaderelection:holding:spreads}, the probability that every agent's $\timer_\KL$ becomes more than $0$ in $C_{2m\tau}$ interactions is at least $1-2ne^{-\tau}$.  
From lemma~\ref{leaderelection:holding:count}, the probability that some agent's $\timer_\LF$ becomes less than $\tbc/2$ from its $\timer_\KL$ became more than $0$ until $C_{2m\tau}$ is at least $1-ne^{-\tau}$ since every agent sets $\timer_\LF$ to $\tbc$ during every agent's $\timer_\KL>0$.
Thus, $C_{2m\tau}\in \Fall \cap \Vclean \cap \IDclear \cap \LFqua$ holds.
Let $C_i$ be the configuration where every agent's $\timer_\KL$ becomes $0$ after $C_{2m\tau}$.  
The number of interactions until every agent's $\timer_\KL$ becomes $0$ is less than $2340m\tau\log{n}$ with probability at least $1-e^{-\tau}$ from Lemma~\ref{leaderelection:conv:timeconv} (assigning $\lambda=1$).  
Thus, the probability that $i<2m\tau+2340m\tau\log{n}$ holds is at least $1-e^{-\tau}$ since there is no leader after $C_{2m\tau}$.
Since $v.\timer_\KL>0$ implies $v.\timer_\LF=\tbc$ for any agent $v$, there is some agent $u$ satisfying $u.\timer_\LF=\tbc$ for any $C_{2m\lambda\tau}\ (1\le \lambda \le \lfloor i/(2m\tau)\rfloor)$.
From Lemma~\ref{leaderelection:holding:LFlambda}, the probability that every agent's $\timer_\LF$ does not become $0$ from $C_{2m\tau}$ to $C_i$ is at least $1-1170\log{n}\cdot 3ne^{-\tau}\ge 1-n^{-1}$.  
Note that we assume \assumeN.  
No agent becomes a candidate with that probability.
Thus, $C_{i}\in \KLzero \cap \Fall \cap \Vclean \cap \IDclear$ holds with probability at least $1-3ne^{-\tau}-e^{-\tau}-n^{-1}$ by the union bound.
\end{proof}

\begin{lemma}\label{leaderelection:conv:FalltoSLE}  
Let $C_0\in \KLzero \cap \Fall \cap \Vclean \cap \IDclear$ and $\Xi_\PBC(C_0)=C_0,C_1,\dots$.  
The number of interactions until the configuration reaches $\SLE$ is less than $9360m\tau\log{n}+6m\tau$ with probability  at least $1-4ne^{-\tau}-2e^{-2\tau}-n^{-1}-N^{-1}$.  
\end{lemma}

\begin{proof}  
The number of interactions until an agent's $\timer_\LF$ becomes $0$ is less than $4680m\tau\log{n}$ with probability at least $1-e^{-2\tau}$ from Lemma~\ref{leaderelection:conv:timeconv} (assigning $\lambda=1$).  
Some agents whose $\timer_\LF$ are $0$ become candidates and set their $\timer_\LF$ to $2\tbc$.
When a follower interacts with a candidate, the follower's $\timer_\LF$ is set to $\tbc-1$.
This can be regarded as Larger Time Propagation with candidates to leaders.
From Lemma~\ref{leaderelection:holding:spreads}, the number of interactions until all followers' $\timer_\LF$ become no less than $\tbc/2$ after a candidate appears for the first time is less than $2m\tau$ with probability at least $1-2ne^{-\tau}$.  
Note that we can use Lemma~\ref{leaderelection:holding:spreads} since we can regard candidates as agents whose $\timer_\LF$ are $\tbc$.
Let $p$ be the probability that an agent becomes a new candidate from the $2m\tau$-th interactions after the first candidate appeared until all candidates disappear.  

Next, we analyze the number of interactions from the first candidate appeared until candidates finish generating $\iid$.
A candidate has to interact $\lceil\log{N^2}\rceil$ times to finish generating $\iid$.  
Let $X\sim \text{Bi}(4m\tau,\delta_v/m)$ be a binomial random variables that the number of interactions an agent $v$ interacts during $4m\tau$ interactions.  
From Lemma~\ref{Arisu_upper}, $\Pr(X\le 2\delta_v\tau)\le e^{-\delta_v\tau/2}\le e^{-\tau}$ holds since $\delta_v\ge 2$.  
Since $2\delta_v\tau\ge 4\tau \ge 2\lceil\log{N}\rceil$ holds, it is sufficient to finish generation during $4m\tau$ interactions.  
by the union bound, all agents finish generating $\iid$ during $4m\tau$ interactions with probability at least $1-ne^{-\tau}$.

From the proof of Lemma~\ref{leaderelection:holding:spreads}, the probability that information propagates to all agents during $2m\tau$ interactions is at least $1-ne^{-\tau}$ by the union bound.
Thus, the probability that all agents know the maximum $\iid$ after all candidates finish generating $\iid$ during $2m\tau$ interactions is at least $1-ne^{-\tau}$.  
Thus, the number of interactions until all agents know the maximum $\iid$ after the first candidate appeared is less than $8m\tau$ interactions with  probability at least $1-8ne^{-\tau}-p$.  
The number of times that every candidate decreases $\timer_\LF$ during $8m\tau$ interactions is less than $2\tbc$ with probability at least $1-4ne^{-\tau}$.  
When candidates finish generating $\iid$, they set their $\timer_\LF$ to $2\tbc$.  
Thus, all agents know the maximum $\iid$ before some candidates' $\timer_\LF$ becomes $0$ with that probability.

The number of interactions until every candidate's $\timer_\LF$ becomes $0$ is $4680m\tau\log{n}$ with probability at least $1-e^{-2\tau}$ from Lemma~\ref{leaderelection:conv:timeconv} (assigning $\lambda=2$).  
Let $x$ be a random variable to represent the number of interactions until all candidates disappear after the first candidate appeared.  
From the above, $x$ is less than $6m\tau+4680m\tau\log{n}=2m\tau (3+2340\log{n})$ with probability at least $1-3ne^{-\tau}-e^{-2\tau}-p$.  
From Lemma~\ref{leaderelection:holding:LFlambda}, the probability that all followers' $\timer_\LF$ do not become $0$ during $x$ interactions from the $2m\tau$-th interactions after the first candidate appeared is at least $1-(3+2340\log{n})3ne^{-\tau}\ge 1-1/n$ since \assumeN.  
Thus, $p\le n^{-1}$ holds.

The candidates whose $\timer_\LF$ become $0$ become leaders $\LL_0$.  
We will analyze the probability that there is a unique leader when all candidates disappear.  
From Lemma~\ref{leaderelection:independent}, random numbers that candidates generate are independent and uniform.  
Thus, the probability that multiple candidates generate the same random number is at least $1/2^{\lceil\log{N^2}\rceil}\le 1/N^2$ since random numbers are generated from the range $[2^{\lceil\log{N^2}\rceil},2^{\lceil\log{N^2}\rceil+1})$.  
Therefore, the probability that all $\iid$ that candidates generated are different is at least $1-n/N^2\ge 1-N^{-1}$.

Since candidates always set their $\timer_\mathrm{E}$ to $2\tbc$, a new unique leader's $\timer_\mathrm{E}$ is $2\tbc$ when the leader appears.  
Thus, the configuration belongs to $\Vclean$ and $\Lvzero$.  
Therefore, the number of interactions until the configuration reaches $\Bno \cap \Lone \cap \LFqua \cap  \KLzero \cap  \Lvzero \cap \Vclean$ is less than $4680m\tau\log{n}+6m\tau+4680m\tau\log{n}=9360m\tau\log{n}+6m\tau$ with probability at least $1-e^{-2\tau}-2ne^{-\tau}-ne^{-\tau}-ne^{-\tau}-e^{-2\tau}-n^{-1}-N^{-1}=1-4ne^{-\tau}-2e^{-2\tau}-n^{-1}-N^{-1}$.  
This lemma follows from  $\Bno \cap \Lone \cap \LFqua \cap  \KLzero \cap  \Lvzero \cap \Vclean \subset \SLE$.  
\end{proof}

\begin{lemma}\label{leaderelection:conv:Scol}  
Let $C_0\in \SDcol$ and $\Xi_\PBC(C_0)=C_0,C_1,\dots$.  
The number of interactions required for the configuration to reach $\KLzero \cup \KLhalf$ is less than $2340m\tau\log{n}$ with probability at least $1-e^{-\tau}$.  
\end{lemma}

\begin{proof}  
Firstly, we consider the configuration reaches $\KLzero$.  
If no agent sets $\timer_\KL$ to $\tbc$ before all agents' $\timer_\KL$ become $0$, the number of interactions until all agents' $\timer_\KL$ becomes $0$ is less than $2340m\tau\log{n}$ with probability at least $1-e^{-\tau}$ from Lemma~\ref{leaderelection:conv:timeconv} (assigning $\lambda=1$).  
If some agent sets their $\timer_\KL$ to $\tbc$ during $2340m\tau\log{n}$ interactions, the configuration reaches $\KLhalf$ at the time that some agent sets $\timer_\KL$ to $\tbc$.  
Therefore, the number of interactions until the configuration reaches $\KLzero\cup \KLhalf$ is less than $2340m\tau\log{n}$ steps with probability at least $1-e^{-\tau}$.  
\end{proof}

\begin{lemma}\label{leaderelection:conv:KLzero}
Let $C_0\in \KLzero$ and $\Xi_\PBC(C_0)=C_0,C_1,\dots$.  
The number of interactions until the configuration reaches $\KLzero \cap \Vclean \cup \KLhalf$ is less than $2m\tau+4680m\tau\log{n}$ with  probability at least $1-2ne^{-\tau}-e^{-2\tau}-n^{-1}$.  
\end{lemma}

\begin{proof}  
If $C_0$ belongs to $\Vclean$, $C_0$ already belongs to $\KLzero\cap \Vclean$.  
Thus, we only consider the case $C_0\notin \Vclean$.  
Since there is some agent whose $\timer_\mathrm{V}>0$ and $\timer_\mathrm{E}=2\tbc$ in $C_{0}$, the probability that all agents' $\timer_\mathrm{E}$ become no less than $3\tbc/2$ in $C_{2m\tau}$ is at least $1-2ne^{-\tau}$ from Lemma~\ref{leaderelection:holding:spreads}.  
Let $p$ be the probability that leaders generate a new search virus from $C_{2m\tau}$ before the configuration reaches $\Vclean$.  
Let $x$ be a random variable to represent the number of interactions until all search viruses disappear after $C_{2m\tau}$.
From Lemma~\ref{leaderelection:conv:timeconv} (assigning $\lambda=2$), $x<4680m\tau\log{n}$ with probability at least $1-e^{-2\tau}-p$. 
The probability that all agents' $\timer_\mathrm{E}$ are always no less than $\tbc$ during $C_{2m\tau}$ to $C_{2m\tau+x}$ is at least $1-2340\log{n}\cdot 3ne^{-\tau}\ge 1-1/n$ since $x<4680m\tau\log{n}=2m\tau \cdot 2340\log{n}$, Lemma~\ref{leaderelection:holding:Elambda}, and \assumeN.  
Thus, $p\le n^{-1}$.  
If some agent sets $\timer_\KL$ to $\tbc$ while $2m\tau+x$ interactions, the configuration reaches $\KLhalf$ at that time.  
Therefore, the number of interactions until the configuration reaches $\KLzero \cap \Vclean \cup \KLhalf$ is less than $2m\tau+4680m\tau\log{n}$ steps with probability at least $1-2ne^{-\tau}-e^{-2\tau}-n^{-1}$.  
\end{proof}

\begin{lemma}\label{leaderelection:conv:kill} 
Let $C_0\in \KLzero \cap \Vclean \cap \LFqua \cap  \Ldupl \cap \Bno$ and $\Xi_\PBC(C_0)=C_0,C_1,\dots$.  
The number of interactions until the configuration reaches $\KLhalf$ is less than $14m\tau+10710m\tau\log{n}$ with probability at least $1-23ne^{-\tau}-2e^{-2\tau}-e^{-3\tau/2}-e^{-\tau/2}-e^{-\tau}-4n^{-1}-N^{-2}$.  
\end{lemma}

\begin{sketch}
We will show this lemma by dividing it into two cases: i) $C_0\in \Equa$, and ii) $C_0\notin \Equa$.

Firstly, we analyze the case i) $C_0\in \Equa$.
We show that the configuration reaches $\KLzero\cap \Vclean\cap \LFqua\cap \Ldupl\cap \Bno\cap \Lvzero$ during $3510m\tau\log{n}$ with probability $1-o(1)$.
The number of interactions until some leader's $\timer_\mathrm{E}$ becomes $0$ is less than $4860m\tau\log{n}$ with probability $1-o(1)$.
The number of interactions until some leader finishes generating $\type$ after a leader's $\timer_\mathrm{E}$ becomes $0$ in the first time is less than $6m\tau$ with probability $1-o(1)$.
The number of interactions until the configuration reaches $\KLhalf$ after a leader's $\timer_\mathrm{E}$ becomes $0$ in the first time is less than $8m\tau$ with probability $1-o(1)$.

Secondly, we analyze the case ii) $C_0\notin \Equa$.
Eventually, the configuration reaches $\Equa$ (or reaches $\KLhalf$).
we show that the number of interactions the configuration reaches $\Equa \cap \KLzero \cap \Vclean \cap \LFqua \cap \Ldupl\cap \Bno \cup \KLhalf$ is less than $6m\tau+5850m\tau\log{n}$ with probability $1-o(1)$.
\end{sketch}

\begin{proof}  
Firstly, we consider the case of $C_0\in \Equa$ (\ie, all leaders' $\timer_\mathrm{E}$ is no less than $\tbc/2$).  
Let $p$ be the probability that some leader's $\timer_\mathrm{E}$ becomes $0$ when there exists a leader whose $\timer_\mathrm{E}$ is no less than $\tbc/2$.  
Let $x$ be a random variable to represent the number of interactions until all leaders' $\timer_\mathrm{E}$ become less than $\tbc/2$.
The number of interactions until all leaders' $\timer_\mathrm{E}$ becomes less than $\tbc/2$ is less than $3510m\tau\log{n}$  with probability at least $1-e^{-3\tau/2}-p$ from Lemma~\ref{leaderelection:conv:timeconv} (assigning $\lambda=3/2$).
Note that we can use this lemma since there is no search virus before all leaders' $\timer_{\mathrm{E}}$ become less than $\tbc/2$ with probability $p$.
Thus, $x<3510m\tau\log{n}$ with that probability.
The probability that some leader's $\timer_\mathrm{E}$ becomes $0$ during $C_0$ to $C_x$ is at least $1-3510/2\cdot \log{n}\cdot 3ne^{-\tau}\ge1- 1/n$.  
Thus, $p\le n^{-1}$ since \assumeN.  
Therefore, there is no leader that starts generating a search virus before all agents become $\LL_0$ with high probability.
Since there are leaders before the configuration reaches $\KLhalf$, there exists some agent whose $\timer_\LF=\tbc$.
From Lemma~\ref{leaderelection:holding:LFlambda}, there is no agent whose $\timer_\LF$ becomes $0$ during $C_0$ to $C_x$, and $C_{x}\in \LFqua$ with probability at least $1-3510/2\cdot \log{n}\cdot 3ne^{-\tau}\ge 1/n$ since \assumeN.  
Therefore, $C_x\in \KLzero \cap \Vclean \cap \LFqua \cap \Ldupl \cap \Bno \cap \Lvzero$ with probability at least $1-e^{-3\tau/2}-2n^{-1}$.

Leaders start generating $\type$ when their $\timer_\mathrm{E}$ become $0$, and keep $\timer_\mathrm{E}=2\tbc$ while generating types.
The number of interactions until some leader's $\timer_\mathrm{E}$ becomes $0$ is $4680m\tau\log{n}$ with probability at least $1-e^{-2\tau}$ from Lemma~\ref{leaderelection:conv:timeconv} (assigning $\lambda=2$).  
Let $C'_0$ be the configuration where some leader's $\timer_\mathrm{E}$ becomes $0$ for the first time, and $\Xi'_\PBC(C'_0)=C'_0,C'_1,\dots$.
The probability that all agents' $\timer_\mathrm{E}$ become more than $3\tbc/2$ during $C'_0$ to $C'_{2m\tau}$ is at least $1-2ne^{-\tau}$ from  Lemma~\ref{leaderelection:holding:spreads}.
The probability that all agents' $\timer_\mathrm{E}$ do  not become less than $\tbc$ after their $\timer_\mathrm{E}$ become more than $3\tbc/2$ is at least $1-ne^{-\tau}$ from Lemma~\ref{leaderelection:holding:count}.
Thus, after $C'_{2m\tau}$, all leaders do not start generating $\type$ while their $\timer_\mathrm{E}$ are not $0$.

We analyze the number of interactions until a leader finishes generating $\type$.
We only consider the case there is no leader $\LL_1$.
Note that from the above, all $\LL_1$ became $\LL_0$ with high probability.
Every leader has to interact $\lceil\log{N}\rceil$ times to finish generating $\type$.  
Let $X\sim \text{Bi}(4m\tau,\delta_v/m)$ be a binomial random variable that represents the number of interactions that a leader $v$ interacts during $4m\tau$ interactions.  
From Lemma~\ref{Arisu_upper}, $\Pr(X\le 2\delta_v\tau)\le e^{-4\delta_v\tau/8}\le e^{-\tau}$.  
Thus, a leader $v$ interacts no less than $2\delta_v\tau>\lceil\log{N}\rceil$ times with probability at least $1-e^{-\tau}$.  
Therefore, the probability that some leader finishes generating $\type$ during $C'_{0}$ to $C'_{6m\tau}$ is at least $1-e^{-\tau}$. 
When a leader finishes generating $\type$, the leader sets their $\timer_\mathrm{V}$ to $2\tbc$.
Let $C'_i$ be the configuration that a leader finishes generating $\type$ for the first time.
The generated search virus spreads to all agent during $2m\tau$ interactions with probability at least $1-2ne^{-\tau}$ from Lemma~\ref{leaderelection:holding:spreads}.
Specifically, since there is some leader $v$ satisfying $C'_i(v).\timer_\mathrm{V}=2\tbc$, every agent's $\timer_\mathrm{V}$ becomes more than $3\tbc/2$ during $C'_i$ to $C'_{i+2m\tau}$ with that probability.
When the search virus reaches other leader, we consider three cases:
i) the leader is $\LL_0$, ii) the leader is $\LL_1$ and generating $\type$, iii) the leader is $\LL_1$ and finished generating $\type$.
In the cases i) and ii), the configuration becomes $\KLhalf$.
In the case iii), the configuration becomes $\KLhalf$ if the leader has alive search virus and their types are different.
The probability that the other leader's search virus is alive when the first generated search virus reaches the leader is at least $1-3ne^{-\tau}$ from Lemma~\ref{leaderelection:holding:count} and union bound since this event happens from $C'_i$ to $C'_{i+2m\tau}$ with probability at least $1-2ne^{-\tau}$.
We analyze the probability that some $\type$s are different.
Leaders generate $\type$ uniformly and independently from the range of $[2^{\lceil\log{N}\rceil},2^{\lceil\log{N}\rceil+1})$ from Lemma~\ref{leaderelection:independent}.  
Thus, the probability that two $\type$s are the same is no more than $(1/2^{\lceil\log{N}\rceil})^2\le N^{-2}$.  
Therefore, the probability that at least two search viruses are different is at least $1-N^{-2}$.  
Therefore, in the case iii), the configuration becomes $\KLhalf$ with probability at least $1-3ne^{-\tau}-N^{-2}$.
The probability that all agents' $\timer_\LF$ do not become $0$ during $C'_0$ to $C'_{8m\tau}$ is at least $1-12ne^{-\tau}$ from Lemma~\ref{leaderelection:holding:LFlambda}.

From the above, if $C_0\in \Equa$, the number of interactions until the configuration reaches $\KLhalf$ is less than $8m\tau+4680m\tau\log{n}$ with probability at least  $1-18ne^{-\tau}-e^{-2\tau/}-e^{-3\tau/2}-e^{-\tau}-2n^{-1}-N^{-2}$ by the union bound.

Next, we consider $C_0\notin \Equa$.  
The number of interactions until some leaders' $\timer_\mathrm{E}$ becomes $0$ is less than $1170m\tau\log{n}$ with probability at least $1-e^{-\tau/2}$.  
When some leader's $\timer_\mathrm{E}$ becomes $0$, the leader starts generating $\type$.  
Let $C''_0$ be the configuration where some leader's $\timer_\mathrm{E}$ becomes $0$ for the first time, and $\Xi''_\PBC(C''_0)=C''_0,C''_1,\dots$.  
After $C''_{2m\tau}$, all agents' $\timer_\mathrm{E}$ is always more than $3\tbc/2$ with probability at least $1-3ne^{-\tau}$ from lemma~\ref{leaderelection:holding:spreads} since there is an agent satisfying $\timer_\mathrm{E}=2\tbc$ in $C''_0$.
Thus, $\LL_1$ does not become $\LL_0$ after $C''_{2m\tau}$ until all search viruses disappear with that probability.  
When all search viruses disappear, the configuration belongs to $\KLzero \cap \Vclean \cap \LFqua \cap \Equa \cap \Ldupl \cap \Bno$ with high probability.
The probability that all agents’ $\timer_\LF$ do not become 0 until all search virus disappear after a leader became $\LL_1$ in the first time is at least $1-3(3+2340\log{n})ne^{-\tau} \ge 1-n^{-1}$.
Also, the probability that all leaders’ $\timer_\mathrm{E}$ do not become 0 until all search virus disappear after a leader became $\LL_1$ in the first time is at least $1-3(3+2340\log{n})ne^{-\tau} \ge 1-n^{-1}$.
Therefore, the number of interactions until the configuration reaches $\KLzero \cap \Vclean \cap \LFqua \cap \Equa \cap \Ldupl \cap \Bno$ from $C_0$
is less than $6m\tau+5850m\tau\log{n}$ with probability at least $1-5ne^{-\tau}-e^{-2\tau}-e^{-\tau/2}-2n^{-1}$.

From the above, the number of interactions that the configuration reaches $\KLhalf$ is $14m\tau+10710m\tau\log{n}$ with probability at least $1-23ne^{-\tau}-2e^{-2\tau}-e^{-3\tau/2}-e^{-\tau/2}-e^{-\tau}-4n^{-1}-N^{-2}$.
\end{proof}

\begin{figure}[!htb]
    \centering
    \includegraphics[scale=0.4]{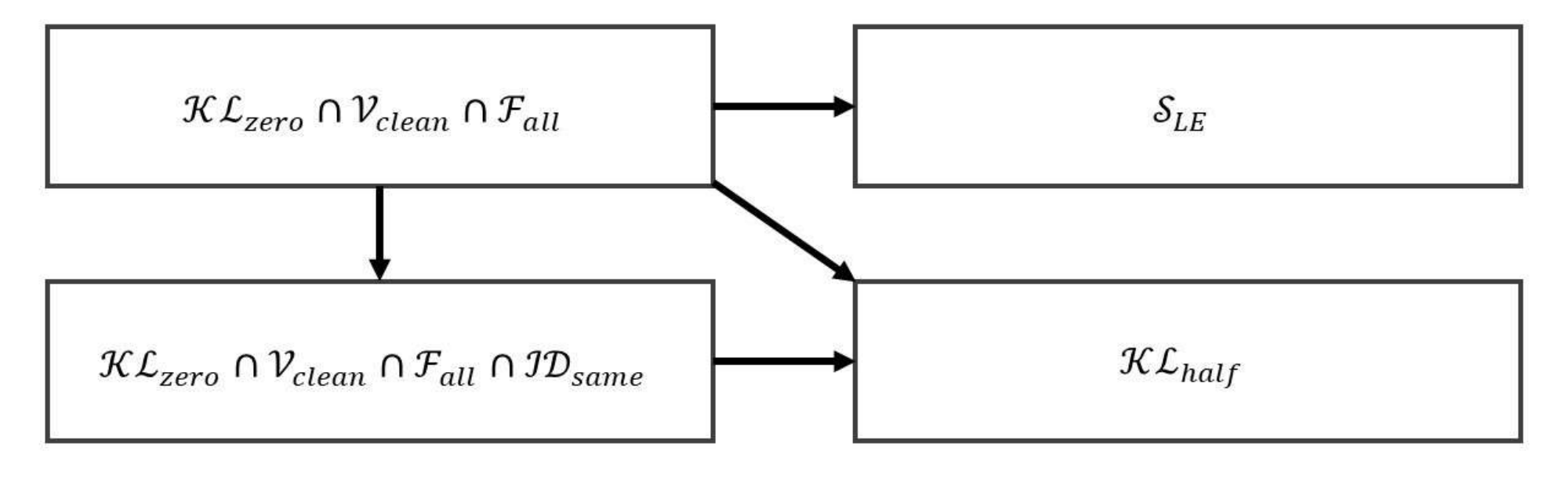}
    \caption{Structure of the proof if $C_0\in \KLzero\cap \Vclean\cap \Fall$ in Lemma~\ref{leaderelection:conv:KLzeroVclean}.}
    \label{figure:allF}
\end{figure}

\begin{figure}[!htb]
    \centering
    \includegraphics[scale=0.4]{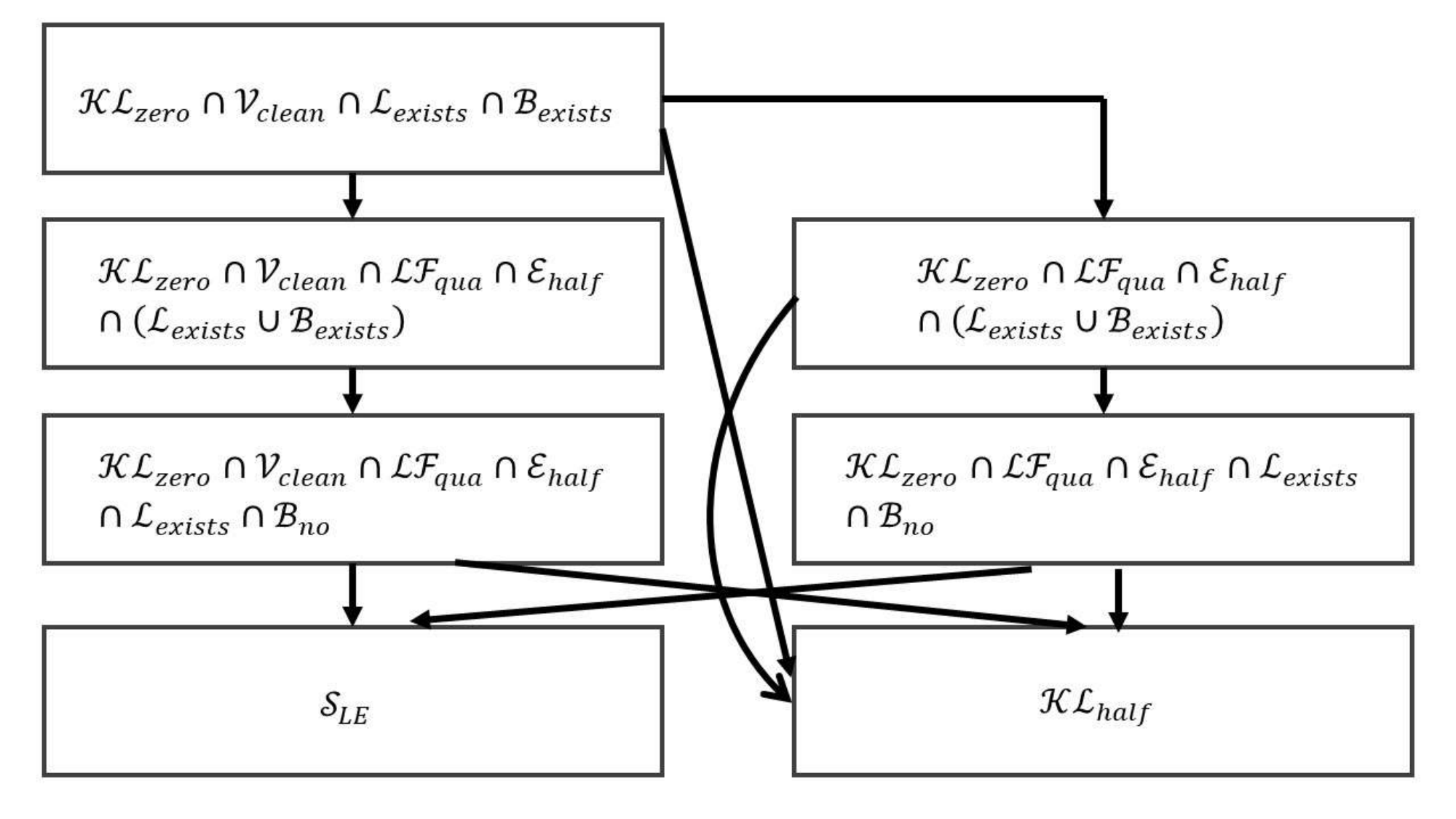}
    \caption{Structure of the proof if $C_0\in \KLzero\cap \Vclean\cap \Lexists \cap \Bexists$ in Lemma~\ref{leaderelection:conv:KLzeroVclean}.}
    \label{figure:BandL}
\end{figure}

\begin{figure}[!htb]
    \centering
    \includegraphics[scale=0.4]{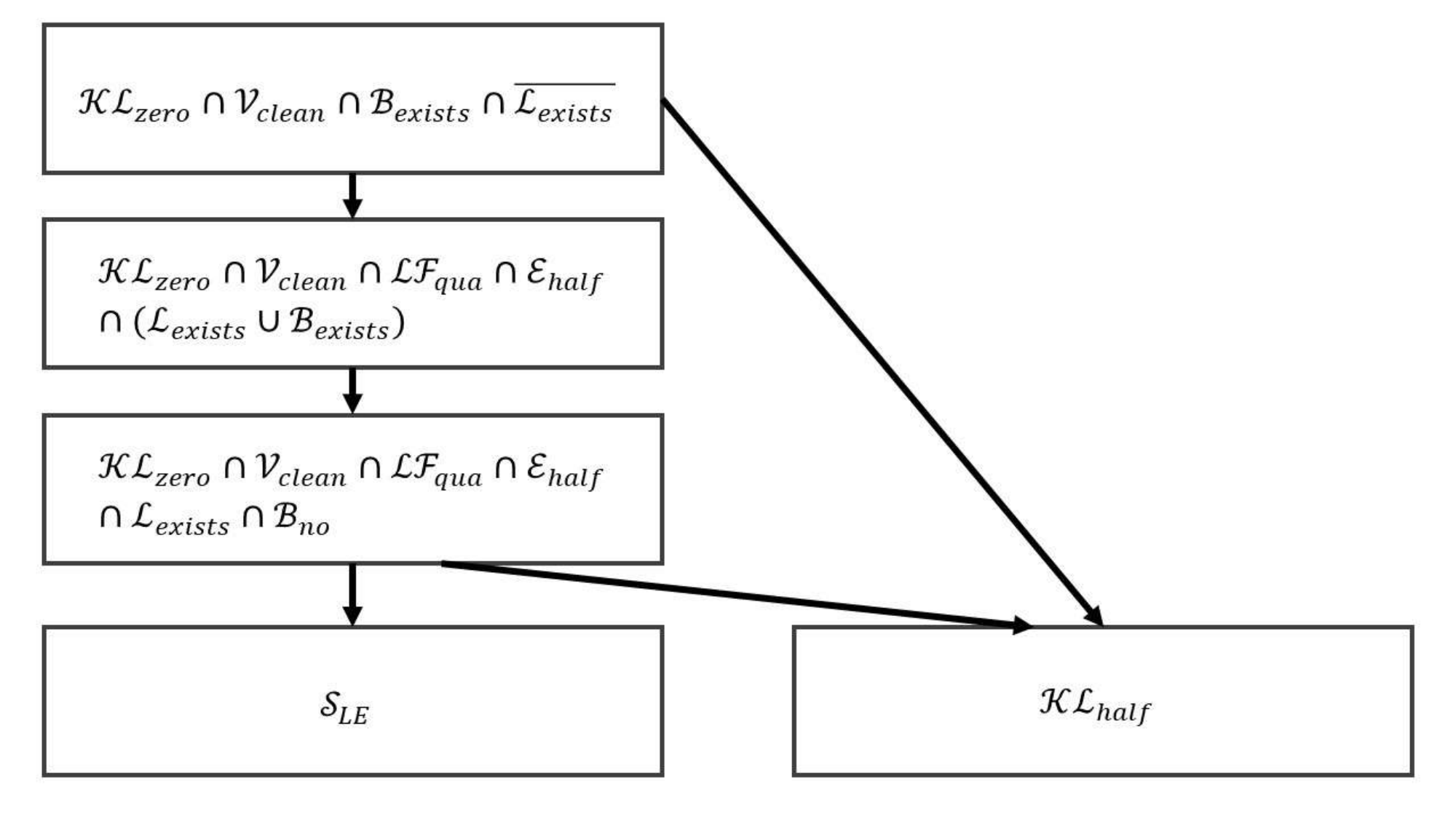}
    \caption{Structure of the proof if $C_0\in \KLzero\cap \Vclean\cap \Bexists \cap \overline{\Lexists}$ in Lemma~\ref{leaderelection:conv:KLzeroVclean}.}
    \label{figure:onlyB}
\end{figure}

\begin{figure}[!htb]
    \centering
    \includegraphics[scale=0.4]{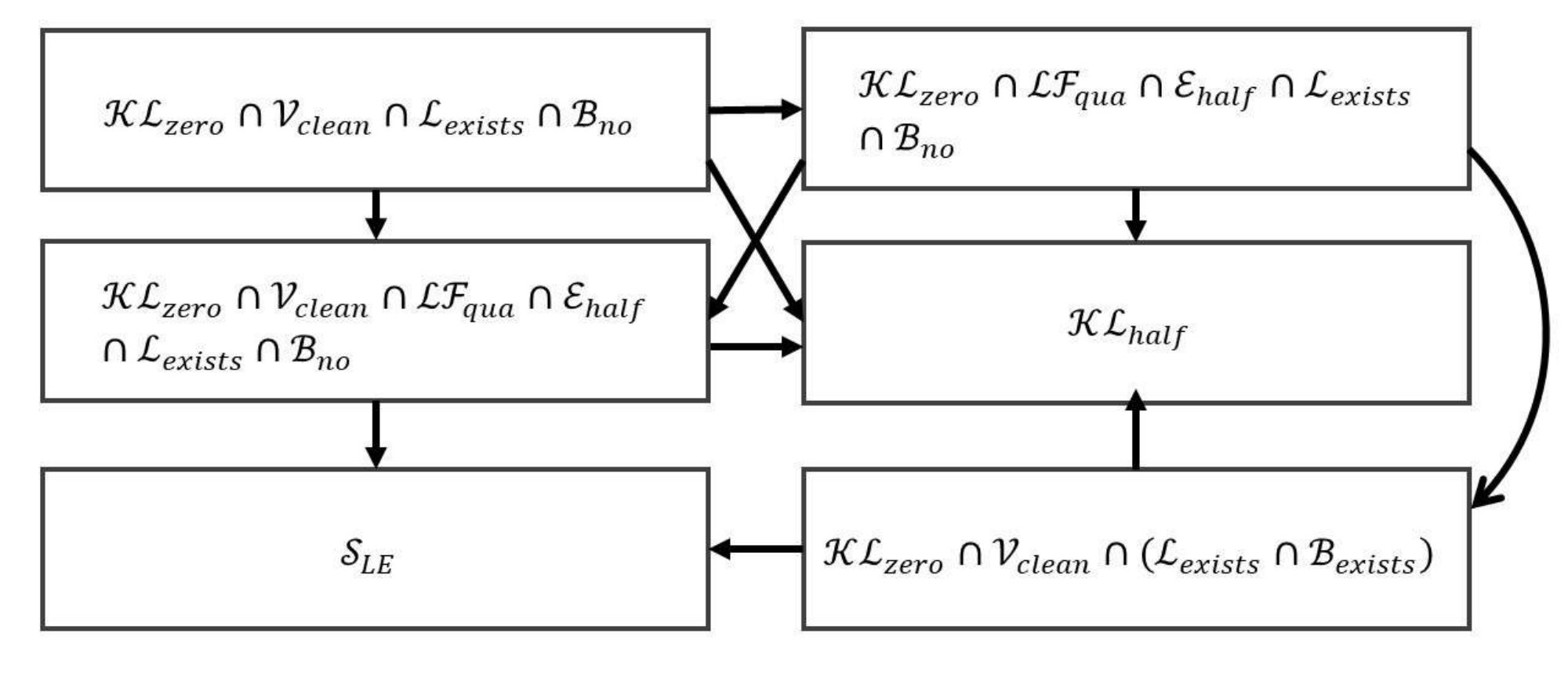}
    \caption{Structure of the proof if $C_0\in \KLzero\cap \Vclean\cap \Lexists \cap \Bno$ in Lemma~\ref{leaderelection:conv:KLzeroVclean}.}
    \label{figure:onlyL}
\end{figure}

\begin{lemma}\label{leaderelection:conv:KLzeroVclean}  
Let $C_0\in \KLzero \cap \Vclean$, and $\Xi_\PBC(C_0)=C_0,C_1,\dots$.  
The number of interactions until the configuration reaches $\SLE \cup \KLhalf$ is less than $4m\tau+18720m\tau\log{n}$ with probability  $1-o(1)$.   
\end{lemma}

\begin{sketch}
We will show this lemma by dividing it into four cases:  
i) the case where there is no leader and no candidate in $C_0$, ii) the case where there are only candidates in $C_0$, iii) the case where there are only leaders in $C_0$, and iv) the case where there are leaders and candidates in $C_0$.

Firstly, we show the case i) $C_0\in \KLzero \cap \Vclean \cap \Fall$ as shown in Figure~\ref{figure:allF}.
Specifically, we show that the number of interactions until the configuration reaches $\KLhalf \cup \SLE$ is less than $8m\tau+9360m\tau\log{n}$ with probability $1-o(1)$.

Secondly, we show the case ii) $C_0\in \KLzero \cap \Vclean \cap \Lexists \cap \Bexists$ as shown in Figure~\ref{figure:BandL}.
Specifically, we show that the number of interactions until the configuration reaches $\KLhalf \cup \SLE$ is less than $2m\tau+18720m\tau\log{n}$ with probability $1-o(1)$.

Thirdly, we show the case iii) $C_0\in \KLzero \cap \Vclean \cap \Bexists \cap \overline{\Lexists}$ as shown in Figure~\ref{figure:onlyB}.
Specifically, we show that the number of interactions until the configuration reaches $\KLhalf \cup \SLE$ is less than $20m\tau+14040m\tau\log{n}$ with probability $1-o(1)$.

Finally, we show the case iv) $C_0\in \KLzero \cap \Vclean \cap \Lexists \cap \Bno$ as shown in Figure~\ref{figure:onlyL}.
Specifically, we show that the number of interactions until the configuration reaches $\KLhalf \cup \SLE$ is less than $4m\tau+18720m\tau\log{n}$ with probability $1-o(1)$.
\end{sketch}

\begin{proof}  
We show this lemma by dividing it into four cases:  
the case where there is no leader and no candidate in $C_0$, the case where there are only candidates in $C_0$, the case where there are only leaders in $C_0$, and the case where there are leaders and candidates in $C_0$.

Firstly, we analyze when there is no leader and no candidate in $C_0$.  
Note that $C_0\in \KLzero \cap \Vclean \cap \Fall$.  
There are some followers whose $\timer_\LF$ becomes $0$ after $C_0$.  
At those times, if the follower's $\iid$ is not $1$, the phase moves to Global Reset, and the configuration belongs to $\KLhalf$.  
The number of interactions until this event  happens is less than $4680m\tau\log{n}$ with probability $1-o(1)$ from Lemma~\ref{leaderelection:conv:timeconv} (assigning $\lambda=2$).  
We consider the case where Global Reset did not happen.  
The number of interactions until for the $\iid$ generation finishes after the first follower becomes a candidate is less than $6m\tau$ interactions with probability $1-o(1)$ from the proof of Lemma~\ref{leaderelection:conv:FalltoSLE}.  
If the maximum $\iid$ of candidates is no less than the maximum $\iid$ of all agents, the configuration belongs to $\SLE$ while $8m\tau$ interactions after the first follower becomes a candidate with probability $1-o(1)$ from the proof of Lemma~\ref{leaderelection:conv:FalltoSLE}.  
Otherwise, all agents become followers; thus, the configuration belongs to $\KLzero \cap \Vclean \cap \Fall \cap \IDsame \cap \overline{\IDclear}$ while $8m\tau$ interactions after the first follower becomes a candidate with probability $1-o(1)$ from the proof of Lemma~\ref{leaderelection:conv:FalltoSLE}.  
The number of interactions until the configuration reaches $\KLhalf$ after the configuration belonged to $\KLzero \cap \Vclean \cap \Fall \cap \IDsame \cap \overline{\IDclear}$ is less than $4680m\tau\log{n}$ with probability $1-o(1)$ from Lemma~\ref{leaderelection:conv:timeconv} (assigning $\lambda=2$), since all agents' $\iid \ne 1$ holds.  
Therefore, if $C_0\in \Fall$, the number of interactions until the configuration reaches $\SLE \cup \KLhalf$ is less than $8m\tau+9360m\tau\log{n}$ with probability $1-o(1)$.

Secondly, we analyze when there are leaders and candidates.  
Note that $C_0\in \KLzero \cap \Vclean \cap \Lexists \cap \Bexists$.  
We consider after $2m\tau$ interactions from $C_0$, divided into two cases:  
i) the case where the search virus is not created, and ii) the case where the search virus is created.
In both cases, when kill virus is created, the configuration reaches $\KLhalf$.
i) If the search virus is not created, $C_{2m\tau}\in \KLzero \cap \Vclean \cap(\Lexists \cup \Bexists)\cap \LFqua \cap \Ehalf$ holds with probability $1-o(1)$ from Lemma~\ref{leaderelection:holding:spreads}.  
Eventually, candidates become leaders or disappear; thus, the configuration will become $\KLzero \cap \Vclean \cap \Lexists \cap \LFqua \cap \Ehalf \cap \Bno$.  
The number of interactions until the configuration reaches that is less than $4680m\tau\log{n}$ with probability $1-o(1)$ from Lemma~\ref{leaderelection:conv:timeconv} (assigning $\lambda=2$).
If there are multiple leaders, from Lemma~\ref{leaderelection:conv:kill}, the phase moves to Global Reset during $18m\tau+9360m\tau\log{n}$ with probability $1-o(1)$.  
If there is a leader and no candidate, the configuration belongs to $\SLE$.
ii) We consider if the search virus was created.  
If the phase did not move to Global Reset during $2m\tau$ interactions, $C_{2m\tau} \in \KLzero \cap (\Lexists \cup \Bexists) \cap \LFqua \cap \Ehalf$ holds with probability $1-o(1)$ from Lemma~\ref{leaderelection:holding:spreads}.  
The number of interactions until all candidates disappear is less than $4680m\tau\log{n}$ with probability $1-o(1)$ from Lemma~\ref{leaderelection:conv:timeconv} (assigning $\lambda=2$).  
After all candidates disappear, if there is no search virus, the configuration reaches $\KLhalf$ or $\SLE$ during $4680m\tau\log{n}$ interactions, with probability $1-o(1)$ from the above.
Otherwise, after all search viruses disappear, the configuration moves to Global Reset if there are multiple leaders.
Note that the configuration belongs to $\KLzero\cap \Vclean\cap \Ehalf \cap \Bno\cap \Ldupl \cap \LFqua$.
Thus, the number of interactions until this event happens is $9360m\tau\log{n}$ with probability $1-o(1)$ from the above, Lemma~\ref{leaderelection:holding:LFlambda}, and Lemma~\ref{leaderelection:holding:Elambda}.
If there is a leader, the configuration belongs to $\SLE$.
Therefore, if $C_0\in \Lexists \cap \Bexists$, the number of interactions until the configuration reaches $\KLhalf\cup \SLE$ is less than $2m\tau+18720m\tau\log{n}$ with probability $1-o(1)$.  
Note that $\timer_\LF$ did not become $0$, with probability $1-o(1)$ from Lemma~\ref{leaderelection:holding:LFlambda}.

Thirdly, we analyze when there are only candidates. 
Note that $C_0\in \KLzero \cap \Vclean \cap \Bexists\cap \overline{\Lexists}$.  
$C_{2m\tau}\in \KLzero \cap \Vclean \cap (\Bexists \cup \Lexists) \cap \LFqua \cap \Ehalf \cup \KLhalf$ holds with probability $1-o(1)$ from Lemma~\ref{leaderelection:holding:spreads}, since some followers' $\timer_\LF$ became $0$.  
Note that the phase moves to Global Reset when followers whose $\iid\ne 1$ become candidates.  
If $C_{2m\tau}\in \KLzero \cap \Vclean \cap (\Bexists \cup \Lexists) \cap \LFqua \cap \Ehalf$ holds, the number of interactions until all candidates disappears after $C_{2m\tau}$ is less than $4680m\tau\log{n}$ with probability $1-o(1)$ from Lemma~\ref{leaderelection:conv:timeconv} (assigning $\lambda=2$).  
Thus, the configuration belongs to $\KLzero \cap \Vclean \cap \Lexists \cap \LFqua \cap \Ehalf\cap \Bno$.  
If there exists a unique leader, the configuration belongs to $\SLE$; otherwise, after $18m\tau+9360m\tau\log{n}$ interactions, the configuration will belong to $\KLhalf$ with probability $1-o(1)$ from Lemma~\ref{leaderelection:conv:kill}.  
Therefore, if $C_0\in \KLzero \cap \Vclean \cap \Bexists\cap \overline{\Lexists}$, the number of interactions until the configuration reaches to $\KLhalf\cup \SLE$ is less than $20m\tau+14040m\tau\log{n}$ with probability $1-o(1)$.  
Note that $\timer_\LF$ did not become $0$, with probability $1-o(1)$ from~\ref{leaderelection:holding:LFlambda}.

Finally, we analyze when there are only leaders.  
Note that $C_0\in \KLzero \cap \Vclean \cap \Lexists \cap \Bno$.  
After $2m\tau$ interactions, i) if there are candidates and no search virus, $C_{2m\tau}$ belongs to $\KLzero \cap \Vclean \cap(\Lexists \cap \Bexists)\cap \LFqua \cap \Ehalf$ with probability $1-o(1)$ from Lemma~\ref{leaderelection:holding:spreads}.  
From the above, this configuration reaches  $\KLhalf$ during $18m\tau+14040m\tau\log{n}$ interactions with probability $1-o(1)$.  
ii) If there are candidates and search viruses, $C_{2m\tau}$ belongs to $\KLzero\cap \Vclean \cap(\Lexists \cap \Bexists)\cap \LFqua \cap \Ehalf$, and the configuration reaches to $\KLhalf$ during $2m\tau+18720m\tau\log{n}$ interactions with probability $1-o(1)$ from the above.  
iii) We consider if there are no candidates and no search viruses, $C_{2m\tau}$ belongs to $\KLzero \cap \Lexists \cap \LFqua \cap \Ehalf \cap \Bno$.  
If the number of leaders is $1$, the configuration belongs to $\SLE$; otherwise, the configuration will belong to $\KLhalf$ during $18m\tau+9360m\tau\log{n}$ steps with probability $1-o(1)$ from Lemma~\ref{leaderelection:conv:kill}.  
iv) We consider if there are no candidates but there are some search viruses; $C_{2m\tau}$ belongs to $\KLzero \cap \Lexists \cap \LFqua \cap \Ehalf \cap \Bno$.  
The number of interactions until all search viruses  disappear is less than $4860m\tau\log{n}$ with probability $1-o(1)$ from Lemma~\ref{leaderelection:conv:timeconv} (assigning $\lambda=2$).  
After that, the configuration belongs to $\KLzero \cap \Lexists \cap \LFqua \cap \Ehalf \cap \Bno$.  
From the above, the configuration will reach $\SLE \cup \KLhalf$ during $20m\tau+9360m\tau\log{n}$ interactions with probability $1-o(1)$.  
Therefore, if $C_0\in \KLzero \cap \Vclean \cap \Lexists \cap \Bno$, the number of interactions until the configuration reaches $\KLhalf\cup \SLE$ is less than $4m\tau+18720m\tau\log{n}$ with probability $1-o(1)$.  
Note that $\timer_\LF$ did not become $0$, with probability $1-o(1)$ from Lemma~\ref{leaderelection:holding:LFlambda}.

Overall, this lemma follows.  
\end{proof}

\LEconvconv*

\begin{proof}  
From Lemma~\ref{leaderelection:conv:Scol}, Lemma~\ref{leaderelection:conv:KLzero}, Lemma~\ref{leaderelection:conv:KLzeroVclean}, Lemma~\ref{leaderelection:conv:KLhalf}, and Lemma~\ref{leaderelection:conv:FalltoSLE},  
$\Pr(\exists i\in O(m\tau\log{n}):C_i\in \SLE)=1-o(1)=\Omega(1)$ holds.  
Thus, this lemma follows from Lemma~\ref{leaderelection:conv:katei}.  
\end{proof}

\ndteiri*

\begin{proof}  
If we use $\PLRU$ as a self-stabilizing two-hop coloring protocol, from Lemma~\ref{twohoprandom:conv}, $\max_{C\in \Call(\PBC)}\ECT_\PBC(C,\Scol)=O(mn)$ holds with high probability.  
From Lemma~\ref{leaderelection:conv:conv}, $\max_{C\in \Scol}\ECT_\PBC(C,\SLE)=O(m\tau\log{n})$ holds with high probability, and from Lemma~\ref{leaderelection:holding:hold}, $\min_{C\in \SLE}\EHT_\PBC(C,LE)=\Omega(\tau e^\tau)$ holds when $\tau \ge \max(2d, 2^{-1}\lceil\log{N}\rceil,15+3\log{n})$.  
Therefore, this theorem holds.  
\end{proof}

When we assign $\tau=2N+15$, the convergence time becomes $O(mN\log{n})$, and the holding time becomes $\Omega(Ne^{2N})$.

\dteiri*

\begin{proof}  
If we use $\PDLRU$ as a self-stabilizing two-hop coloring protocol, from Lemma~\ref{twohop:allconv}, $\max_{C\in \Call(\PBC)}\ECT_\PBC(C,\Scol)=O(m(n+\Delta\log{N}))$ holds with high probability.  
From Lemma~\ref{leaderelection:conv:conv}, $\max_{C\in \Scol}\ECT_\PBC(C,\SLE)=O(m\tau\log{n})$ holds with high probability, and from Lemma~\ref{leaderelection:holding:hold}, $\min_{C\in \SLE}\EHT_\PBC(C,LE)=\Omega(\tau e^\tau)$ holds when $\tau \ge \max(2d, 2^{-1}\lceil\log{N}\rceil,15+3\log{n})$.  
Therefore, this theorem holds.  
\end{proof}

When we assign $\tau=2N+15$, convergence time of randomized and deterministic protocols become $O(mN\log{n})$ and $O(mN\log{N})$ respectively, and holding time of both protocols becomes $\Omega(Ne^{2N})$.

\section{Probabilistic Tools}
In this paper, we use the following lemmas on inequalities for the proofs.

\begin{lemma}[Eq. 4.2 in \cite{Arisu}]\label{Arisu_lower}
For any binomial random variable $X$ and $0< \kappa \le 1$:
\[\Pr(X\ge (1+\kappa)E[X])\le e^{-\kappa^2E[X]/3}\]
\end{lemma}

\begin{lemma}[Eq. 4.5 in \cite{Arisu}]\label{Arisu_upper}
For any binomial random variable $X$ and $0< \kappa \le 1$:
\[\Pr(X\le (1-\kappa)E[X])\le e^{-\kappa^2E[X]/2}\]
\end{lemma}

\begin{lemma}[Theorem 2.1 in Janson~\cite{JANSON20181}]\label{Janson_lower}
For $k\in \mathbb{N}$ and $i=1,\dots,k$,
let $X_i=\text{Geom}(p_i)$ be an independent geometric random variable and $X=\sum_{i=1}^{k}X_i$ where $0<p_i\le 1$.
Define $p=\min_{i\in [1,k]}p_i$, and $\lambda \ge 1$:
\[\Pr(X\ge \lambda E[X])\le e^{-pE[X](\lambda-1-\log_e{\lambda})}\]
\end{lemma}

\end{document}